\newcommand{\RandChore}{\mathsf{RandChore}}
\newcommand{\RandMixed}{\mathsf{RandMixed}}
\newcommand{\OPT}{\mathsf{OPT}}
\newcommand{\cV}{\mathcal{V}}
\newcommand{\cI}{\mathcal{I}}
\newcommand{\cM}{\mathcal{M}}
\newcommand{\EW}{\mathsf{EW}}
\newcommand{\UW}{\mathsf{UW}}
\theoremstyle{plain}   
\newtheorem{theorem}{Theorem}[section]
\newtheorem{lemma}[theorem]{Lemma}
\newtheorem{proposition}[theorem]{Proposition}
\theoremstyle{definition} 
\newtheorem{definition}[theorem]{Definition}
\theoremstyle{remark} 
\renewenvironment{proof}[1][Proof]{\noindent\textbf{#1.} }{\ $\Box$\medskip}
\begin{document}
\title{Randomized Strategyproof Mechanisms\\ with Best of Both Worlds Fairness and Efficiency}
\author{Ankang Sun\thanks{Department of Computing, Hong Kong Polytechnic University, China;
ankang.sun@polyu.edu.hk}
\and Bo Chen\thanks{Corresponding author: Warwick Business School, University of Warwick, UK;
bo.chen@wbs.ac.uk} }
\date{}

\maketitle

\begin{abstract}
We study the problem of mechanism design for allocating a set of indivisible items among agents
with private preferences on items. We are interested in such a mechanism that is strategyproof
(where agents’ best strategy is to report their true preferences) and is expected to ensure
fairness and efficiency to a certain degree. We first present an impossibility result that a
deterministic mechanism does not exist that is strategyproof, fair and efficient for allocating
indivisible chores. We then utilize randomness to overcome the strong impossibility. For
allocating indivisible chores, we propose a randomized mechanism that is strategyproof in
expectation as well as ex-ante and ex-post (best of both worlds) fair and efficient. For
allocating mixed items, where an item can be a good (i.e., with a positive utility) for one agent
but a chore (i.e., a with negative utility) for another, we propose a randomized mechanism that
is strategyproof in expectation with best of both worlds fairness and efficiency when there are
two agents.

\medskip\noindent\textbf{Key words:} multi-agent systems; resource allocation; mechanism design;
strategyproof; randomization
\end{abstract}

\section{Introduction}
Resource allocation is a fundamental issue that has obtained significant attention in the
intersection of operations research, economics and computer science
\citep{10.5555/49354,DBLP:reference/choice/2016,10.7551/mitpress/2954.001.0001}. Among different
types of resources, indivisible items stand out as a prominently studied model. This model has
drawn substantial interest as it serves as a metaphor of real-world allocation problems, such as
outpatient diagnostic tests scheduling \citep{ZAERPOUR2017283}, school choice
\citep{10.1257/000282803322157061}, order dispatching in ride-hailing platform
\citep{doi:10.1287/inte.2020.1047} and conference paper assignment
\citep{Lian_Mattei_Noble_Walsh_2018}.

In most practical allocation problems, individual preferences are private. This necessitates
allocation protocols to operate based on announced preferences. When participants are aware that
declared preferences determine the outcome of the allocation, strategic behavior often comes into
play, leading to misreports of true preferences for more favorable outcomes. False preferences
significantly undermine outcomes if the allocation protocol is not designed to address such
manipulation. Some recent studies have addressed such strategic behavior in supply chain management
\citep{KARAENKE2020897}, production system \citep{NORDE2016441} and project management
\citep{CHEN2021666}.

To circumvent the challenge of strategic manipulation, an ideal allocation protocol should
incentivize truthful reporting, making it the optimal strategy for individuals. Allocation
protocols designed with this property are known as \emph{strategyproof} mechanisms. After the
seminal work of \citet{NISAN2001166}, a line of works focus on the model of indivisible items and
strategyproof mechanisms (without money payment\footnote{When money transfer is either infeasible
or illegal in practice.}) to achieve fair outcomes \citep{bezakovaAllocatingIndivisibleGoods2005,
DBLP:conf/wine/MarkakisP11, amanatidisTruthfulMechanismsMaximin2016a,
amanatidisTruthfulAllocationMechanisms2017}. Many argue that, in general, strategyproofness
prohibits mechanisms from achieving fair or even approximately fair outcomes. In particular, for
indivisible goods, \citet{amanatidisTruthfulAllocationMechanisms2017} consider the notion of
envy-free up to one item\footnote{No one envies another after hypothetically eliminating one good
from the bundle of the latter.} (EF1) and show that EF1 and strategyproofness are incompatible even
for two agents and five items when agents' valuations are additive. To escape from the strong
impossibility result, follow-up studies have explored restricted preference domains, such as the
binary margin where the marginal valuation of an item is either zero or one. These studies
demonstrate that strategyproofness, fairness and efficiency can be achieved simultaneously under
the setting of binary margin \citep{DBLP:conf/aaai/BabaioffEF21, DBLP:conf/aaai/BarmanV22,
halpernFairDivisionBinary2020}.

The aforementioned positive results are mainly established in the context of allocating goods,
items with positive utilities. However, in practical scenarios, the items to be allocated extend
beyond goods to chores, which impose costs on individuals. The chore allocation scenario includes,
but is not limited to, workload assignment in manufacturing industry \citep{PEREIRA2023387}, job
scheduling \citep{doi:10.1287/opre.2022.2321} and household chores division among family members
\citep{Igarashi_Yokoyama_2023}. Moreover, whether an item is a good or a chore is not only
subjective but also agent dependent. For example, a pesticide may have a positive utility for a
farmer who wants to protect his crops from pests, but a negative utility for a beekeeper who loses
his bees due to the toxic chemicals. Such a setting is referred to as \emph{mixed items}. For
chores or mixed items, whether strategyproofness, efficiency and fairness can be achieved
simultaneously is not well understood. In particular, an approach applied successfully in the
setting of goods does not convert to chores (or mixed items) straightforwardly. Halpern et
al.~\citep{halpernFairDivisionBinary2020} propose a fair, efficient and strategyproof mechanism for
indivisible goods and additive valuations with a binary margin. The proposed mechanism relies on
the allocation that maximizes the product of individuals' utilities, known as Nash welfare. The
allocation with the maximum Nash welfare allocation is EF1 and Pareto optimal for goods
\citep{caragiannisUnreasonableFairnessMaximum2019}, while no (known) equivalent rule with fairness
and efficiency guarantees for chores. This leads to our first research question:

\begin{quote}
For allocating chores or mixed items, whether strategyproofness, fairness and efficiency are
compatible? In general or in a restricted preference domain?
\end{quote}

Restricting preference domains is one approach to escape from strong impossibilities. Another
potential avenue involves relaxing the requirement of (deterministic) strategyproofness to
strategyproofness \emph{in expectation} through lotteries. Informally, a randomized mechanism is
considered strategyproof in expectation if every agent's expected utility is maximized when they
truthfully reveal their preferences. For allocating indivisible items, achieving performance
guarantees before realization (ex-ante) along with strategyproofness in expectation is not a hard
task. For instance, consider a mechanism that selects an agent uniformly at random and assigns all
items to that agent. This mechanism is strategyproof in expectation as reported preferences are
ignored. In the allocation of such a mechanism, no agent prefers the bundle of another before
realization. However, such a mechanism unlikely guarantees fairness or efficiency after realization
(ex-post). An ideal randomized mechanism should have both ex-ante and ex-post performance
guarantees, also referred to as the ``best of both worlds''
\citep{hyllandEfficientAllocationIndividuals1979, DBLP:conf/wine/BabaioffEF22,
freemanBestBothWorlds2020, akrami2023randomized}. This then leads to our second research question:

\begin{quote}
For allocating chores or mixed items, whether strategyproofness in expectation, best of both
worlds fairness and efficiency are compatible? In general or in a restricted preference domain?
\end{quote}

\subsection{Main results}

This paper presents results in the realm of fair and efficient allocation mechanisms, particularly
focusing on the allocation of (i) \emph{indivisible chores}, where all items are chores for agents,
and (ii) \emph{mixed items}, where an item can be a good for one agent but a chore for another. In
Section~\ref{sec::deterministic}, we study deterministic mechanisms and present two impossibility
results. We first examine the performance of the widely-studied mechanism, known as {sequential
picking}, revealing that this allocation rule does not guarantee Pareto optimality. Then we
establish an impossibility result for all deterministic mechanisms, which informally states that
deterministic strategyproofness, equitability-based fairness notions and Pareto optimality are
incompatible, even for additive valuations with a binary margin. Given this strong impossibility
result, we relax deterministic strategyproofness to strategyproofness in expectation and
concentrate on randomized mechanisms in the subsequent sections.

In Section~\ref{Sec::chores}, we consider randomized mechanisms and are concerned with allocation
of indivisible chores. We focus on the \emph{$k$-restricted additive} valuations, where a single
item has $k$ inherent values and the item's value for an agent is either from the $k$ inherent
values or $0$. For $1$-restricted additive valuations, we propose a randomized mechanism that is
group-strategyproof\,\footnote{No group of agents has an incentive to misreport.} in expectation
and has best of both worlds fairness and efficiency guarantees. We also show that fairness and
efficiency guarantees of the proposed mechanism cannot be achieved simultaneously by any mechanism
(without the requirement of strategyproofness) when valuations are $k$-restricted additive for all
$k \geq 2$. Moving on to Section~\ref{Sec::MixedItems}, we consider the setting of mixed items and
propose a randomized mechanism for two agents that is strategyproof in expectation. The proposed
mechanism also guarantees best of both worlds fairness and efficiency.

\subsection{Other related works}

The study of strategyproof mechanisms with guaranteed properties dates back to
\citet{hyllandEfficientAllocationIndividuals1979}. Then investigating the existence of
strategyproof mechanisms with desired properties was a central concern. \citet{gale1987college}
raises the question of whether we can find a ``nice'' mechanism that satisfies strategyproofness,
efficiency, and other distributional properties, simultaneously. A follow-up study
\citet{zhouConjectureGaleOnesided1990} partially answers Gale's conjecture and proves that in the
problem of assigning $n$ items to $n$ agents, no mechanism can be strategyproof, Pareto optimal and
{symmetric}\footnote{In a symmetric mechanism, agents with the same reported valuations should
receive identical (expected) utility.} when $ n \geq 3$. Then
\citet{papaiStrategyproofSingleUnit2001} studies the problem of allocating a single indivisible
item to several agents without money transfer and examines possible extra properties that a
strategyproof mechanism can have. In another study, \citet{papaiStrategyproofNonbossyMultiple2001}
considers the scenario where each agent can receive more than one item and presents a
characterization of subclass of strategyproof mechanisms.
\citet{bezakovaAllocatingIndivisibleGoods2005} consider the problem of allocating $m$ items to two
agents and show that no strategyproof mechanism can maximize the value of the worse agent.
\citet{DBLP:conf/wine/MarkakisP11} present a lower bound on the values of agents in the worst-case
scenario and also show that no deterministic strategyproof mechanism can achieve a
$(2/3)$-approximation of the worst-case bound.

As indicated in the aforementioned studies, pursuing an optimal strategyproof mechanism is
sometimes impossible. \citet{procacciaApproximateMechanismDesign2013} start to consider
strategyproof mechanisms that achieve an approximation of the optimal objective when the optimal
solution is computationally intractable or when no such mechanism exists.
\citet{amanatidisTruthfulMechanismsMaximin2016a} focus on indivisible goods and present a
deterministic strategyproof mechanism with $O(m)$-approximation on maximin-share fairness. They
complement the result by showing that no deterministic strategyproof mechanism can achieve better
than $(1/2)$-approximation. For allocating indivisible chores, \citet{wu-add1} assume ordinal
valuations for agents and present deterministic and randomized mechanisms that return allocations
guaranteeing $O(\log(m/n))$-approximation and $O(\log(\sqrt{n}))$-approximation of maximin share,
respectively.

There are also some studies concerning strategyproofness along with other fairness criteria such as
EF1. \citet{amanatidisTruthfulAllocationMechanisms2017} demonstrate that no deterministic
strategyproof mechanism guarantees EF1, even there are only two agents with additive valuations. To
achieve positive results, subsequent studies focus on domains of restricted agent preferences.
\citet{halpernFairDivisionBinary2020} focus on additive valuations with a binary margin and
indicate that the rule of maximizing Nash welfare with a lexicographic tie-breaking rule is group
strategyproof, EF1 and Pareto optimal. \citet{DBLP:conf/aaai/BabaioffEF21} consider submodular and
monotone valuation functions with a binary margin and show that the Lorenz dominating\footnote{For
two ascending-ordered vectors $\mathbf{a}$ and $\mathbf{b}$, the Lorenz domination partial order is
defined as $\mathbf{a} \succ_{Lorenz} \mathbf{b}$ if for every $k$, the sum of the first $k$
entries of $\mathbf{a}$ is at least as large as that of $\mathbf{b}$. {A Lorenz dominating
allocation is the allocation whose valuation vector Lorenz dominates the vector of every other
allocation.}} rule is strategyproof, EF1 and Pareto optimal. A follow-up work
\citep{DBLP:conf/aaai/BarmanV22} show that the proposed mechanism in
\citet{DBLP:conf/aaai/BabaioffEF21} is indeed group strategyproof.

\subsection{Organization of the paper}

After introduction of basic definitions in Section~\ref{sec:preliminaires}, we establish our
impossibility results in Section~\ref{sec::deterministic} for deterministic mechanisms. We then
follow up with randomization and present our randomized mechanisms, respectively for allocating
chores and mixed items in Sections~\ref{Sec::chores} and \ref{Sec::MixedItems}, together with their
nice properties. Proofs of lemmas, propositions and theorems are provided either in the main body
of the paper or are relegated to the appendix. We make some concluding remarks in
Section~\ref{sec:conclusions}.

\section{Preliminaries}
\label{sec:preliminaires}

Denote by $[k] = \{1, \ldots, k \}$ for any $ k \in \mathbb{N}^+$. A set $E = \{ e_1,\ldots, e_m
\}$ of $m$ indivisible items are to be allocated to a set $N = \{1, \ldots, n \}$ of $n$ agents.
Each agent $ i $ is associated with a valuation function $ v_i : 2^E \rightarrow \mathbb{R}$. An
item $e$ is a good (resp., a chore) for agent $ i $ if $v_i(e) \geq 0 $ (resp., $v_i(e) \leq 0$).
Throughout this paper, we are particularly interested in two settings: (i) chores: all items are
chores for all agents, and (ii) mixed items: an item can be a good for agent $i$ but a chore for
some agent $ j \neq i $. Agents are rational and can interact strategically towards making a
collective decision. Let $\mathbb{A}$ be the set of \emph{outcomes} or \emph{allocations}. An
allocation $\mathbf{A} = (A_1,\ldots, A_n) \in \mathbb{A}$ is an $n$-partition of $E$, that is, $A_
i \cap A_  j = \emptyset$ for any $i, j \in [n]$ with $ i \neq j $ and $\bigcup_{ i \in [n]} A_i =
E$. Given an allocation $\mathbf{A} = (A_1,\ldots, A_n)$, for any $i \in [n]$, $A_i$ represents the
set of items or the \emph{bundle} allocated to agent $i$. Note that in any allocation, no item is
left unallocated.

Throughout the paper, for any $i \in [n]$, function $ v_i $ represents the true valuation function
of agent $i$, and following the convention of mechanism design literature, $v_i$ is also called the
\emph{type} of agent $i$. Each agent $i$ privately observes his preferences over outcomes in
$\mathbb{A}$, which is defined by that agent $i$ knows his valuation function $ v _i $ that is not
available to others. Denote by $V_i$ the set of all possible types of agent $i$ and by $\cV =
V_1\times \cdots \times V_n$ the set of type profiles. A type profile is denoted by $\mathbf{v} =
(v_1,\ldots, v_n)$. Additionally, each agent $i$ is associated with a utility function $u_i :
\mathbb{A} \times V _i \rightarrow \mathbb{R}$. Given an outcome $\mathbf{A}$ and type $v_i \in V_i
$, utility $ u_i (\mathbf{A}, v_i)$ denotes the value or payoffs of agent $i$ under outcome
$\mathbf{A}$ when his type is $ v_i $. In this paper, we focus on the setting where agent $i$ does
not care about how bundles are allocated to others, i.e., \emph{no externalities}. Hence, the
utility of agent $ i $ with type $v_i$ under allocation $\mathbf{A} = (A_1,\ldots, A_n)$ can also
be expressed as $u_i ( A_  i , v_i )$, equivalent to $ v_i ( A_ i )$.

The \emph{social choice function} or \emph{mechanism} is a mapping $\cM : V_1 \times \cdots \times
V _n \rightarrow \mathbb{A}$ that assigns every possible type profile an outcome from set
$\mathbb{A}$. We focus on \emph{direct} mechanisms that elicit type information from agents by
asking them to reveal their true valuations. After receiving reported profile $\hat{\mathbf{v}} =
(\hat{v}_1, \ldots, \hat{v}_n )$, mechanism $\cM$ outputs an allocation based on
$\hat{\mathbf{v}}$. Given the combinatorial explosion, it is unrealistic for agents to report their
valuations on every subset $T \subseteq E$. Consequently, we assume that valuation function $v_i$
is \emph{additive} for all $i \in [n]$, that is, for any $T\subseteq E$, $ v_i ( T ) = \sum_{e \in
T } v_i ( \left\{  e \right\})$. For simple notations, we use $v_i (e)$, instead of $ v_i ( \{ e
\})$, to denote agent $i$'s valuation of item $e$. The set $\mathbb{A}$ of outcomes, the set $N$ of
agents, the set $E$ of items and the type sets $ \{ V_i \}_{ i \in [n]} $ are common knowledge,
while type $v_i$ is private to agent $i$ only.

\subsection{Fractional and randomized allocations}

In this paper, we also write allocations in the form of matrices. With a slight abuse of notation,
an allocation or an \emph{allocation matrix} $ \mathbf{A} = ( a _{j,i})_{ j \in[m], i\in [n]}$ is
such that a fraction $a_{j , i }$ of item $e_j$ allocated to agent $i$ and $\sum_{ i \in [n]} a_{ j
, i } = 1 $ for all $ j \in [m]$. A \emph{deterministic} allocation requires $a_{ j , i } \in \{ 0
,1 \}$ for all $ j , i $, while in a \emph{fractional} allocation $\mathbf{A}$, $a_{ j , i }$ can
be any real number in interval $[0,1]$. A randomized allocation $\widetilde{\mathbf{A}}$ is a
probability distribution over deterministic allocations, and moreover, it naturally implements a
fractional allocation. More specifically, let the support of $\widetilde{\mathbf{A}}$ be
$\mathbf{A}^1, \ldots, \mathbf{A}^k \in \{0,1\}^n$ and the probability of being $\mathbf{A} ^ j $
is $ p _j $ for all $ j \in [k]$. Then we say that $\widetilde{\mathbf{A}}$ \emph{implements}
fractional allocation $ \sum_{ j \in [k]} p_j \mathbf{A} ^ j $. Note that for a given fractional
allocation, there can be several randomized allocations that implement it.

Given an arbitrary randomized allocation $\widetilde{\mathbf{A}}$, let us suppose it implements
fractional allocation $\mathbf{A} ^ {\prime} = ( a ^ {\prime}_{j , i })_{ j \in [m], i \in [n]}$.
The term $a ^{\prime}_{ j , i }$ can be interpreted as the probability of assigning item $e_j$ to
agent $ i $. For any $ i \in [n]$, the expected value of agent $i$ under $\widetilde{\mathbf{A}}$
is $\sum_{ j \in [m]} a_{ j , i} ^ {\prime} v_ i ( e_j )$, which is equivalent to the value of
agent $ i $ under fractional allocation $\mathbf{A} ^ {\prime}$.

\subsection{Strategyproofness}

Let us formally define the notions of (group) strategyproofness and strategyproofness in
expectation. We first introduce a common notation used in the mechanism design literature. Given
the reported profile $\hat{\mathbf{v}}$ and a set of agents $S \subseteq N$, denote by $\hat{v}_S$
the set of reported valuations of agents $S$ and by $\hat{v}_{ -S }$ the set of reported valuations
of agents $N\setminus S$, i.e., $ \hat{v}_{-S} = \hat{v}_{ N \setminus S}$. Similar notations
extend to agent type and the set of agent types.

\begin{definition} [SP]
A deterministic mechanism $\mathcal{M}$ is strategyproof if $ \forall i\in [n]$, $ \forall v_i \in
V_i $, $ \forall \hat{v}_i \in V_i $, $\forall \hat{v}_{ - i } \in V_{ - i }$, the following holds:
$$
u_i (\mathcal{M}({v}_i , \hat{v}_{ - i }), v_i ) \geq 	u_i ( \mathcal{M}(\hat{v}_i ,
\hat{v}_{ - i }), v_i ).
$$
\end{definition}

In a strategyproof (SP) mechanism, it is always an optimal strategy for an agent $i$ to truthfully
report his type, regardless of what is reported by other agents. In other words, no agent can gain
additional utility by misreporting compared to the received utility when reporting truthfully. A
mechanism is \emph{randomized} if it returns randomized allocations. Accordingly, we use the notion
of strategyproof in expectation (SPIE) for a randomized mechanism under which every agent's
expected utility is maximized when reporting truthfully. Formally, we have the following
definition.

\begin{definition}[SPIE]\label{Definition::SPIE}
A randomized mechanism $\widetilde{\mathcal{M}}$ is strategyproof in expectation if $ \forall i\in
[n]$, $ \forall v_i \in V_i $, $ \forall \hat{v}_i \in V_i $, $\forall \hat{v}_{ - i } \in V_{ - i
}$, the following holds,
$$
\mathbb{E}_{\mathbf{A}\thicksim \widetilde{\cM}(v_i, \hat{v}_{-i})}[u_i (\mathbf{A}, v _i )] \geq
\mathbb{E}_{\mathbf{A}\thicksim \widetilde{\cM}(\hat{v}_i, \hat{v}_{-i }) }[u_i (\mathbf{A}, v_i )].
$$
\end{definition}

SP and SPIE mechanisms can incentivize truthful reporting for individual agents but may not prevent
strategic manipulation from a group of agents. To address strategic behavior at the group level
there is a need for \emph{group-strategyproof} (GSP) and \emph{group-strategyproof in expectation}
(GSPIE) mechanisms, in which no group of agents can collude to misreport their valuations in a way
that makes at least one member of the group better off without making any of the remaining members
worse off.
\begin{definition}[GSP]
A deterministic mechanism $\mathcal{M}$ is group-strategyproof if there does not exist a group of
agents $S \subseteq N$ and a reported profile $(\hat{v}_S, \hat{v}_{ - S })$ such that for any $ i
\in S $,
$$ u_i (\mathcal{M}(\hat{v}_S, \hat{v}_{-S }),v_i)\geq u_i(\mathcal{M}({v}_S,\hat{v}_{-S}),v_i),$$
and at least one strict inequality holds.
\end{definition}

\begin{definition}[GSPIE]\label{Definition::GSPIE}
A randomized mechanism $\widetilde{\mathcal{M}}$ is group-strategyproof in expectation if there
does not exist a group of agents $S \subseteq N$ and a reported profile $(\hat{v}_S, \hat{v}_{ -
S})$ such that for any $ i \in S$,
$$
\mathbb{E}_{\mathbf{A}\thicksim \widetilde{\cM}(\hat{v}_S, \hat{v}_{ - S }) }[u_i (\mathbf{A}, v_i)]
\geq \mathbb{E}_{\mathbf{A}\thicksim \widetilde{\cM}({v}_S, \hat{v}_{ - S }) }[u_i (\mathbf{A}, v_i )],
$$
and at least one strict inequality holds.
\end{definition}

\subsection{Efficiency, fairness and other distributional properties}

We conclude this section by presenting efficiency measurement and fairness criteria that we are
concerned with in this study. We begin with {Pareto optimal} allocations and social welfare
functions.

\begin{definition}[PO]\label{def::PO}
An allocation $\mathbf{B} = (B _1,\ldots, B_n)$ Pareto dominates another allocation $\mathbf{A} =
(A_1,\ldots, A_n)$ if for any $ i \in [n]$, $ v_i ( B_i ) \geq v_i ( A_  i)$ and at least one
inequality is strict. An allocation $\mathbf{A}$ is Pareto optimal if no allocation Pareto
dominates it.
\end{definition}

\begin{definition}
Given an allocation $\mathbf{A}= ( A_1,\ldots, A_n)$, its utilitarian welfare is $\UW(\mathbf{A}) =
\sum _{ i \in [n]} v_i ( A_ i )$ and the egalitarian welfare is $\EW(\mathbf{A}) = \min_{ i \in
[n]} v_i ( A_  i)$. An allocation $\mathbf{A}$ is a utilitarian welfare maximizer
\textnormal{(UWM)} and an egalitarian welfare maximizer \textnormal{(EWM)} if $\mathbf{A}$ achieves
the maximum utilitarian welfare and egalitarian welfare, respectively.
\end{definition}

\begin{definition}[EF and EF1]
An allocation $\mathbf{A} = ( A_1,\ldots, A_ n)$ is envy-free \textnormal{(EF)} if for any $i,j\in
[n]$, $ v_i ( A_  i ) \geq v_i ( A_  j )$. An allocation $\mathbf{A}$ is envy-free up to one item
\textnormal{(EF1)} if for any $ i ,j \in [n]$, there exists $e \in A_i\cup A_j$ such that $ v_i (
A_ i \setminus \{ e \}) \geq v_i ( A_j \setminus \{ e \})$.
\end{definition}

Informally, for mixed items, an allocation is EF1 if for every agent $i$, he does not envy agent
$j$ after either  hypothetically removing a chore (from agent $i$'s view) in his own bundle or
removing a good in agent $j$'s bundle.

\begin{definition}[EQ and EQ1]
An allocation $\mathbf{A} = (A_1,\ldots, A_n)$ is equitable \textnormal{(EQ)} if for any $i,j \in
[n]$, $ v_i ( A_  i) = v_j ( A_  j )$. The allocation $\mathbf{A}$ is equitable up to one item
\textnormal{(EQ1)} if for any $ i , j \in [n]$, there exists $e \in A_i \cup A_j$ such that $ v_i (
A_  i \setminus \{ e \})  \geq v_j ( A_  j\setminus \{ e \})$.
\end{definition}

The notion of EQ1 requires that the valuation of agent $i$ must be at least that of agent $j$ when
either a chore (from agent $i$'s view) is removed from agent $i$'s bundle or a good is removed from
agent $j$'s bundle.

\begin{definition}[PROP and PROP1]
An allocation $\mathbf{A} = ( A_  1, \ldots, A_n)$ is proportional \textnormal{(PROP)} if for any $
i \in [n]$, $ v_i ( A_ i) \geq \frac{1}{n} v_i ( E )$. An allocation $\mathbf{A}$ is proportional
up to one item \textnormal{(PROP1)} if for any $ i \in [n]$, $ v _i  ( A_i  ) \geq \frac{1}{n} v_i
( E )$ or $ v_i ( A_ i \cup \{  e \} ) \geq \frac{1}{n} v_i ( E )$ for some $ e \in E \setminus
A_i$ or $ v_i ( A_i \setminus \{ e  \}) \geq \frac{1}{n}v_i ( E )$ for some $e \in A_i$.
\end{definition}

In a PROP allocation, every agent $i$ is guaranteed a value of $\frac{1}{n} v_i (E)$, also known as
his \emph{fair share}. In a PROP1 allocation, agent $ i $ is guaranteed his fair share after either
hypothetically receiving an extra good, not assigned to his yet, or eliminating a chore from his
bundle. We remark that in the setting of mixed items, when agents are associated with additive
valuations, EF allocations are PROP, and EF1 allocations are also PROP1
\citep{DBLP:journals/aamas/AzizCIW22}.

\section{Impossibility results of deterministic mechanisms}
\label{sec::deterministic}

In this section, we focus on deterministic mechanisms. Consider a class of widely-studied
mechanisms, called \emph{sequential picking} \citep{kohlerClassSequentialGames1971}, where agents
are ordered in advance and each agent $i$ picks a number $t_i $ of items in his turn. The advantage
of sequential picking is that once $\{ t_i \}_{ i \in [n]}$ are predetermined or unrelated to
reported profiles and $\sum_{i\in[n]} t_i = m $, the corresponding mechanism is strategyproof and
outputs a complete allocation. The sequential picking rule has been employed to design truthful
mechanisms with performance guarantees regarding share-based fairness notions
\citep{amanatidisTruthfulMechanismsMaximin2016a, wu-add1}. However, we find that for indivisible
chores, no sequential picking mechanism can always return Pareto optimal outcomes, even when the
margin of chores is binary, i.e., $ v_i ( e_j ) \in  \{ 0, -1 \}$ for all $i$ and $j$.

\begin{theorem}\label{proposition::SEQ-pick}
For indivisible chores, no sequential picking mechanism can always return Pareto optimal
allocations, even when agents' valuations have a binary margin.
\end{theorem}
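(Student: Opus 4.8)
The plan is to fix an arbitrary sequential picking mechanism and, tailored to it, construct a binary-margin chores instance whose picking outcome is Pareto-dominated. The engine I would use is a simple necessary condition for Pareto optimality with binary chores: if some item $e$ is assigned to an agent $i$ with $v_i(e)=-1$ while some other agent $i'$ has $v_{i'}(e)=0$, then reassigning $e$ from $i$ to $i'$ strictly improves $i$ and leaves $i'$ (and everyone else) unchanged, so the allocation is not PO. Hence it suffices to force the mechanism to hand some agent an item it values $-1$ that another agent values $0$.

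I would first observe that a single fixed instance cannot defeat all picking mechanisms, because a balanced, round-robin mechanism with a suitable order is PO on symmetric instances; the counterexample must therefore depend on the mechanism's predetermined quotas $t_1,\dots,t_n$ (with $\sum_i t_i=m$) and its order. So I relabel agents according to the picking order and let $a$ be the first agent with $t_a\ge 1$ and $b$ the last agent with $t_b\ge 1$.

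For the main case $a\neq b$, I would give each intermediate active agent $c\notin\{a,b\}$ a private block $P_c$ of $t_c$ items that only $c$ values at $0$ and everyone else values $-1$; introduce $t_a+1$ items that $a$ values $0$ and all others (in particular $b$) value $-1$; and fill the remaining $t_b-1$ items with items that $a$ values $-1$ (the values for $b$ and others lie in $\{0,-1\}$ but are immaterial). These counts sum to $m$. Tracing the greedy picks: agent $a$, having one more tolerable item than its quota, takes $t_a$ of the zero-valued-for-$a$ items and leaves exactly one of them behind; each intermediate $c$ takes precisely its private block $P_c$, since these are its only zero-items and each earlier picker fills its quota from its own zero-items and so never diverts them; thus when $b$ picks last, the only items left are the leftover item (valued $0$ by $a$ but $-1$ by $b$) together with the fillers, and $b$ is forced to take them all. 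The leftover item is then held by $b$ at value $-1$ while $a$ values it $0$, so by the necessary condition the output is not PO. The degenerate case where a single agent is active ($t_a=m$) is even easier: that agent must take every item, so planting one item it values $-1$ but another agent values $0$ suffices.

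The hard part will be controlling the greedy behaviour of the pickers so that the leftover item is guaranteed to fall to $b$: I must ensure no intermediate agent diverts it and that $a$ leaves exactly one tolerable item. The ``one extra tolerable item for $a$'' gadget together with the per-agent private blocks are what pin this down, reducing any $n$-agent picking mechanism to the clean two-agent obstruction between the first and last active pickers.
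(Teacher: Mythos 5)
Your proof is correct, but it takes a genuinely different and considerably heavier route than the paper's. The paper also fixes the quotas $t_1,\dots,t_n$ first, but then needs only a case split on the first picker's quota and two instances with \emph{uniform} valuations: if $t_1<m$, take the instance where agent 1 values every chore $0$ and all others value every chore $-1$ (at least one of the $m-t_1$ remaining items must land on an agent who values it $-1$, while giving everything to agent 1 gives everyone value $0$); if $t_1=m$, swap the roles (agent 1 values everything $-1$, the others $0$). Because each agent is indifferent among all items in those instances, the realized utility profile depends only on the quotas, so the paper's argument is agnostic to how agents choose within their turns and to tie-breaking. Your construction, by contrast, must \emph{trace} the picking dynamics: it relies on the (standard, but load-bearing) assumption that each agent greedily takes its most preferred available items, and the private blocks $P_c$ and the $(t_a+1)$-item gadget exist precisely to control that trace. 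This creates one loose end you should tidy: you declare the fillers' values for agents other than $a$ ``immaterial,'' yet your trace asserts that $P_c$ are agent $c$'s \emph{only} zero-valued items; to make the trace literally correct, set the fillers to $-1$ for every agent (the conclusion survives even without this, since no intermediate agent ever takes the leftover special item, but that needs an extra sentence of argument). What your approach buys is a sharper, localized failure — the output is defeated by a single-item transfer from the last active picker to the first — whereas the paper's instances are extreme (Pareto optimality forces one agent to take everything); what the paper's approach buys is brevity and robustness to arbitrary picking behavior.
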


We next present a stronger impossibility result regarding all deterministic mechanisms in the
context of allocating indivisible chores.

\begin{theorem}\label{thm:: impossbility-deterministic-EQ1}
For indivisible chores, no deterministic mechanism can be \textnormal{SP}, \textnormal{PO} and
\textnormal{EQ1}, simultaneously, even when there are four chores and two agents with binary
additive valuations.
\end{theorem}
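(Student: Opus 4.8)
The plan is to first reduce Pareto optimality to a simple structural constraint in the binary-chores setting, and then to build a two-profile manipulation that no \textnormal{SP} mechanism can survive. The key preliminary observation I would record is that, for binary valuations, \textnormal{PO} pins down the fate of every \emph{asymmetric} item: if an item $e$ satisfies $v_i(e)=-1$ and $v_j(e)=0$, then in any \textnormal{PO} allocation $e$ must be assigned to the agent $j$ who is indifferent to it. Indeed, moving such an $e$ from $i$ to $j$ leaves $j$'s value unchanged and strictly raises $i$'s value, so any allocation placing $e$ with $i$ is Pareto dominated. Items that are chores for both agents contribute $-1$ to whoever holds them, and items neutral for both are utility-irrelevant. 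With this in hand, once we also impose \textnormal{EQ1}, the attainable utility vectors at any reported profile become almost completely determined, which is what makes a finite contradiction possible.

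Next I would single out the symmetric profile $P^\ast$ in which both agents report all four items as chores, i.e.\ $\hat v_1=\hat v_2=(-1,-1,-1,-1)$. Here every allocation is \textnormal{PO} (the utilitarian welfare is the constant $-4$, so no allocation can dominate another), so \textnormal{EQ1} is the only binding requirement. A short computation with the \textnormal{EQ1} inequalities $v_i(A_i\setminus\{e\})\ge v_j(A_j\setminus\{e\})$ in both directions shows that the only admissible split is the balanced one: agent $1$ must receive exactly two of the four items. Let $S$ denote the concrete pair that the given mechanism $\mathcal{M}$ outputs to agent $1$ at $P^\ast$; the point is that \textnormal{PO} and \textnormal{EQ1} fix the \emph{size} of agent $1$'s bundle but not \emph{which} two items, so $S$ is whatever $\mathcal{M}$ happens to choose, and I am free to tailor an adversarial true type to it.

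Then I would define agent $1$'s true type $v_1$ by declaring the two items of $S$ to be neutral ($0$) and the other two items to be chores ($-1$), while agent $2$ keeps the true and reported type $(-1,-1,-1,-1)$. At this truthful profile the structural observation forces the two items of $S$ (neutral for $1$, chores for $2$) onto agent $1$, and \textnormal{EQ1} forces the two remaining common chores to be split one-and-one; hence agent $1$'s true utility is forced to be exactly $-1$, leaving $\mathcal{M}$ no alternative. Finally I compare this with the deviation in which agent $1$ reports all four items as chores: the reported profile becomes $P^\ast$, so $\mathcal{M}$ returns $S$ to agent $1$, and since $S$ consists precisely of agent $1$'s true neutral items its true utility is $0>-1$. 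This contradicts \textnormal{SP} and completes the argument.

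The main obstacle---and the step I would be most careful about---is the two forcing claims: I must argue that \textnormal{PO} together with \textnormal{EQ1} leaves the mechanism \emph{no freedom at all} in the two relevant utility values (the balanced $(2,2)$ split at $P^\ast$ and the burden of exactly $1$ at the tailored profile), since the contradiction evaporates if the mechanism could legitimately return an unbalanced bundle. This is handled by checking both \textnormal{EQ1} directions $(1,2)$ and $(2,1)$ and ruling out the unbalanced splits explicitly for each of the two profiles. A secondary point to state cleanly is the quantifier order: $S$ is read off from the fixed candidate mechanism's behaviour at $P^\ast$ \emph{first}, and only then is the adversarial true type constructed, so the construction is legitimate for an arbitrary mechanism claimed to be \textnormal{SP}, \textnormal{PO} and \textnormal{EQ1}.
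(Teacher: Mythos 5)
Your proposal is correct and is essentially the paper's own argument: both use the all-chores profile (where EQ1 forces a $(2,2)$ split), then a second profile in which agent 1's true type is neutral exactly on the bundle the mechanism awarded at the symmetric profile, so that truthful reporting yields utility $-1$ while misreporting all items as chores yields $0$, contradicting \textnormal{SP}. Your device of reading off $S$ from the mechanism and tailoring the adversarial type to it is just an explicit rendering of the paper's ``without loss of generality'' labeling, and your verification of the two EQ1 forcing claims matches the steps the paper asserts.
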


\begin{proof}
For a contradiction, let $\mathcal{M}$ be a
deterministic SP, PO and EQ1 mechanism.
Consider an instance $ \cI $ with two agents and a set $E$ of four chores $e_1, e_2, e_3, e_4$.
Given the reported profile $\mathbf{\hat{v}} = ( \hat{v}_1, \hat{v}_2 )$ with $ \hat{v}_i ( e_j ) = -1$ for all $i, j$,
mechanism $\mathcal{M}$ has to assign each agent two chores
so that the outcome can be EQ1 when agents report truthfully.
Without loss of generality, we assume $\mathcal{M}(\mathbf{\hat{v}}) = \mathbf{A}$ with
$ A_1 = \{ e_{1}, e_{ 2 }\}$
and $ A_2 = \{ e _{ 3}, e_{  4}\}$.
We now consider another reported profile $\mathbf{\hat{v}} ^{ \prime} = (\hat{v}^{\prime}_1, \hat{v} ^ {\prime}_2 )$ where $ \hat{v}^{\prime}_2 (e) = \hat{v}_2 (e)$ for all $e$ and
$ \hat{v} ^ {\prime}_1 ( e_{ 1 }) = \hat{v} ^ {\prime}_1 ( e_{ 2 }) = 0$ and $ \hat{v} ^ {\prime}_1 ( e_j ) = -1$ for $ j = 3, 4$.
Suppose $\mathcal{M}(\mathbf{\hat{v}} ^ {\prime}) = \mathbf{A} ^ {\prime}$.
As $\cM$ always returns PO allocations,
allocation $\mathbf{A}^{\prime}$ needs to be PO when $\mathbf{\hat{v}^{\prime}}$
is true valuations.
Thus,
$e_1, e_2$ should be assigned to agent 1 in $\mathbf{A}^{\prime}$.
Moreover, by the property of EQ1,
each agent $i \in [2]$ should be allocated one chore from $\{ e_3, e_4 \}$ under $\mathbf{A}^{\prime}$.

Now suppose
for each $i\in [2]$,
$ \hat{v} ^ { \prime }_i  $ is the type of agent $i$, i.e., $ \hat{v} ^ {\prime}_i = v_i $.
Then, if both agents report truthfully, $\cM$ returns allocation $\mathbf{A} ^ {\prime}$
with $ v_1 ( {A} _1 ^ {\prime} ) = - 1$.
However, if agent 1 misreports $\hat{v}_1$,
then the outcome becomes
$\mathcal{M}(\hat{v}_1, \hat{v}^{\prime}_2) = \mathbf{A}$,
and the value of agent 1 in the allocation $\mathbf{A}$ is equal to
$ v_1 ( A_1) = 0 $.
Therefore, agent 1 has incentive to behave strategically, contradicting the strategyproofness of $\mathcal{M}$.
\end{proof}

We remark that the above impossibility result is mainly owing to the fact that deterministic
strategyproofness is too strict to be achieved together with Pareto efficiency and fairness. Note
that for indivisible chores, EQ1 and PO allocations are guaranteed to exist for additive valuations
\citep{freemanEquitableAllocationsIndivisible2020b}. Given such a strong impossibility result, we
relax strategyproofness via randomness.

\section{A randomized mechanism for indivisible chores}
\label{Sec::chores}

Throughout this section, we focus on the setting of indivisible chores, and assume agents have
$k$-\emph{restricted additive} valuation functions, that is, every chore $e_j$ is associated with
$k$ inherent values $ v^1(e_j), v^2(e_j),\ldots, v^k(e_j)$ with $v^t(e_j) < 0 $ for all $t \in
[k]$, and for any agent $ i $, his value of $e_j$ is $ v_i ( e_j ) \in \{ 0, v^1(e_j),
v^2(e_j),\ldots, v^k(e_j) \}$. In particular, 1-restricted additive valuation is simply called
\emph{restricted additive} valuation where for simple notations, the inherent value of $e_j$ is $v
( e_j ) < 0$. The restricted additive valuation has been studied in the literature on fair division
and resulted in significant and challenging research questions
\citep{bezakovaAllocatingIndivisibleGoods2005,
asadpourSantaClausMeets2012,DBLP:conf/ijcai/AkramiRS22}. Note that restricted additive functions
strictly generalize binary additive functions. Problems that can be solved efficiently in binary
additive valuations possibly become computationally intractable under the restrictive additive
domain. For example, in the context of indivisible goods, the egalitarian welfare-maximizing
allocation can be computed in polynomial time under binary additive valuations
\citep{10.5555/3237383.3237392}, while for restricted additive valuations, it is NP-hard to
approximate within a factor better than $1/2$ \citep{bezakovaAllocatingIndivisibleGoods2005}. As
type sets $\{ V_i \}_{ i \in [n]}$ are common knowledge, the inherent values $ \{ v ^ t ( e_j ) \}$
of $e_j $ is publicly known, which further constrains every agent $i$'s reported value on $e_j$ to
satisfy $\hat{v}_i( e_j) \in \{ 0, v^1(e_j), v^2(e_j), \ldots, v^k(e_j) \}$.

Designing SPIE mechanisms with fairness and efficiency guarantee is not a trivial task. Existing
randomized mechanism such as Random Priority and Probabilistic Serial
\citep{bogomolnaiaNewSolutionRandom2001} can not guarantee strategyproofness and fairness at the
same time when $ m > n $. While some straightforward mechanisms, such as assigning all items to an
agent chosen uniformly at random satisfies strategyproofness and fairness before realization, its
performance guarantees is poor after realization, with all items allocated to a single agent.
Therefore, among randomized mechanisms, we pursue those that can be strategyproof in expectation
and have best of both worlds performance guarantees. In the following, we define ex-ante and
ex-post fairness/efficiency.

\begin{definition}
Given fairness or efficiency notion $P$, a randomized allocation $\widetilde{\mathbf{A}}$ is
ex-ante $P$ if the fractional allocation implemented by $\widetilde{\mathbf{A}}$ is $P$ and is
ex-post $P$ if every deterministic allocation in its support is $P$.
\end{definition}

\begin{definition}
Given fairness or efficiency notions of $P_1$ and $P_2$, a randomized mechanism
$\widetilde{\mathcal{M}}$ is ex-ante $P_1$ and ex-post $P_2 $ if it always returns a randomized
allocation that is ex-ante $P_1$ and ex-post $P_2 $.
\end{definition}

For ex-ante and ex-post efficiency, we have the following implication relation.

\begin{proposition}\label{prop::implication-relation}
An ex-ante \textnormal{UWM} randomized allocation $\widetilde{\mathbf{A}}$ is also ex-ante
\textnormal{PO} and ex-post \textnormal{UWM}. An ex-ante \textnormal{PO} randomized allocation
$\widetilde{\mathbf{B}}$ is also ex-post \textnormal{PO}.
\end{proposition}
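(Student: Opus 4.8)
The plan is to exploit two structural facts throughout. Writing $\widetilde{\mathbf{A}}$ with support $\mathbf{A}^1,\dots,\mathbf{A}^k$ and probabilities $p_1,\dots,p_k$, the implemented fractional allocation is $\mathbf{A}' = \sum_{j\in[k]} p_j\mathbf{A}^j$, and by additivity of each $v_i$ the utilitarian welfare is linear in the allocation: $\UW(\mathbf{A}') = \sum_{j\in[k]} p_j\,\UW(\mathbf{A}^j)$. Second, in any fractional allocation the contribution of item $e_j$ is $\sum_{i} a'_{j,i}v_i(e_j) \le \max_i v_i(e_j)$ since the coefficients $a'_{j,i}$ sum to one; summing over $j$ shows that the maximum utilitarian welfare over fractional allocations coincides with that over deterministic allocations, a common value I denote $W^*$ and which is attained by the deterministic allocation handing each item to an agent valuing it most.

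For the first statement, I would assume $\widetilde{\mathbf{A}}$ is ex-ante UWM, i.e.\ $\UW(\mathbf{A}') = W^*$. Ex-ante PO then follows from the standard observation that a utilitarian maximizer is Pareto optimal: if some fractional allocation Pareto-dominated $\mathbf{A}'$, then every agent would weakly gain and at least one strictly, forcing total welfare to exceed $W^*$, a contradiction. For ex-post UWM I would invoke linearity: since each $\UW(\mathbf{A}^j) \le W^*$ while the weighted average $\sum_{j} p_j\,\UW(\mathbf{A}^j)$ equals $W^*$, every term with $p_j>0$ must already satisfy $\UW(\mathbf{A}^j) = W^*$, so each deterministic allocation in the support is UWM.

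For the second statement, I would argue by contradiction through a support swap. Suppose $\widetilde{\mathbf{B}}$ is ex-ante PO (its implemented fractional allocation $\mathbf{B}'$ is PO) yet some $\mathbf{B}^\ell$ in the support with $p_\ell>0$ is not PO, so a deterministic allocation $\mathbf{C}$ Pareto-dominates it: $v_i(C_i)\ge v_i(B^\ell_i)$ for all $i$, with strict inequality for some $i_0$. Replacing $\mathbf{B}^\ell$ by $\mathbf{C}$ in the support yields a new randomized allocation that implements $\mathbf{B}'' = \mathbf{B}' + p_\ell(\mathbf{C} - \mathbf{B}^\ell)$, and for each agent $v_i(B''_i) = v_i(B'_i) + p_\ell\bigl(v_i(C_i)-v_i(B^\ell_i)\bigr) \ge v_i(B'_i)$, with the inequality strict at $i_0$ because $p_\ell>0$. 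Hence $\mathbf{B}''$ Pareto-dominates $\mathbf{B}'$, contradicting ex-ante PO; therefore every deterministic allocation in the support is PO.

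The one point to handle with care is that PO must be read in the fractional domain for the ex-ante notions, and that the swapped object $\mathbf{B}''$ is itself a legitimate fractional allocation implemented by a randomized allocation, being a convex combination of deterministic allocations. Once the fractional Pareto-domination comparison is set up correctly, the remainder is routine bookkeeping with the linearity of each $v_i$, and I do not anticipate a genuine obstacle beyond this.
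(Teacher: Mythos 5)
Your proof is correct, and most of its skeleton matches the paper's: ex-ante PO follows from the observation that Pareto domination strictly increases utilitarian welfare, and ex-post PO follows from exactly the support swap you describe (replace a dominated support element $\mathbf{B}^\ell$ by its dominator; linearity of each $v_i$ makes the new implemented fractional allocation Pareto dominate $\mathbf{B}'$, contradicting ex-ante PO). The one place you genuinely diverge is ex-post UWM. The paper runs the same swap template a second time: if some support element $\mathbf{A}^i$ had a deterministic improvement $\mathbf{A}^{k+1}$, then $p_i\mathbf{A}^{k+1}+\sum_{j\neq i}p_j\mathbf{A}^j$ would have strictly larger utilitarian welfare than $\mathbf{A}'$, contradicting ex-ante UWM. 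You instead use an averaging argument: each $\UW(\mathbf{A}^j)\le W^*$ while $\sum_j p_j\,\UW(\mathbf{A}^j)=W^*$, so equality holds on every support element. The two arguments are contrapositives of one another and both rest on linearity of $\UW$ over mixtures, so neither is deeper; the paper's version has the small advantage of being one uniform template for both ex-post claims (the swap works for PO, where averaging cannot be applied directly since Pareto optimality is not a scalar quantity), while yours isolates the scalar structure of UWM and, through your preliminary fact that the fractional maximum $W^*$ equals the deterministic maximum (attained by giving each item to an agent valuing it most), makes explicit why ex-ante UWM is even attainable with a deterministic support --- a point the paper leaves implicit.
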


\begin{proof}
Suppose that $\widetilde{\mathbf{A}}$ has support $\{\mathbf{A}^1, \ldots, \mathbf{A} ^ k\}$ with
probability $p_i$ on deterministic allocation $\mathbf{A} ^i$ for each $ i \in [k]$. Let
$\mathbf{A} ^ {\prime}$ be the fractional allocation implemented by $\widetilde{\mathbf{A}}$. If
there exists another fractional allocation $\widehat{\mathbf{A}}$ that Pareto dominates
$\mathbf{A}^{\prime}$, then $\UW(\widehat{A}) > \UW(\mathbf{A} ^ {\prime})$ holds, contradicting
the fact that $\widetilde{\mathbf{A}}$ is ex-ante UWM. Hence, allocation $\widetilde{\mathbf{A}}$
is also ex-ante PO. Next, we prove that $\widetilde{\mathbf{A}}$ is also ex-post UWM. For a
contradiction, suppose $\widetilde{\mathbf{A}}$ is not ex-post UWM, then there must exist a
deterministic allocation $\mathbf{A} ^ {k+1}$ such that $\UW(\mathbf{A} ^ {k+1}) > \UW(\mathbf{A} ^
i )$ for some $ i \in [k]$. Consider the fractional allocation $\mathbf{A} ^ {\prime\prime} = p_i
\mathbf{A}^{k+1} + \sum_{ j \neq i } p_j \mathbf{A} ^ j $ and it is not hard to verify
$\UW(\mathbf{A} ^ {\prime\prime}) > \UW(\mathbf{A} ^ {\prime})$, contradicting the fact that
$\widetilde{\mathbf{A}}$ is ex-ante UWM. Therefore, $\widetilde{\mathbf{A}}$ is also ex-post UWM.

For an arbitrary ex-ante PO randomized allocation $\widetilde{\mathbf{B}}$, let its support be
$\{\mathbf{B}^1, \ldots, \mathbf{B} ^ k\} $ with probability $ p _i $ on deterministic allocation
$\mathbf{B} ^ i $ for each $ i \in [k]$. Moreover, let $\mathbf{B} ^ {\prime}$ be the fractional
allocation implemented by $\widetilde{\mathbf{B}}$. For a contradiction, assume that there exists
another deterministic allocation $\mathbf{B} ^ { k + 1 }$ that Pareto dominates $\mathbf{B} ^ i $
for some $ i \in [k]$. Similarly, the fractional allocation $\mathbf{B} ^ {\prime\prime} = p_i
\mathbf{B} ^ {k+1} + \sum_{ j \neq i } p_j \mathbf{A} ^ j $ Pareto dominates $\mathbf{B} ^
{\prime}$, a contradiction. Therefore, allocation $\widetilde{\mathbf{B}}$ is ex-post PO.
\end{proof}

In contract to UWM and PO, not every ex-ante fair solution guarantees ex-post fairness. Let us
again consider the mechanism of assigning all items to an agent chosen uniformly at random. It is
not hard to verify that the returned allocation is ex-ante EF. However, it provides little fairness
guarantee after realization.

The main result of this section is a GSPIE randomized mechanism (see
Algorithm~\ref{alg::randomchore}) with best of both worlds fairness and efficiency guarantees for $
n $ agents with 1-restricted additive valuation functions. Intuitively, the mechanism first
collects reported type profile and then partitions $[m]$ into $Q$ and $\bar{Q}$ where for any $e_j$
with $ j \in Q$, there exists some agent with reported value zero on $e _j $, while for any $e_{ j
^ {\prime}}$ with $ j ^ {\prime} \in \bar{Q}$, every agent $i$ reports $\hat{v}_i (e_{ j ^
{\prime}}) = v(e_{ j ^ {\prime}}) $. Then, for every $ j \in Q $, $\RandChore$ assigns $e_j$ to an
agent chosen uniformly at random with reported value $0$ on $e_j$. Items $\bigcup_{ j \in \bar{Q}}
e_ j$ are then assigned to agents in a round-robin fashion based on a permutation $\sigma$ of $\{
1, \ldots, n\}$ generated uniformly at random.

\begin{algorithm}[ht!]
\caption{$\RandChore$}
\begin{algorithmic}[1]\label{alg::randomchore}
\STATE Collects reported profile $(\hat{v}_1, \ldots, \hat{v}_n)$.
\STATE Let ${Q} = \{ q \in [m] \mid \exists i \textnormal{ such that } \hat{v}_i ( e_q ) = 0 \} $ and $\bar{{Q}} = [m] \setminus {Q}$.
  \label{ALG::GSP-CHORE-STEP-2}
\STATE For every $q\in{Q}$, uniformly at random pick an agent reporting zero value on $e_q$ and assign $e_q$
  to that agent. \label{ALG::GSP-CHORE-SETP-3}
\STATE Let $\bar{{Q}} = \{ l_1, l_2, \dots, l_k \}$ with inherent values $ v ( e_{ l_1 }) \geq \cdots \geq v ( e_{ l_ k } )$.
\STATE Uniformly at random generate a permutation $\sigma$ of $\{ 1,\ldots, n \}$. According to $\sigma$,
  assign the chore with the largest inherent value from the remaining items to an agent each time, until
  all chores are assigned. If there is a tie on the largest value item, pick $e_{ l_j }$ with the smallest
  index $j$. \label{ALG::GSP_CHORE-STEP-5}
\end{algorithmic}
\end{algorithm}

\begin{theorem}\label{thm::main-RandChore}
Mechanism $\RandChore$ is \textnormal{GSPIE} and satisfies the following:
\begin{itemize}
\item Fairness guarantee: ex-ante \textnormal{EF}, \textnormal{PROP}, \textnormal{EQ} and ex-post
    \textnormal{EF1}, \textnormal{PROP1}, \textnormal{EQ1};
\item Efficiency guarantee: ex-ante \textnormal{PO}, \textnormal{UWM}, \textnormal{EWM} and
    ex-post \textnormal{PO}, \textnormal{UWM}, $2$-approximation of \textnormal{EWM}.
\end{itemize}
\end{theorem}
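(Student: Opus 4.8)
The plan is to begin with one structural observation that makes every part of the theorem tractable: under $\RandChore$, by linearity of expectation each agent's expected utility decomposes additively across items, and for the round-robin stage this decomposition is especially clean. Because the permutation $\sigma$ is uniform, every agent occupies each of the $n$ round-robin positions with probability $1/n$, and each position receives a fixed subset of $\bar{Q}$ determined solely by the inherent-value sorting; hence agent $i$'s expected value from $\bar{Q}$ is $\tfrac1n\sum_{j\in\bar{Q}}v_i(e_j)$, regardless of the greedy details. Combining this with the independent per-item draws in Step~\ref{ALG::GSP-CHORE-SETP-3}, agent $i$'s total expected utility given the reported profile equals $\sum_{j\in Q}\Pr[i\text{ gets }e_j]\,v_i(e_j)+\tfrac1n\sum_{j\in\bar{Q}}v_i(e_j)$, a sum of per-item terms, each depending only on the reports submitted on that item. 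All fairness and efficiency claims are then evaluated under truthful reporting (reported $=$ true), which is the justified behavior once GSPIE is established; note that for $j\in\bar{Q}$ every agent truly values $e_j$ at $v(e_j)$.

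For GSPIE I would fix the reports of the agents outside a coalition $S$ and show that the coalition's aggregate expected utility $\sum_{i\in S}U_i$ is maximized by truthful reporting; since a Pareto-improving manipulation would strictly increase this sum, showing it cannot increase rules out all such manipulations (and the case $|S|=1$ gives SPIE). Working item by item, let $z_j$ indicate whether some outside agent reports $0$ on $e_j$. If $z_j=1$, or if some member of $S$ truly values $e_j$ at $0$, then truthful reporting already yields aggregate contribution $0$, the maximum possible. Otherwise $e_j$ is truly costly to every member and no outside agent reports $0$; truthful reporting keeps $e_j\in\bar{Q}$, giving aggregate $\tfrac{|S|}{n}v(e_j)$, whereas any deviation that pulls $e_j$ into $Q$ forces a coalition member to bear the entire cost $v(e_j)$, and $v(e_j)\le\tfrac{|S|}{n}v(e_j)$ since $v(e_j)<0$ and $|S|\le n$. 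I expect the only care needed here is the careful enumeration of these item-level cases and the sign handling in this inequality.

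For the ex-ante guarantees I would use that the implemented fractional allocation gives each $\bar{Q}$-item an equal $1/n$ split and routes each $Q$-item to a zero-valued recipient, so every agent's own value is exactly $\tfrac1n v(\bar{Q})$. Ex-ante EQ is then immediate; ex-ante EF holds because $i$'s value for another agent's bundle is $\tfrac1n v(\bar{Q})$ plus non-positive $Q$-terms, hence at most $i$'s own value; and ex-ante PROP follows (directly, or from EF). For efficiency, each $Q$-item contributes its maximum possible $0$ to $\UW$ and each $\bar{Q}$-item contributes the forced $v(e_j)$, so $\UW=v(\bar{Q})$ is maximal, giving ex-ante UWM and, by Proposition~\ref{prop::implication-relation}, ex-ante PO; and since $\min\le$ average $\le\tfrac1n\,\UW^{\max}$, the common value $\tfrac1n v(\bar{Q})$ meets the egalitarian upper bound, giving ex-ante EWM.

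Ex-post UWM and PO follow from Proposition~\ref{prop::implication-relation}. For ex-post EF1 and EQ1 I would invoke the classical round-robin argument restricted to $\bar{Q}$: since all agents share the inherent-value ranking, the earlier-picking agent holds an item-wise less costly bundle, and the later agent, after deleting its single most costly chore, no longer envies (for EF1) and has value at least that of the earlier agent (for EQ1); the $Q$-items only help, being free to their holder and weakly costly to others, so full bundles inherit these properties, and PROP1 follows from EF1. The technically heaviest part is the ex-post $2$-approximation of EWM, which I expect to be the main obstacle. Here I would view $\bar{Q}$ as a makespan problem with loads $C_p=\sum_{k\ge0}c_{p+kn}$, where $c_1\le\cdots\le c_M$ are the sorted chore costs, and prove a telescoping item-pairing bound $\max_p C_p-\min_p C_p\le c_M$, hence $\max_p C_p\le\text{avg}+c_M$; combined with the two lower bounds $|\EW^{\ast}|\ge\text{avg}$ and $|\EW^{\ast}|\ge c_M$ (every allocation loads the costliest chore on someone, and the max load is at least the average), this gives worst-agent cost at most $2|\EW^{\ast}|$, i.e.\ ex-post egalitarian welfare at least $2\,\EW^{\ast}$, after noting that an optimal allocation routes all $Q$-items to zero-valued agents so that the full-instance optimum coincides with the $\bar{Q}$-optimum.
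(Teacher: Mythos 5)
Your proposal is correct, and its overall skeleton matches the paper's: the uniform-permutation expectation identity (the paper's Lemma~\ref{Lemma::GSP1}), GSPIE via showing the coalition's \emph{aggregate} expected utility is maximized by truthful reporting, the $\frac{1}{n}$-per-item fractional allocation for all ex-ante claims, the identical-valuation round-robin analysis for ex-post EF1/EQ1, and Proposition~\ref{prop::implication-relation} for ex-post PO and UWM. Two components, however, take genuinely different routes. First, for GSPIE the paper uses a two-step reduction (Lemma~\ref{Lemma::GSP_chore2}: first normalize away misreports on truly zero-valued chores; Lemma~\ref{Lemma::GSP_chore3}: then compare the sets $\bar{Q}$ under truthful and deviating profiles via set inclusions), whereas you observe that each item's contribution to the coalition's expected utility depends only on the reports on that item and run a per-item case analysis; the decisive inequality $v(e_j)\le \frac{|S|}{n}\,v(e_j)$ is the same in both, but your per-item organization is more direct and dispenses with the reduction machinery. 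Second, and more substantially, for the ex-post $2$-approximation of EWM the paper (Proposition~\ref{prop::bobw-efficiency}) argues by contradiction, combining the already-proven EQ1 and ex-post UWM with the bound $\OPT_U\ge n\cdot\OPT_E$, while you give a self-contained Graham-style scheduling argument: with sorted costs $c_1\le\cdots\le c_M$ and loads $C_p$, the pairing bound $\max_p C_p\le \min_p C_p + c_M$ plus the two classical lower bounds ($\OPT$ is at least the average load and at least $c_M$) yields $\max_p C_p\le 2\cdot\OPT$. Your version is elementary, independent of the fairness propositions, and makes transparent where the factor $2$ comes from (it also immediately explains the tightness example behind Proposition~\ref{prop::impo-ewm}); the paper's version is shorter given its earlier propositions but less self-contained. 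A final small point in your favor: your lifting of EF1 from $\bar{Q}$ to full bundles (``$Q$-items are free to their holder and weakly costly to others'') is stated more carefully than the paper's corresponding step, which asserts $v_i(e_q)=0$ for all $i\in[n]$ and $q\in Q$ --- a property that in fact holds only for each $Q$-item's recipient, though the paper's conclusion is unaffected.
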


In the following, we split the proof of Theorem~\ref{thm::main-RandChore}. We begin with GSPIE and
then prove best of both worlds fairness and efficiency. For simplicity, we use notation
$\widetilde{\cM} ^ *$ and $\RandChore$ interchangeably in this section.

\begin{proposition}\label{Prop::SP-RES-OPT}
Mechanism $\RandChore$ is \textnormal{GSPIE}.
\end{proposition}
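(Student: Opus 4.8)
The plan is to prove the stronger statement that, for every coalition $S \subseteq N$ with true types $v_S$ and every fixed report $\hat{v}_{-S}$ of the complement, truthful reporting by $S$ maximizes the coalition's \emph{total} expected utility $\sum_{i \in S}\mathbb{E}[u_i(\mathbf{A}, v_i)]$. This immediately yields GSPIE: a group deviation that weakly improves every member and strictly improves at least one would strictly increase the total, contradicting maximality. Note this simultaneously covers the singleton case (SPIE) and the grand coalition $S=N$.

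First I would decompose the objective. By additivity and linearity of expectation, for any report profile the coalition's total expected utility equals $\sum_{j \in [m]} v(e_j)\, p_j$, where $p_j$ is the probability that $e_j$ is assigned to a coalition member whose \emph{true} value for $e_j$ is $v(e_j)$; members with true value $0$ contribute nothing, and since at most one agent receives each item these events are disjoint. As $v(e_j) < 0$, maximizing the total is equivalent to \emph{minimizing each} $p_j$, and I would argue truthful reporting minimizes $p_j$ for every item simultaneously.

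The per-item analysis splits on whether, under truthful play, $e_j$ can be absorbed by an agent who does not truly dislike it. If some non-member reports $0$ on $e_j$ in $\hat{v}_{-S}$, or some coalition member truly values $e_j$ at $0$, then under truthful play $e_j \in Q$ and Step~\ref{ALG::GSP-CHORE-SETP-3} routes it to a zero-reporter who does not truly dislike it, so $p_j = 0$ is already minimal; any deviation can only add a truly-disliking member to the candidate pool or push $e_j$ into $\bar{Q}$, both of which can only increase $p_j$. In the remaining case every coalition member truly dislikes $e_j$ and no non-member reports $0$, so under truthful play $e_j \in \bar{Q}$ and is distributed by the round-robin, giving $p_j = |S|/n$; the only alternative is to force $e_j$ into $Q$ via a lying member reporting $0$, which routes it to a disliking member with probability $1 \geq |S|/n$. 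Hence truthful play minimizes every $p_j$, and the grand-coalition case falls out identically (items with a truly-indifferent agent give $p_j=0$; universally-disliked items give $p_j=1$ unavoidably).

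The key technical point, and the main obstacle, is controlling the round-robin Step~\ref{ALG::GSP_CHORE-STEP-5}, which couples all items of $\bar{Q}$ through a single random permutation $\sigma$ and might seem manipulable. I would establish that, because the inherent values are common knowledge and fix the processing order within $\bar{Q}$, and because $\sigma$ is uniform, the \emph{marginal} probability that any fixed agent receives any fixed item of $\bar{Q}$ is exactly $1/n$, independently of the other items and of all reports. Consequently the joint distribution may change with the set $\bar{Q}$, but the marginals, which are all that matter under additive valuations, do not; the coalition's only genuine lever is the membership of items in $Q$ versus $\bar{Q}$, already handled above. Combining the per-item minimality of $p_j$ over all reports shows the truthful total dominates the total under any deviation, completing the proof.
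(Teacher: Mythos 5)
Your proposal is correct, but it proves the core claim by a genuinely different route than the paper. The reduction itself is shared: the paper also derives GSPIE from the statement that truthful reporting maximizes the coalition's \emph{total} expected utility (its Lemma~\ref{Lemma::GSP_chore3}), summing the assumed group improvement to reach a contradiction. Where you diverge is in how that statement is proved. The paper argues in two stages: a normalization lemma (Lemma~\ref{Lemma::GSP_chore2}) first shows one may assume the only misreports are false zeros on truly disliked chores; it then compares aggregate sums over the index sets $\bar{Q}$ (truthful) and $\bar{Q}^b$ (deviation), splitting into the cases $\bar{Q}^b = \bar{Q}$ and $\bar{Q}^b \subsetneq \bar{Q}$, with the key inequality resting on $|S| \le n$ and on the fact that a chore pushed out of $\bar{Q}$ by a false zero lands in $S$ with probability one. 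You instead decompose the coalition total item by item as $\sum_j v(e_j)\,p_j$ and show truth minimizes every $p_j$ separately; this decoupling is legitimate because the assignment of $e_j$ depends only on the reports about $e_j$ --- Step~\ref{ALG::GSP-CHORE-SETP-3} consults only the zero-reporters on $e_j$, and the round-robin marginals in Step~\ref{ALG::GSP_CHORE-STEP-5} are identically $1/n$ regardless of what else lies in $\bar{Q}$, which is exactly the content of the paper's Lemma~\ref{Lemma::GSP1}. Your route buys a shorter, more transparent argument: it dispenses with the normalization lemma and the contradiction framing, and isolates the arithmetic core of why manipulation is futile (a false zero raises the bad-assignment probability of that item from $|S|/n$ to $1$; a false dislike can only move an absorbable chore into the round robin), with $1 \ge |S|/n$ playing the role of the paper's $|S| \le n$ step. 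The paper's aggregate route stays closer to the algorithm's partition into $Q$ and $\bar{Q}$ and produces intermediate lemmas stated at the level of reported profiles, but at the cost of a longer case analysis. Both proofs ultimately hinge on the same two structural facts --- uniform $1/n$ marginals and probability-one routing of falsely zeroed chores --- so they are close relatives; yours is the finer-grained of the two.
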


Before proving Proposition~\ref{Prop::SP-RES-OPT}, we first present several lemmas.

\begin{lemma}\label{Lemma::GSP1}
Given the reported profile $(\hat{v}_1,\ldots, \hat{v}_n)$ and the corresponding $Q$ and $\bar{Q}$,
for any agent $i \in [n]$, he receives an expected utility $\frac{1}{n} v_i (\bigcup_{q \in
\bar{Q}} e_q )$ in Step~\ref{ALG::GSP_CHORE-STEP-5}.
\end{lemma}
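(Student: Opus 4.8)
The plan is to show that in Step~\ref{ALG::GSP_CHORE-STEP-5} each chore $e_q$ with $q \in \bar{Q}$ is assigned to agent $i$ with probability exactly $\frac{1}{n}$, and then to conclude by linearity of expectation. The crucial observation is that the allocation produced in Step~\ref{ALG::GSP_CHORE-STEP-5} depends only on the inherent values of the items in $\bar{Q}$ (which, by the definition of $\bar{Q}$, all agents report) and on the randomly drawn permutation $\sigma$; it is entirely independent of the true valuation $v_i$ used to measure agent $i$'s realized utility. This separation of the randomness from the true type is what makes the target value $\frac{1}{n}v_i(\bigcup_{q\in\bar{Q}} e_q)$ fall out cleanly.

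First I would fix the deterministic picking order of the items. Since $\bar{Q} = \{l_1, \ldots, l_k\}$ is sorted so that $v(e_{l_1}) \geq \cdots \geq v(e_{l_k})$ with ties broken by smallest index, the sequence in which chores are handed out, namely $e_{l_1}, e_{l_2}, \ldots, e_{l_k}$, is completely determined and carries no randomness whatsoever; the only randomness lies in which agent occupies each pick. Next I would identify the agent making the $p$-th pick: under the round-robin induced by $\sigma$, chore $e_{l_p}$ goes to agent $\sigma(r_p)$ where $r_p = ((p-1) \bmod n) + 1 \in [n]$. I would then invoke that $\sigma$ is drawn uniformly over all permutations of $[n]$, so for any single fixed coordinate $r \in [n]$ we have $\Pr[\sigma(r) = i] = \frac{1}{n}$. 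Hence $\Pr[e_{l_p} \text{ is assigned to } i] = \frac{1}{n}$ for every $p \in [k]$, independent of the item's position.

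Finally, writing $X_q$ for the indicator that $e_q$ is assigned to agent $i$ in Step~\ref{ALG::GSP_CHORE-STEP-5}, linearity of expectation gives $\mathbb{E}\big[\sum_{q\in\bar{Q}} X_q\, v_i(e_q)\big] = \sum_{q\in\bar{Q}} \frac{1}{n} v_i(e_q) = \frac{1}{n} v_i\big(\bigcup_{q\in\bar{Q}} e_q\big)$, which is exactly the claimed expected utility. Note that the true value $v_i(e_q) \in \{0, v(e_q)\}$ need not equal the reported inherent value, but this is harmless: the probability $\frac{1}{n}$ is determined entirely by the reports and $\sigma$, and the true value merely reweights each term.

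The one point requiring a little care is an incomplete final round when $n \nmid k$, since then not every agent picks in the last pass. I expect this to be the main (though minor) obstacle. It is dissolved by the observation that the per-item probability depends only on the single fixed coordinate $\sigma(r_p)$ rather than on the completeness of the round structure: whichever position $r_p$ an item occupies, the uniform-permutation marginal $\Pr[\sigma(r_p)=i]=\frac{1}{n}$ applies, so the argument is insensitive to whether $k$ is a multiple of $n$.
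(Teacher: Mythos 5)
Your proof is correct, and it rests on the same underlying fact as the paper's -- the uniformity of the random permutation $\sigma$ -- but it organizes the computation differently. The paper conditions on \emph{agent $i$'s position}: agent $i$ occupies position $j$ with probability $\frac{1}{n}$, and in that event receives the deterministic bundle $\{e_{l_j}, e_{l_{n+j}}, e_{l_{2n+j}}, \ldots\}$; summing over $j$ and invoking additivity of $v_i$ (the $n$ position-bundles partition $\bar{Q}$) yields $\frac{1}{n} v_i\bigl(\bigcup_{q \in \bar{Q}} e_q\bigr)$. You instead work \emph{per item}: the $p$-th item in the sorted order is taken by whoever sits at coordinate $r_p = ((p-1) \bmod n) + 1$, the marginal $\Pr[\sigma(r_p) = i] = \frac{1}{n}$ gives each item a $\frac{1}{n}$ chance of landing on agent $i$, and linearity of expectation finishes. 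These are dual views of the same expectation -- the paper uses the distribution of $\sigma^{-1}(i)$, you use the distribution of $\sigma(r)$ at a fixed coordinate. Your item-level decomposition has the minor advantage that the incomplete final round (when $n \nmid |\bar{Q}|$) genuinely requires no special treatment, whereas the paper absorbs it into the floor expression $k = \lfloor(|\bar{Q}| - j)/n\rfloor$ describing each position's bundle; your observation that the true value $v_i(e_q)$ may differ from the inherent value without affecting the argument is also a point the paper leaves implicit. Both proofs are equally rigorous and equally short.
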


\begin{proof}
Fix index $i$. Since $\sigma$ is a uniformly random permutation of $\{ 1,2,\ldots, n \}$, the
probability of agent $i $ on position $\sigma(j)$ is $\frac{1}{n}$ for all $j \in [n]$. Based on
Step~\ref{ALG::GSP_CHORE-STEP-5}, if agent $i$ is in position $\sigma(j)$, he receives a utility $
v_i (e_{ l_j } \cup  e_{l_{ n + j } } \cup \cdots \cup e_{l_{k n + j }})$, where $k=\lfloor
{(|\bar{{Q}}| - j )}/{n} \rfloor$. Thus, his expected utility derived by the assignment of
Step~\ref{ALG::GSP_CHORE-STEP-5} is equal to
\[
\sum_{ j = 1} ^ {n} \frac{1}{n} v_i  (\bigcup_{p =0 }^{k} e _{ l_{pn+j}})
 =  \frac{1}{n} v_i ( \bigcup _{j = 1} ^ n \bigcup_{p =0 }^{k} e _{ l_{pn+j}})
 = \frac{1}{n} v_i(\bigcup_{q \in \bar{{Q}}} e_q),
\]
where the first equality transition is due to the additivity of $v_i(\cdot)$.
\end{proof}

For any group $S$ of agents and any agent $i \in S$, if agent $i$'s true valuation on $e_j$ is $
v_i ( e_j ) = 0 $, then he cannot gain additional expected utility by misreporting $\hat{v}_i (e_j
) = v ( e_j )$.

\begin{lemma}\label{Lemma::GSP_chore2}
Given a subset $S\subseteq N $ and a reported profile $(\hat{v}_S , \hat{v}_{ - S })$ with
$\hat{v}_S \neq v_S $, construct another reported profile $( \hat{v} ^ {\prime}_S, \hat{v}_{ - S})$
as follows: for any $i \in S$ and $e_j \in E$, if $ v_i ( e_j ) = 0 $, then set $\hat{v}^{\prime}_i
( e_j ) = 0$; otherwise, $\hat{v}^{\prime}_i ( e_j ) = \hat{v}_i ( e_j )$. Then for any $i \in S$,
the expected utility of agent $i $ under reported profile $(\hat{v}^{\prime}_S, \hat{v}_{ - S })$
is at least that under $(\hat{v}_S, \hat{v}_{ - S })$,
$$
\mathbb{E }_ {\mathbf{A}\thicksim \widetilde{\cM}^* (\hat{v}^{\prime}_S, \hat{v}_{ - S})} [ u_i (\mathbf{A},  v_i )] \geq
\mathbb{E }_ {\mathbf{A}\thicksim \widetilde{\cM}^* (\hat{v}_S, \hat{v}_{ - S})} [ u_i (\mathbf{A},  v_i )].
$$
\end{lemma}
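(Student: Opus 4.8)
The plan is to reduce the entire comparison to a single closed-form expression for each agent's true expected utility and then compare it item by item across the two profiles. First I would split the expected utility of an agent $i$ into the contribution of Step~\ref{ALG::GSP-CHORE-SETP-3} and that of Step~\ref{ALG::GSP_CHORE-STEP-5}. In Step~\ref{ALG::GSP-CHORE-SETP-3}, each $e_q$ with $q \in Q$ is handed to a uniformly random agent among the zero-reporters on $e_q$, so if $i$ reports $0$ on $e_q$ it contributes $\frac{1}{n_q} v_i(e_q)$, where $n_q$ is the number of agents reporting $0$ on $e_q$, and otherwise it contributes $0$. By Lemma~\ref{Lemma::GSP1}, the Step~\ref{ALG::GSP_CHORE-STEP-5} contribution equals $\frac{1}{n}\sum_{q \in \bar{Q}} v_i(e_q)$, regardless of the true $v_i$, since that step is symmetric across agents. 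Hence, under any reported profile,
$$ \mathbb{E}[u_i] = \sum_{q \in Q,\; \hat{v}_i(e_q)=0} \frac{1}{n_q}\, v_i(e_q) \;+\; \frac{1}{n}\sum_{q \in \bar{Q}} v_i(e_q). $$
Because $v_i$ is $1$-restricted additive, only the \emph{genuine chores} of $i$, namely the items $e_q$ with $v_i(e_q)=v(e_q)<0$, contribute to this sum, so I can restrict attention to that set.

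Next I would record the two monotonicity facts that drive everything. In passing from $(\hat{v}_S,\hat{v}_{-S})$ to $(\hat{v}'_S,\hat{v}_{-S})$, every report that changes does so from a nonzero value to $0$, and no report ever changes from $0$ to a nonzero value; therefore on each item the set of zero-reporters can only grow, giving $Q \subseteq Q'$ and $n_q \le n_q'$. The crucial observation is that, for any genuine chore $e_q$ of $i$ we have $v_i(e_q)\neq 0$, so the construction leaves agent $i$'s own declaration on $e_q$ untouched; whether $i$ reports $0$ on $e_q$ is thus identical under both profiles, and only the reports of the other agents (and hence the partition and the counts $n_q$) may shift.

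Finally I would compare, for each genuine chore $e_q$ of $i$, its contribution under the two profiles. If $i$ reports $0$ on $e_q$, then $e_q\in Q\cap Q'$ and its contribution moves from $\frac{1}{n_q}v(e_q)$ to $\frac{1}{n_q'}v(e_q)$; since $v(e_q)<0$ and $n_q'\ge n_q$, the new value is weakly larger. If $i$ reports $v(e_q)$ on $e_q$, then $e_q$ contributes $0$ whenever it sits in $Q$ (resp. $Q'$) and $\frac{1}{n}v(e_q)$ whenever it sits in $\bar{Q}$ (resp. $\bar{Q}'$); by $Q\subseteq Q'$ such an item can only migrate from $\bar{Q}$ into $Q'$, trading a contribution of $\frac{1}{n}v(e_q)<0$ for $0$, again a weak increase. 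Summing these per-item comparisons over all genuine chores of $i$ yields $\mathbb{E}_{\widetilde{\cM}^*(\hat{v}'_S,\hat{v}_{-S})}[u_i]\ge \mathbb{E}_{\widetilde{\cM}^*(\hat{v}_S,\hat{v}_{-S})}[u_i]$, and since $i\in S$ is arbitrary the lemma follows.

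The main obstacle is organizational rather than computational: one must verify that the two monotonicity properties hold \emph{globally} even though several agents of $S$ alter their reports simultaneously, and that these counts $n_q$ together with membership in $Q$ versus $\bar{Q}$ are genuinely the only features of the other agents' reports that enter agent $i$'s contribution. Once the closed-form expected utility is justified via Lemma~\ref{Lemma::GSP1} and the monotone growth of the zero-reporter sets is established, the sign comparison on each item is immediate precisely because every genuine chore carries strictly negative value.
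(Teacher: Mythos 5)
Your proposal is correct and follows essentially the same route as the paper's proof: both decompose the expected utility into the Step~\ref{ALG::GSP-CHORE-SETP-3} and Step~\ref{ALG::GSP_CHORE-STEP-5} contributions (the latter via Lemma~\ref{Lemma::GSP1}), both rest on the monotone growth of the zero-reporter sets (giving $Q \subseteq Q'$ and $n_q \le n'_q$) together with the observation that agent $i$'s own reports on items with $v_i(e_q) \neq 0$ are untouched by the construction, and both conclude by a sign argument using $v(e_q) < 0$. The only difference is presentational: you compare contributions item by item via an explicit closed form, whereas the paper compares the two steps in aggregate (probability of assignment for Step~\ref{ALG::GSP-CHORE-SETP-3}, set inclusion $\bar{Q}' \subseteq \bar{Q}$ for Step~\ref{ALG::GSP_CHORE-STEP-5}).
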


\begin{lemma}\label{Lemma::GSP_chore3}
Given a subset $ S \subseteq N $ and reported valuations $ \hat{v}_{ - S } $, the summation of the
expected utilities of agents in $S$ is maximized when every agent $i \in S $ reports his true
valuation.
\end{lemma}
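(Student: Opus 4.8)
The plan is to prove Lemma~\ref{Lemma::GSP_chore3}, which asserts that for a fixed subset $S$ and fixed reports $\hat{v}_{-S}$ of the outside agents, the aggregate expected utility $\sum_{i \in S} \mathbb{E}[u_i]$ of the coalition is maximized when every member reports truthfully. My strategy is to decompose each agent's expected utility according to the two allocation phases in $\RandChore$ (the $Q$-phase in Step~\ref{ALG::GSP-CHORE-SETP-3} and the round-robin $\bar{Q}$-phase in Step~\ref{ALG::GSP_CHORE-STEP-5}) and to argue that the coalition can only hurt its total by deviating. Crucially, by Lemma~\ref{Lemma::GSP_chore2} I may assume without loss of generality that any reported profile $\hat{v}_S$ is ``normalized'' so that each $i \in S$ reports $0$ on every chore for which $v_i(e_j)=0$; this reduces the only possible manipulation to the opposite direction, namely an agent $i$ reporting $\hat{v}_i(e_j)=0$ on some chore for which his true value is $v_i(e_j)=v(e_j)<0$.

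First I would set up notation for the effect of such a downward misreport. Fix the truthful profile $v_S$ and any alternative normalized profile $\hat{v}_S$; let $D = \{(i,e_j) : i \in S,\ v_i(e_j)=v(e_j)<0,\ \hat{v}_i(e_j)=0\}$ be the set of agent--chore pairs that are misreported to zero. I would handle a single misreport and then iterate. So suppose exactly one pair $(i^\ast, e_{j^\ast}) \in D$. Reporting $0$ on $e_{j^\ast}$ potentially moves $j^\ast$ from $\bar{Q}$ into $Q$, changing which phase allocates $e_{j^\ast}$ and also perturbing the round-robin assignment of the remaining $\bar{Q}$-items. The plan is to show the net change in the coalition's total expected utility is nonpositive by comparing, term by term, the expected disutility the coalition absorbs before and after the deviation.

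The key accounting step uses Lemma~\ref{Lemma::GSP1}: in the round-robin phase each agent's expected utility is exactly $\frac{1}{n}v_i(\bigcup_{q\in\bar{Q}} e_q)$, so the \emph{coalition's} expected utility from that phase is $\frac{1}{n}\sum_{i\in S} v_i(\bigcup_{q\in\bar{Q}} e_q)$, which depends on $\bar{Q}$ only through its cardinality-weighted true values and is oblivious to the permutation. I would argue that when $e_{j^\ast}$ is truthfully kept in $\bar{Q}$, the chore is shared in round-robin and contributes $\frac{1}{n}v(e_{j^\ast})$ to the coalition's total (since $i^\ast$ values it at $v(e_{j^\ast})$ and all other $S$-members who value it have the same inherent value); whereas after the deviation $e_{j^\ast}\in Q$ and Step~\ref{ALG::GSP-CHORE-SETP-3} assigns it uniformly among agents reporting zero on it — a set that now \emph{includes} $i^\ast$, whose true value is the strictly negative $v(e_{j^\ast})$. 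Comparing the expected disutility $i^\ast$ (and any other truly-affected coalition member) incurs in the two scenarios, together with the fact that removing $e_{j^\ast}$ from $\bar{Q}$ only redistributes the remaining shares without improving the coalition's aggregate, should yield that the deviation weakly decreases $\sum_{i\in S}\mathbb{E}[u_i]$. Iterating over all pairs in $D$ one at a time then gives the full claim.

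The hard part will be the bookkeeping in the round-robin phase when $e_{j^\ast}$ is pulled out of $\bar{Q}$: although Lemma~\ref{Lemma::GSP1} cleanly gives each agent's \emph{average} share as $\frac{1}{n}v_i(\bigcup_{q\in\bar Q}e_q)$, I must verify that this clean formula is exactly what lets me compare the two $\bar{Q}$ sets without tracking the sorted-order tie-breaking and the per-position assignments in detail. The delicate point is that the outside agents in $N\setminus S$ may also report zero on $e_{j^\ast}$, so after the deviation the uniform draw in Step~\ref{ALG::GSP-CHORE-SETP-3} is over a set possibly larger than $S$; I need to bound the probability that $i^\ast$ (or another genuinely-harmed $S$-member) actually receives $e_{j^\ast}$ and confirm the resulting expected coalition disutility is no smaller than the $\frac{1}{n}v(e_{j^\ast})$ they would have shared under truthful reporting. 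Establishing this inequality in full generality — across all possible overlaps of zero-reporters inside and outside $S$ — is where the main care is required; the rest is a clean telescoping over the single-deviation reductions.
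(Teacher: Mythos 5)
Your high-level strategy coincides with the paper's: normalize the coalition's deviation via Lemma~\ref{Lemma::GSP_chore2} so that the only remaining manipulation is reporting $0$ on a chore truly valued at $v(e)<0$, then use the position-independent round-robin formula of Lemma~\ref{Lemma::GSP1} to do the accounting. The difference is in execution: the paper makes one aggregate comparison (truthful profile versus the fully deviated profile, splitting on whether the two sets $\bar{Q}$ coincide), whereas you propose a per-item telescoping through intermediate profiles. That route can be made to work, but as written there is a genuine gap: the inequality you yourself flag as ``where the main care is required'' --- that a single downward misreport cannot raise the coalition's expected total, uniformly over all configurations of zero-reporters inside and outside $S$ --- \emph{is} the lemma, and your proposal defers rather than proves it. There is also a bookkeeping slip that hides the mechanism driving the result: under truthful reporting the coalition's round-robin burden from $e_{j^*}$ is $\frac{|S'|}{n}\,v(e_{j^*})$, summed over the $|S'|$ coalition members who truly value it (not $\frac{1}{n}v(e_{j^*})$), and everything turns on the comparison $\frac{|S|}{n}v(e)\geq v(e)$, valid precisely because $|S|\leq n$ and $v(e)<0$: a deviation converts a burden spread over all $n$ random positions into one absorbed entirely inside the coalition.

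For completeness, the case split that closes your gap (and which the paper's aggregate argument builds in) is as follows. Case (i): some agent outside $S$, or some coalition member with true value $0$, reports zero on $e_{j^*}$. Then $e_{j^*}$ lies in $Q$ both before and after the deviation, so the round-robin phase is untouched; if $a$ of the $k$ agents in the pre-deviation draw pool are coalition members truly valuing $e_{j^*}$ at $v(e_{j^*})$, the coalition's expected disutility moves from $\frac{a}{k}v(e_{j^*})$ to $\frac{a+1}{k+1}v(e_{j^*})$, and $\frac{a+1}{k+1}\geq\frac{a}{k}$ (since $a\leq k$) together with $v(e_{j^*})<0$ gives the desired direction. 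Case (ii): nobody else reports zero on $e_{j^*}$. Then the deviation moves $e_{j^*}$ from $\bar{Q}$ into $Q$; the coalition's round-robin total improves by $-\frac{|S'|}{n}v(e_{j^*})$, but Step~\ref{ALG::GSP-CHORE-SETP-3} now assigns $e_{j^*}$ to the deviator with probability one, so the net change is $\bigl(1-\frac{|S'|}{n}\bigr)v(e_{j^*})\leq 0$. Note in particular that your worry about outsiders enlarging the draw pool only arises in case (i), where the truthful benchmark is $0$ (not $\frac{1}{n}v(e_{j^*})$, as your write-up suggests), because the item was never in the round-robin phase to begin with. The paper sidesteps this entire per-item analysis by observing that every item displaced out of $\bar{Q}$ by the coalition's misreport has, by definition of $\bar{Q}$, no zero-reporters outside $S$ and is therefore absorbed within $S$ with probability one, while items in $Q$ under both profiles give the coalition exactly $0$ under truth and at most $0$ under deviation.
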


\begin{proof}[Proof Sketch]
The proof is given by a contradiction argument. If some valuation functions $\hat{v}_S$ result in
the summation of expected utilities of agents in $S$ being larger than that under $v_ S $, then
according to Lemma~\ref{Lemma::GSP_chore2}, we can further assume $\hat{v}_S$ only contains one
type of misreporting: some agent $ j \in S $ reports $ \hat{v}_j ( e ) = 0 $ on chore $e$ while his
true valuation is $ v_j ( e ) = v  ( e ) < 0 $. Based on Steps~\ref{ALG::GSP-CHORE-STEP-2} and
\ref{ALG::GSP_CHORE-STEP-5}, deviating from $v(e)$ to zero can only increase the probability of
allocating $e$ to agents in $S$. Therefore, agents of $S$ have no incentive to manipulate. The
formal proof is deferred to Appendix~\ref{Appendix::Sec:chores}.
\end{proof}

Now we are ready to prove that $\RandChore$ is GSPIE.

\smallskip

\begin{proof}[Proof of Proposition~\ref{Prop::SP-RES-OPT}]
For the sake of a contradiction, assume there exists a group of agents $S \subseteq N$ and a
reported profile $(\hat{v}_S, \hat{v}_{ -S})$ such that for any agent $ i \in S$, it holds that
$$
\mathbb{E}_{ \mathbf{A} \thicksim \widetilde{\mathcal{M}}^*(\hat{v}_S, \hat{v}_{ - S })}
[ u_i ( \mathbf{A}, v_i )] \geq \mathbb{E}_{ \mathbf{A} \thicksim \widetilde{\mathcal{M}}^*(v_S, \hat{v}_{ - S })}
[ u_i ( \mathbf{A}, v_i )],
$$
and at least one strict inequality holds. Accordingly, we have the following inequality,
$$
\sum_{ i \in S} \mathbb{E}_{ \mathbf{A} \thicksim \widetilde{\mathcal{M}} ^*( \hat{v}_S,
\hat{v}_{ - S })} [ u_i ( \mathbf{A}, v_i )] > \sum_{ i \in S }\mathbb{E}_{ \mathbf{A}
\thicksim \widetilde{\mathcal{M}} ^* (v_S, \hat{v}_{ - S })} [ u_i ( \mathbf{A}, v_i )],
$$
which contradicts Lemma~\ref{Lemma::GSP_chore3}.
\end{proof}

After establishing the group strategyproofness, we proceed to prove that $\RandChore$ can ensure
best of both worlds fairness. Specifically, $\RandChore$ always returns solutions with ex-ante
exact fairness (EF, PROP and EQ) and ex-post approximate fairness (EF1, PROP1 and EQ1).

\begin{proposition}\label{prop::ex-post and ex-ante fairness of M}
Mechanism $\RandChore$ is ex-ante \textnormal{EF}, \textnormal{EQ} and \textnormal{PROP}, and
ex-post \textnormal{EF1},  \textnormal{EQ1} and \textnormal{PROP1}.
\end{proposition}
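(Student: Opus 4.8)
The plan is to evaluate both fairness notions at the truthful profile $\hat{\mathbf{v}}=\mathbf{v}$ (legitimate because $\RandChore$ is GSPIE by Proposition~\ref{Prop::SP-RES-OPT}) and to exploit the two-part structure of the output. Under truthful reporting we have $Q=\{q:\exists i,\ v_i(e_q)=0\}$, and every chore $e_q$ with $q\in Q$ is handed in Step~\ref{ALG::GSP-CHORE-SETP-3} to an agent who values it at $0$; such a chore therefore contributes $0$ to its recipient's utility and a non-positive amount $v_i(e_q)\in\{0,v(e_q)\}$ to any other agent $i$'s valuation of that recipient's bundle. By contrast, every $e_q$ with $q\in\bar{Q}$ carries the common value $v(e_q)$ for all agents. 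The whole argument thus reduces to analysing the round-robin stage on $\bar{Q}$ and checking that the $Q$-chores can only help the relevant inequalities.

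For the ex-ante claims I would first write down the implemented fractional allocation: each $Q$-chore $e_q$ is split uniformly among the $n_q\ge 1$ agents valuing it at $0$, while by symmetry of the uniformly random permutation $\sigma$ each $\bar{Q}$-chore is split as $\tfrac1n$ to every agent (equivalently, invoke Lemma~\ref{Lemma::GSP1}). Agent $i$'s own fractional value is then $\sum_{q\in\bar{Q}}\tfrac1n v(e_q)$, since the $Q$-chores $i$ receives are worth $0$ to $i$; this quantity is identical for all agents, giving ex-ante EQ. For ex-ante EF, agent $i$'s value of agent $j$'s fractional bundle equals the same $\tfrac1n\sum_{q\in\bar{Q}}v(e_q)$ plus a non-positive contribution from the $Q$-chores $j$ receives, so $v_i(A_i')\ge v_i(A_j')$. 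Ex-ante PROP then follows since $\tfrac1n v_i(E)=\tfrac1n v_i(\bigcup_{q\in Q}e_q)+\tfrac1n\sum_{q\in\bar{Q}}v(e_q)\le v_i(A_i')$, or directly from EF $\Rightarrow$ PROP \citep{DBLP:journals/aamas/AzizCIW22}.

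For the ex-post claims, fix a deterministic allocation in the support, determined by an assignment of each $Q$-chore to a zero-valuing agent and by a permutation $\sigma$. The crux is a self-contained lemma that the round-robin assignment of $\bar{Q}$ (items taken in non-increasing inherent value, i.e. least-costly first) is EF1 under the common valuation $v$: comparing the agent in an earlier position $p_a$ with one in a later position $p_b$, a slot-by-slot comparison shows the earlier agent's bundle is no more costly, while matching the later agent's round-$r$ pick against the earlier agent's round-$(r+1)$ pick shows that deleting the later agent's single costliest chore removes all remaining envy; I would prove this by the standard shifting argument, handling the boundary rounds. To lift it to full bundles I use $v_i(A_i)=v\bigl(A_i\cap\bigcup_{q\in\bar{Q}}e_q\bigr)$ (the $Q$-chores in $A_i$ are worth $0$ to $i$) and $v_i(A_j)\le v\bigl(A_j\cap\bigcup_{q\in\bar{Q}}e_q\bigr)$ (the $Q$-chores in $A_j$ are worth at most $0$ to $i$); hence deleting $i$'s costliest $\bar{Q}$-chore already yields $v_i(A_i\setminus\{e\})\ge v_i(A_j)$, giving ex-post EF1. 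Ex-post EQ1 is immediate from the same identities, since $v_i(A_i)=v\bigl(A_i\cap\bigcup_{q\in\bar{Q}}e_q\bigr)$ and $v_j(A_j)=v\bigl(A_j\cap\bigcup_{q\in\bar{Q}}e_q\bigr)$ reduce the EQ1 comparison to the very same common-valuation statement, and ex-post PROP1 follows from ex-post EF1 via EF1 $\Rightarrow$ PROP1 \citep{DBLP:journals/aamas/AzizCIW22}.

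The main obstacle is the ex-post round-robin EF1 lemma together with the bookkeeping needed to merge the $Q$- and $\bar{Q}$-parts: one must verify that inserting the $Q$-chores never invalidates the ``remove one chore'' guarantee, and dispose of degenerate cases (empty bundles, or agents receiving no $\bar{Q}$-chore) so that a deletable item always exists. The ex-ante part is routine once the implemented fractional allocation is written down.
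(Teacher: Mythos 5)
Your proposal is correct and follows essentially the same route as the paper's proof: for the ex-ante claims you compute the implemented fractional allocation (each $\bar{Q}$-chore split $\tfrac{1}{n}$ to every agent, $Q$-chores contributing zero to their recipients' own utilities), and for the ex-post claims you reduce to a slot-by-slot round-robin comparison on $\bar{Q}$ under the common valuation $v$ and then lift back to the full bundles. The only differences are cosmetic—the paper establishes EQ1 first and derives EF1 from it while you argue in the reverse order (equivalent under common valuations on $\bar{Q}$), and you are in fact more careful than the paper in treating the $Q$-chores as contributing non-positively (rather than exactly zero) to other agents' views of a bundle.
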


We then show that $\RandChore$ also guarantees efficiency from the best of both worlds perspective.

\begin{proposition}\label{prop::bobw-efficiency}
Mechanism $\RandChore$ is ex-ante \textnormal{PO}, \textnormal{UWM} and \textnormal{EWM} and
ex-post \textnormal{PO}, \textnormal{UWM} and $2$-approximation of \textnormal{EWM}.
\end{proposition}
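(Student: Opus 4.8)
The plan is to analyze $\RandChore$ under truthful reporting (legitimate since the mechanism is GSPIE) and to exploit the structure of the partition $(Q,\bar{Q})$. The first fact I would record is that items in $Q$ are costless: by Step~\ref{ALG::GSP-CHORE-SETP-3}, each $e_q$ with $q\in Q$ is handed only to an agent reporting (hence having) value $0$ on it, so it contributes $0$ to its recipient's utility in every deterministic allocation in the support and $0$ to every agent's value in the implemented fractional allocation. Thus only the items of $\bar{Q}$ affect welfare, and on those every agent shares the common value $v(e_j)$.

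For the UWM and PO guarantees I would prove ex-ante UWM directly and then invoke Proposition~\ref{prop::implication-relation} for everything else. The maximum achievable utilitarian welfare assigns each $e_q$ ($q\in Q$) to a zero-value agent (contributing $0$) and each $e_j$ ($j\in\bar{Q}$) to anyone (contributing $v(e_j)$), so the optimum equals $\sum_{j\in\bar{Q}} v(e_j)$. Every allocation output by $\RandChore$ attains exactly this value, so each deterministic allocation in the support is UWM and the implemented fractional allocation has the maximal welfare $\sum_{j\in\bar{Q}} v(e_j)$, giving ex-ante UWM. Proposition~\ref{prop::implication-relation} then yields ex-ante PO, ex-post UWM, and (through ex-ante PO $\Rightarrow$ ex-post PO) ex-post PO in one stroke.

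For ex-ante EWM I would use an averaging bound. By Lemma~\ref{Lemma::GSP1} together with the vanishing contribution of $Q$, every agent's value in the implemented fractional allocation is exactly $\frac{1}{n} v_i(\bigcup_{q\in\bar{Q}} e_q)=\frac{1}{n}\sum_{j\in\bar{Q}} v(e_j)$, so the fractional allocation is equitable with egalitarian welfare $\frac{1}{n}\sum_{j\in\bar{Q}} v(e_j)$. Since for any allocation $\EW\le\frac{1}{n}\UW\le\frac{1}{n}\sum_{j\in\bar{Q}} v(e_j)$, this value is the maximum possible, establishing ex-ante EWM.

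The main obstacle is the ex-post $2$-approximation of EWM, which is a genuine load-balancing estimate on the round-robin step. Writing $w_t=|v(e_{l_t})|$ with $w_1\le\cdots\le w_k$, the cost of the agent placed at position $p$ is $L_p=\sum_{t\ge 0:\,tn+p\le k} w_{tn+p}$. I would bound $L_p$ by isolating its last term: for every item except the last, $w_{tn+p}$ is the smallest of the $n$ consecutive sizes $w_{tn+p},\ldots,w_{(t+1)n+p-1}$, so $w_{tn+p}\le\frac{1}{n}\sum_{s=tn+p}^{(t+1)n+p-1} w_s$; as these windows are disjoint consecutive blocks, summing gives $\sum_{t<r} w_{tn+p}\le\frac{1}{n}\sum_{s=1}^{k} w_s=\frac{1}{n}\sum_{j\in\bar{Q}}|v(e_j)|$, while the last item is at most $w_k=\max_j|v(e_j)|$. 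Hence $L_p\le\frac{1}{n}\sum_{j\in\bar{Q}}|v(e_j)|+\max_j|v(e_j)|$. Because the optimal egalitarian cost $C^*$ is at least both the average $\frac{1}{n}\sum_{j\in\bar{Q}}|v(e_j)|$ and the largest single chore $\max_j|v(e_j)|$, we obtain $L_p\le 2C^*$ for every $p$ and every allocation in the support, i.e.\ ex-post $2$-approximation of EWM. The point requiring care is the boundary of the round-robin (a partial final round, and agents receiving none or one $\bar{Q}$-item), which is exactly why separating off the last item---rather than applying the windowing inequality to all items---is necessary.
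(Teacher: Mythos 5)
Your proposal is correct. The ex-ante portion (UWM by item-wise optimality, EWM by the averaging bound $\EW \le \frac{1}{n}\UW$, then Proposition~\ref{prop::implication-relation} for ex-ante PO and the ex-post PO/UWM claims) coincides with the paper's argument. Where you genuinely diverge is the ex-post $2$-approximation of EWM, which is the heart of the proposition. The paper reduces to the identical-valuation instance on $\bigcup_{q\in\bar Q}e_q$, identifies the worst-off agent $d$ (the one receiving the last, largest-cost chore $e'$), and then proves $v_d(A_d^*\setminus\{e'\})\ge \OPT_E$ by contradiction: it invokes the already-established EQ1 property of the round-robin output (Proposition~\ref{prop::ex-post and ex-ante fairness of M}) together with ex-post UWM to derive $\OPT_U < n\cdot \OPT_E$, contradicting $\OPT_U\ge n\,\OPT_E$ for identical valuations; adding $v_d(e')\ge\OPT_E$ then gives $\EW(\mathbf{A}^*)\ge 2\,\OPT_E$. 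You instead give a self-contained Graham-style load-balancing estimate: every position's load is at most (average total cost) $+$ (largest single cost), via the disjoint-window inequality $w_{tn+p}\le\frac{1}{n}\sum_{s=tn+p}^{(t+1)n+p-1}w_s$ exploiting the sorted round-robin order, and both terms are lower bounds on the optimal egalitarian cost $C^*$. Both decompose the worst load as ``all but the last item'' plus ``the last item,'' but they bound the first part by different means: the paper piggybacks on its fairness machinery (shorter once EQ1 is in hand, and it isolates agent $d$ explicitly, which sets up the tightness construction in Proposition~\ref{prop::impo-ewm}), while your argument is more elementary, makes no appeal to EQ1 or to utilitarian-welfare comparisons, and bounds every agent's load uniformly rather than only the worst one. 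Your boundary handling (peeling off only the final item rather than windowing every item) is exactly the care the partial last round requires, so the argument goes through as stated.
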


Note that the $2$-approximation of optimal egalitarian welfare is almost the (ex-post) limitation
of $\RandChore$, especially when the number of agents is large.

\begin{proposition}\label{prop::impo-ewm}
Mechanism $\RandChore$ is not ex-post $(\frac{2n-1}{n} -\epsilon)$-approximation of
$\textnormal{EWM}$ for any $\epsilon>0$.
\end{proposition}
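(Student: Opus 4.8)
The plan is to exhibit a single instance on which \emph{every} deterministic allocation in the support of $\RandChore$ attains egalitarian welfare exactly $\tfrac{2n-1}{n}$ times the optimum, so that for any $\epsilon>0$ the mechanism cannot be ex-post $(\tfrac{2n-1}{n}-\epsilon)$-approximate. Since all utilities are negative, I first fix the convention that, writing $\OPT$ for the maximum egalitarian welfare $\max_{\mathbf B}\min_i v_i(B_i)<0$, a mechanism is an $\alpha$-approximation ($\alpha\ge 1$) of \textnormal{EWM} if the returned allocation $\mathbf A$ satisfies $\EW(\mathbf A)\ge \alpha\cdot\OPT$; equivalently, in the cost picture $c_i=-v_i$, the makespan $\max_i c_i(A_i)$ is at most $\alpha\,\OPT_{\mathrm{cost}}$. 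This is the standard Graham-type normalization and matches the $2$-approximation scale of Proposition~\ref{prop::bobw-efficiency}.

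The construction adapts Graham's tight list-scheduling example. I would take $n$ agents with identical $1$-restricted additive valuations over $n(n-1)+1$ chores: $n(n-1)$ chores of inherent value $-1$ and a single chore of inherent value $-n$. Since no agent ever reports $0$, we have $\bar{Q}=[m]$ and $Q=\emptyset$, so $\RandChore$ reduces to the round-robin of Step~\ref{ALG::GSP_CHORE-STEP-5}, and sorting by decreasing value places the $n(n-1)$ unit chores first and the $-n$ chore last.

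I would then trace the round-robin. The $n(n-1)$ unit chores occupy the first $n-1$ complete cycles, so each agent receives exactly $n-1$ of them (the within-cycle tie-breaking among equal-value chores is irrelevant to these counts). The $-n$ chore sits at position $n(n-1)+1\equiv 1\pmod n$ and is therefore assigned to the agent in position $\sigma(1)$, whose bundle value becomes $-(n-1)-n=-(2n-1)$, while every other agent has value $-(n-1)$. Hence $\EW=-(2n-1)$ \emph{for every} permutation $\sigma$, i.e.\ for every allocation in the support. For the optimum, isolating the $-n$ chore on one agent and spreading the $n(n-1)$ unit chores evenly over the remaining $n-1$ agents ($n$ apiece) gives every agent value $-n$; since the total value is $-n^2$ and hence the mean value is $-n$, no allocation can have $\min_i v_i(B_i)>-n$, so $\OPT=-n$.

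Combining the two computations, $\EW=-(2n-1)<-(2n-1)+\epsilon n=\big(\tfrac{2n-1}{n}-\epsilon\big)\OPT$ for every $\epsilon>0$, so no allocation in the support is a $(\tfrac{2n-1}{n}-\epsilon)$-approximation, which is the claim. The only places needing genuine care are the two counting/optimality verifications: confirming that the round-robin deposits precisely $n-1$ unit chores on each agent and the heavy chore on $\sigma(1)$ (a short indexing argument modulo $n$), and confirming $\OPT=-n$ via the mean-value upper bound; both are routine. I would also note that letting $n\to\infty$ drives $\tfrac{2n-1}{n}\to 2$, so the ex-post $2$-approximation of Proposition~\ref{prop::bobw-efficiency} is asymptotically tight.
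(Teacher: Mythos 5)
Your proposal is correct and follows essentially the same route as the paper: the identical Graham-style instance with $n(n-1)$ unit chores and one chore of value $-n$, the same observation that $Q=\emptyset$ reduces $\RandChore$ to round-robin giving some agent value $-(2n-1)$ while $\OPT_E=-n$, and the same ratio computation $\frac{2n-1}{n}$. The only cosmetic differences are that you verify the bound for every permutation $\sigma$ (the paper fixes a single $\sigma^*$, which already suffices to refute an ex-post guarantee) and you bound $\OPT_E$ by the mean-value argument rather than by noting that some agent must receive the $-n$ chore.
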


As for the running time of the proposed mechanism, although $\RandChore$ is a randomized mechanism,
randomness here does not affect the running time of execution. Note that the worst-case running
time of Step~\ref{ALG::GSP-CHORE-SETP-3} is $O(m)$ and of Step~\ref{ALG::GSP_CHORE-STEP-5} is $ O
(mn)$. Thus, the worst-case running time of $\RandChore$ is $O(mn)$.

We conclude this section by a discussion of possible extension of $\RandChore$. As demonstrated,
$\RandChore$ can guarantee EF1, EQ1, PO and strategyproofness simultaneously when agents have
restricted additive valuation functions. A natural question arises: in a broader preference domain,
such as additive valuation functions, can a strategyproof (in expectation) mechanism achieve EF1,
EQ1, and PO? Unfortunately, the answer is negative, even without the requirement of
strategyproofness. \citet{freemanEquitableAllocationsIndivisible2020b} have shown that for
indivisible chores, EF1, EQ1 and PO are incompatible for four agents with additive valuation
functions. For completeness, we also provide the example in Appendix~\ref{Appendix::Sec:chores}.
Note that in the example of \citet{freemanEquitableAllocationsIndivisible2020b}, valuation
functions of agents are indeed $2$-restricted additive, and consequently, the proposed mechanism
$\RandChore$ achieves the best that we can hope for in the $k$-restricted additive preference
domain.


\section{A randomized mechanism for allocating mixed items}
\label{Sec::MixedItems}

In this section, we consider allocation of mixed items to two agents, where an item can be a good
for an agent but a chore for another. For valuation functions, agents are assumed to have
\emph{M-restricted additive} valuations, which can be viewed as a generalization of restricted
additive functions to the setting of mixed items. Namely, each item $e_q $ has two inherent values
$\{ -c(e_q), v(e_q) \}$ with $c(e_q), v(e_q)>0$ and, if $e_q$ is a chore to (resp., a good for) an
agent, then his valuation of $e_q$ is $-c(e_q)$ (resp., $v(e_q)$). Otherwise, his valuation for
$e_q$ is equal to $0$. Note that $c(e_q)$ and $v(e_q)$ are not required to be identical. Recall
that type sets $\{ V_i \}_{ i \in [n]}$ are common knowledge, and hence for any $q\in [m]$,
inherent values $-c(e_q)$ and $v(e_q)$ are publicly known so that the reported valuation should be
$ \hat{v}_i ( e_q ) \in \{ -c(e_q), 0, v(e_q) \}$ for all $ i \in [n]$.

The main result of this section is a randomized SPIE mechanism with best of both worlds fairness
and efficiency for two agents. We propose mechanism $\RandMixed$ and formally introduce it below
(see Algorithm~\ref{ALG::mixed}). A simple description of $\RandMixed$ is as follows. Based on
reported profile $(\hat{v}_1, \hat{v} _2)$, partition $[m]$ into four parts $\{Q_0, Q_1, Q_2,
Q_3\}$. Assign items in $Q_0$ one-by-one to one of the two agents uniformly at random. For any $ i
\in \{1,2\}$, assign items $\bigcup_{q \in Q_i } e_q$ to agent $i$ and then assign items
$\bigcup_{q \in Q_3 } e_q$ to agents in a round-robin fashion based on a permutation $\sigma$
generated uniformly at random.

\begin{algorithm}[ht]\caption{$\RandMixed$}
\begin{algorithmic}[1]\label{ALG::mixed}
   \STATE Collect reported profile $(\hat{v}_1, \hat{v}_2 )$.
   \STATE Partition $[m] = Q_0 \cup Q_1 \cup Q_2 \cup Q_3 $ where $ Q_0 = \{ q \in [m] \mid \hat{v}_1
    (e_q ) = \hat{v}_2 (e_q) = 0 \}$, $Q_1 = \{ q \in [m] \mid \hat{v}_1 ( e_q ) > \hat{v}_2 (e_q) \}$,
    $Q_2 = \{ q \in [m] \mid \hat{v}_1 (e_q) < \hat{v}_2 (e_q) \}$ and $ Q_3 = \{ q \in [m]
    \mid \hat{v}_1 (e) = \hat{v}_2 (e) \neq 0  \}$. \label{two-players-mechanism-step-2}
\STATE For each $e_q$ with $ q \in Q_0 $, uniformly at random pick an agent and assign $e_q$ to that agent.
\STATE For $i =1, 2$, assign items $\bigcup_{ q \in Q_i} e_q $ to agent $i$.\label{two-players-mechanism-step-4}
\STATE Let $\sigma$ be a permutation of $\{1,2\}$ generated at random uniformly.
       Among unassigned items, assign the one with the largest inherent value to the two agents in a
       round-robin fashion based on permutation $\sigma$. \label{two-players-mechanism-step-5}
\end{algorithmic}
\end{algorithm}
	
\begin{theorem}\label{thm::mechanism-two-players}
Mechanism $\RandMixed $ is \textnormal{SPIE}, ex-ante \textnormal{PO}, \textnormal{UWM},
\textnormal{EF}, \textnormal{PROP} and ex-post \textnormal{PO}, \textnormal{UWM}, \textnormal{EF1}
and \textnormal{PROP1}.
\end{theorem}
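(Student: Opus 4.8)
The plan is to split Theorem~\ref{thm::mechanism-two-players} into four blocks and to reduce as much as possible to results already available. Two of the claims are essentially free: by Proposition~\ref{prop::implication-relation} (whose proof uses only Pareto/utilitarian comparisons and so applies verbatim to mixed items), it suffices to establish ex-ante \textnormal{UWM} to obtain ex-ante \textnormal{PO}, ex-post \textnormal{UWM}, and ex-post \textnormal{PO} at once; and by the additive-valuation remark at the end of the preliminaries (\textnormal{EF}$\Rightarrow$\textnormal{PROP}, \textnormal{EF1}$\Rightarrow$\textnormal{PROP1}) the two proportionality guarantees follow from the two envy guarantees. For efficiency itself, I would observe that every item is handed to an agent attaining $\max\{\hat v_1(e_q),\hat v_2(e_q)\}$: agent $1$ on $Q_1$, agent $2$ on $Q_2$, and either agent on $Q_0\cup Q_3$, where the two reported values coincide. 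Hence every deterministic allocation in the support is $\UW$-maximising, so the implemented fractional allocation is ex-ante \textnormal{UWM}, and the proposition finishes the efficiency block. The substantive work is therefore \textnormal{SPIE}, ex-ante \textnormal{EF}, and ex-post \textnormal{EF1}.

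For \textnormal{SPIE} I would fix agent $1$ (agent $2$ is symmetric) and fix $\hat v_2$. Since $\sigma$ is uniform on $\{1,2\}$, each item of $Q_3$, and likewise of $Q_0$, goes to each agent with probability exactly $1/2$, so by additivity agent $1$'s expected utility decomposes item-by-item. For a fixed $e_q$, the expected contribution to agent $1$, as a function of his admissible report $\hat v_1(e_q)\in\{-c(e_q),0,v(e_q)\}$, is $v_1(e_q)$ if $\hat v_1(e_q)>\hat v_2(e_q)$ (he keeps $e_q$), is $\tfrac12 v_1(e_q)$ if $\hat v_1(e_q)=\hat v_2(e_q)$ (tie, split), and is $0$ if $\hat v_1(e_q)<\hat v_2(e_q)$. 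I would then check, for each of the three possible true values, that truthful reporting maximises this contribution: when $v_1(e_q)=v(e_q)>0$ the truthful bid is the highest admissible one and wins the item whenever that is possible; when $v_1(e_q)=-c(e_q)<0$ the truthful bid is the lowest admissible one and sheds the item whenever possible; the neutral case is trivial. Summing over items yields \textnormal{SPIE}.

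Ex-ante \textnormal{EF} I would get by writing the implemented fractional allocation explicitly: $Q_1$ goes entirely to agent $1$, $Q_2$ entirely to agent $2$, and each item of $Q_0\cup Q_3$ is split halves. The $Q_0$ and $Q_3$ terms then cancel in agent $1$'s envy comparison, reducing ex-ante \textnormal{EF} to $\sum_{q\in Q_1}\hat v_1(e_q)\ge \sum_{q\in Q_2}\hat v_1(e_q)$, which the sign structure of \textnormal{M}-restricted values closes: for $q\in Q_1$ the strict inequality $\hat v_1(e_q)>\hat v_2(e_q)$ forces $\hat v_1(e_q)\in\{0,v(e_q)\}\ge 0$, and for $q\in Q_2$ it forces $\hat v_1(e_q)\in\{-c(e_q),0\}\le 0$, so the left sum is $\ge 0\ge$ the right sum. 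For ex-post \textnormal{EF1} the first move is the same decomposition: in any realisation agent $1$ holds all of $Q_1$ (non-negative to him) and part of $Q_3$, agent $2$ holds all of $Q_2$ (non-positive to agent $1$) and the rest of $Q_3$, and $Q_0$ is valued $0$ by both; hence agent $1$'s envy toward agent $2$ is bounded above by the envy restricted to $Q_3$, because the $Q_1$ and $Q_2$ contributions enter that comparison non-positively. So it suffices to show the round-robin restricted to $Q_3$, where the two agents value the items identically, is \textnormal{EF1}, and then fold the $Q_1,Q_2,Q_0$ slack back in (it only helps the envious agent).

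The main obstacle is precisely this last \textnormal{EF1} step for $Q_3$. A single value-sorted round-robin that interleaves the goods and chores of $Q_3$ need \emph{not} be \textnormal{EF1}: if one agent grabs a dominant good while the remaining chores split unevenly, no single removal repairs the envy (removing the good leaves the chore imbalance, removing a chore leaves the good). The argument must therefore exploit that the procedure balances goods and chores \emph{separately} — the two agents' $Q_3$-goods counts differ by at most one and their $Q_3$-chores counts differ by at most one — so that for the agent who is one good short, deleting the other agent's top good cancels the goods-gap, while for the agent carrying the extra chore, deleting that worst chore cancels the chores-gap. Making a single admissible deletion (remove one good from the richer bundle, \emph{or} one chore from the envier's bundle, but only one item total) simultaneously dominate the combined $Q_1/Q_3$ (resp.\ $Q_2/Q_3$) envy is the delicate bookkeeping; everything else is routine given Proposition~\ref{prop::implication-relation} and the additive \textnormal{EF}$\Rightarrow$\textnormal{PROP}, \textnormal{EF1}$\Rightarrow$\textnormal{PROP1} facts.
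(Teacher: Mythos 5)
Most of your proposal is sound and tracks the paper. Your efficiency block (ex-ante \textnormal{UWM} because every item goes to an agent attaining $\max\{\hat v_1(e_q),\hat v_2(e_q)\}$, then Proposition~\ref{prop::implication-relation}) is exactly the paper's Proposition~\ref{prop::two-players-efficiency}; your ex-ante \textnormal{EF} computation is the first half of Proposition~\ref{prop::two-players-fairness}; and your item-by-item \textnormal{SPIE} argument is a legitimate streamlining of the paper's route, which instead proves a one-item-correction lemma (Lemma~\ref{Lemma::mechanism-two-players-1}) by enumerating the cases of the two reports and then chains single-item corrections. Both rest on the expected-utility formula of Proposition~\ref{prop::EV-RR-MIX}, and your per-item optimization (win/tie/lose contributing $v_1(e_q)$, $\tfrac12 v_1(e_q)$, $0$) is valid because an item's classification into $Q_0,\dots,Q_3$ depends only on the two reports on that item, so the expected utility separates across items.

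The genuine problem is the step you flagged as ``delicate bookkeeping'': it cannot be closed, because the claim is false, and your own parenthetical ``but only one item total'' is precisely the reason. Let both agents truthfully report $v_1(e_1)=v_2(e_1)=10$ (a common good) and $v_1(e_q)=v_2(e_q)$ equal to $-1,-1,-10$ for $q=2,3,4$ (common chores). Then $Q_0=Q_1=Q_2=\emptyset$ and $Q_3=\{1,2,3,4\}$; Step~\ref{two-players-mechanism-step-5} orders the items $e_1,e_2,e_3,e_4$ and alternates, giving bundles $\{e_1,e_3\}$ (value $9$) and $\{e_2,e_4\}$ (value $-11$). Goods counts differ by one and chores counts differ by one, so your balance hypothesis holds, yet for the agent holding $\{e_2,e_4\}$ no single deletion restores the \textnormal{EF1} inequality: removing $e_1$ from the rival's bundle leaves $-11 < -1$, removing his own $e_4$ leaves $-1 < 9$, and the other two removals are worse. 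He is simultaneously one good short and one chore over, and one deletion cannot cancel both gaps. Note this is not merely a gap in your write-up: the paper's proof of Proposition~\ref{prop::two-players-fairness} disposes of this exact step by asserting it follows ``by arguments similar to'' the chores case (Proposition~\ref{prop::ex-post and ex-ante fairness of M}), but that argument is only valid when all items of $Q_3$ carry the same sign. With goods and chores mixed in $Q_3$, the single value-sorted round-robin of Step~\ref{two-players-mechanism-step-5} is not \textnormal{EF1}, so the ex-post \textnormal{EF1} claim of Theorem~\ref{thm::mechanism-two-players} (and the paper's derivation of ex-post \textnormal{PROP1}, which goes through \textnormal{EF1}) fails for $\RandMixed$ as written; a repair would require Step~\ref{two-players-mechanism-step-5} to allocate the chores and the goods of $Q_3$ in separate round-robin phases with reversed agent orders (a double round-robin), not in one interleaved pass.
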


In what follows, we split the proof of Theorem~\ref{thm::mechanism-two-players}, and for
simplicity, we use $\widetilde{\cM} ^ 2$ and $\RandMixed$ interchangeably in this section.

\begin{proposition}\label{prop::EV-RR-MIX}
Given reported profile $(\hat{v}_1, \hat{v}_2)$ and the corresponding partition $\{Q_i\}_{ i=0}^3$,
any agent $i\in \{1,2\}$ receives an expected utility $ \frac{1}{2} v_i (\bigcup_{ q \in Q _0 \cup
Q_3} e_q ) + v_i ( \bigcup_{ q \in Q_i } e_q) $.
\end{proposition}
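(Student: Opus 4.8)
The plan is to compute the expected utility of each agent by summing the contributions from the three allocation steps (Steps 3, 4, and 5) of $\RandMixed$, using linearity of expectation together with the additivity of $v_i$. The expected value of agent $i$ decomposes cleanly because the items are partitioned into the disjoint sets $Q_0, Q_1, Q_2, Q_3$, and each set is handled by a distinct step of the mechanism. So I would fix an agent $i \in \{1,2\}$ and account for each part separately.

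First I would handle $Q_0$: in Step~3, each item $e_q$ with $q \in Q_0$ is assigned to one of the two agents uniformly at random, so agent $i$ receives $e_q$ with probability $\frac{1}{2}$. However, by definition of $Q_0$ we have $\hat{v}_1(e_q) = \hat{v}_2(e_q) = 0$; since valuations are $M$-restricted additive and common knowledge constrains reports, the true value is also $v_i(e_q) = 0$ for $q \in Q_0$, so the contribution is $\frac{1}{2} v_i(\bigcup_{q \in Q_0} e_q) = 0$. This matches the $\frac{1}{2} v_i(\bigcup_{q \in Q_0} e_q)$ term in the claimed formula (which is simply zero). Next, Step~4 deterministically assigns $\bigcup_{q \in Q_i} e_q$ to agent $i$, contributing exactly $v_i(\bigcup_{q \in Q_i} e_q)$ with certainty, which accounts for the $v_i(\bigcup_{q \in Q_i} e_q)$ term.

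The remaining and most substantive part is $Q_3$, handled by the round-robin assignment in Step~5. Here I would invoke the same averaging argument used in Lemma~\ref{Lemma::GSP1}: since $\sigma$ is a uniformly random permutation of $\{1,2\}$, agent $i$ occupies each of the two positions with probability $\frac{1}{2}$, and over the two equally-likely orderings the alternating assignment of the sorted items splits the total value symmetrically. By linearity of expectation and additivity of $v_i$, the expected value agent $i$ derives from $\bigcup_{q \in Q_3} e_q$ equals $\frac{1}{2} v_i(\bigcup_{q \in Q_3} e_q)$, contributing the second half of the $\frac{1}{2} v_i(\bigcup_{q \in Q_0 \cup Q_3} e_q)$ term. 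Summing the three contributions and combining the $Q_0$ and $Q_3$ halves (valid since $Q_0$ and $Q_3$ are disjoint and $v_i$ is additive) yields exactly $\frac{1}{2} v_i(\bigcup_{q \in Q_0 \cup Q_3} e_q) + v_i(\bigcup_{q \in Q_i} e_q)$.

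The main obstacle, such as it is, lies in the $Q_3$ computation: one must verify carefully that the round-robin split over a random permutation yields exactly half the total value \emph{in expectation} regardless of how ties are broken among equal-inherent-value items. The tie-breaking rule in Step~5 fixes a deterministic order, but since agent $i$ is equally likely to be first or second in $\sigma$, swapping the roles of the two agents exchanges which items each receives, and averaging over these two symmetric outcomes gives each agent exactly half of $v_i(\bigcup_{q \in Q_3} e_q)$. This mirrors the $n$-agent argument in Lemma~\ref{Lemma::GSP1} specialized to $n=2$, so I would cite that lemma's technique rather than redo the indexing. Everything else is a direct application of linearity of expectation and additivity.
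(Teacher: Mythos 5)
Your overall route is the same as the paper's: decompose agent $i$'s expected utility according to the steps of $\RandMixed$ — probability $\frac{1}{2}$ per item for $Q_0$, certainty for $Q_i$, nothing from $Q_{3-i}$, and the Lemma~\ref{Lemma::GSP1}-style averaging over the two equally likely orderings for $Q_3$ — then sum using additivity and linearity of expectation. That derivation is sound and yields exactly the claimed formula.

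However, one assertion you make along the way is false and should be deleted: the claim that for $q \in Q_0$ the \emph{true} value satisfies $v_i(e_q) = 0$, so that the $Q_0$ term ``is simply zero.'' The partition $\{Q_t\}_{t=0}^3$ is built from the \emph{reported} profile; common knowledge of the type space only constrains reports to lie in $\{-c(e_q), 0, v(e_q)\}$, it does not force them to be truthful. An agent whose true value is $v(e_q)$ or $-c(e_q)$ may report $0$, placing $q$ in $Q_0$ while $v_i(e_q) \neq 0$. This is not a pedantic point: the entire purpose of Proposition~\ref{prop::EV-RR-MIX} is to be applied under misreports in the proof of Lemma~\ref{Lemma::mechanism-two-players-1} (e.g., Subcases 1.2 and 3.5, where the deviation places $e_k$ into the set $Q^k_0$ and the resulting term $\frac{1}{2}v_i(e_k)$ is nonzero and is precisely what determines $\Delta$). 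Fortunately, your formula does not depend on the false claim — the probability-$\frac{1}{2}$ argument alone gives the contribution $\frac{1}{2} v_i\bigl(\bigcup_{q \in Q_0} e_q\bigr)$ for arbitrary $v_i$ — so striking that sentence leaves a correct proof that coincides with the paper's.
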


\begin{proof}
Fix $i\in \{1,2\}$. As items $\bigcup _{ q \in Q_{3-i}} e_q $ are allocated to agent $3-i$, the
utility of agent $ i $ indeed comes from the assignment of $\bigcup_{ q \in Q _0 \cup Q_i \cup Q_3
} e_q$. Note that $e_q$ with $q \in Q_  0$ is assigned to agent $i $ with probability $
\frac{1}{2}$, and $e_q$ with $q \in Q_i$ is assigned to agent $i $ with probability $1$. As for
items $ \bigcup  _{ q \in Q_3} e_q $, similar to the analysis in the proof of
Lemma~\ref{Lemma::GSP1}, this part results in an expected utility $\frac{1}{2} v_i ( \bigcup _{q
\in Q_3} e_q)$ for agent $i$. Therefore, agent $i$'s expected utility derived from the assignment
is $ \frac{1}{2} v_i (\bigcup_{ q \in Q _0 \cup Q_3} e_q ) + v_i ( \bigcup_{ q \in Q_i } e_q) $.
\end{proof}
	
\begin{lemma}\label{Lemma::mechanism-two-players-1}
Given agent $i$ and reported profile $(\hat{v} ^k_i , \hat{v}_{ 3 - i })$ with $ \hat{v} ^ k_i (
e_k ) \neq v_i ( e_k )$ for some $k\in [m]$, construct another reported profile $(\hat{v}_i ,
\hat{v}_{ 3 - i })$ where $ \hat{v}_i( e_k) = v_i ( e_k )$ and $\hat{v}_i ( e ) = \hat{v} ^k_i(e)$
for all $ e \neq e_k$. Then, the expected utility of agent $i$ under reported profile $(\hat{v}_i,
\hat{v}_{3-i})$ is at least that under $(\hat{v}^k_i, \hat{v}_{3-i})$.
\end{lemma}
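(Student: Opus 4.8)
The plan is to reduce the entire comparison to the single item $e_k$ by invoking the closed form for expected utility from Proposition~\ref{prop::EV-RR-MIX}. Since $v_i$ is additive, agent $i$'s expected utility under any reported profile decomposes into a sum of per-item contributions: item $e_q$ contributes $v_i(e_q)$ when $q \in Q_i$, contributes $\frac{1}{2} v_i(e_q)$ when $q \in Q_0 \cup Q_3$, and contributes $0$ when $q \in Q_{3-i}$. The part into which each item falls is determined solely by comparing $\hat{v}_i(e_q)$ with $\hat{v}_{3-i}(e_q)$. Because the profiles $(\hat{v}_i, \hat{v}_{3-i})$ and $(\hat{v}^k_i, \hat{v}_{3-i})$ agree on every item other than $e_k$ and share the same $\hat{v}_{3-i}$, every item $e_q$ with $q \neq k$ lands in the same part under both profiles and hence contributes identically. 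Therefore it suffices to compare the contribution of $e_k$ alone.

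First I would record how the coefficient of $e_k$ depends on agent $i$'s report. Writing $w = \hat{v}_{3-i}(e_k)$ for the fixed value reported by the other agent and $x$ for agent $i$'s report, the partition rule in Step~\ref{two-players-mechanism-step-2} places $e_k$ in $Q_i$ when $x > w$ (coefficient $1$), in $Q_0 \cup Q_3$ when $x = w$ (coefficient $\frac{1}{2}$), and in $Q_{3-i}$ when $x < w$ (coefficient $0$); the distinction between $Q_0$ and $Q_3$ is immaterial since both yield coefficient $\frac{1}{2}$. Thus the coefficient is a nondecreasing step function of $x$. Crucially, the contribution of $e_k$ to agent $i$'s expected utility equals this coefficient times the \emph{true} value $v_i(e_k)$ (not the reported one), since expected utility is evaluated at agent $i$'s true type.

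The remaining step is a short case analysis on the sign of $v_i(e_k)$, exploiting that the admissible reports lie in $\{-c(e_k), 0, v(e_k)\}$ with $-c(e_k) < 0 < v(e_k)$. If $e_k$ is a good for agent $i$, then $v_i(e_k) = v(e_k)$ is the largest admissible report, so the truthful value $x = v(e_k)$ attains the maximal coefficient; as the true value is positive, this maximizes the contribution, giving the inequality against any misreport $\hat{v}^k_i(e_k) < v(e_k)$. Symmetrically, if $e_k$ is a chore, then $v_i(e_k) = -c(e_k)$ is the smallest admissible report, so the truthful value attains the minimal coefficient, and multiplying the minimal coefficient by the negative true value yields the largest (least negative) contribution. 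If $v_i(e_k) = 0$, the contribution is $0$ regardless of the part, and equality holds. In every case the contribution of $e_k$ under the truthful correction weakly exceeds that under the misreport, which establishes the lemma.

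The main point requiring care, rather than a genuine obstacle, is pairing the monotonicity of the coefficient in $x$ with the observation that a nonzero true value is always extremal within the admissible report set (maximal for a good, minimal for a chore); once these two facts are aligned, the inequality follows by sign-chasing, and the zero case is immediate. I would note that the present lemma is precisely the single-coordinate deviation step, so iterating it coordinate by coordinate converts any misreport of agent $i$ into truthful reporting without decreasing his expected utility, which is what will ultimately yield SPIE of $\RandMixed$.
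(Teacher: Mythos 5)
Your proof is correct, and it takes a genuinely different route from the paper's at the decisive step. Both arguments rest on the same foundation: Proposition~\ref{prop::EV-RR-MIX} plus additivity reduce the comparison to the single item $e_k$, since partition membership of each $e_q$ depends only on the pair $(\hat{v}_i(e_q), \hat{v}_{3-i}(e_q))$, which the two profiles share for all $q \neq k$. Where you diverge is in handling $e_k$ itself. The paper enumerates every combination of true value, misreported value, and the opponent's report --- thirteen subcases in all (Cases 1--3 with their subcases) --- and in each one tracks exactly how the sets $\{Q_t\}$ and $\{Q_t^k\}$ differ before computing $\Delta$ explicitly. You instead isolate two structural facts: (i) the coefficient of $e_k$ in agent $i$'s expected utility (namely $1$, $\tfrac{1}{2}$, or $0$ according as the report exceeds, equals, or falls below $\hat{v}_{3-i}(e_k)$) is a nondecreasing function of the report, and (ii) in the M-restricted domain a nonzero true value is always extremal in the admissible report set $\{-c(e_k), 0, v(e_k)\}$ --- maximal for a good, minimal for a chore --- while a zero true value makes the contribution vanish identically. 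Sign-chasing then closes all cases at once. Your version is shorter, less error-prone, and makes explicit the reason the mechanism is truthful on this domain (extremality of truthful reports), which the paper's enumeration verifies but never articulates; the paper's version, by contrast, yields the exact value of $\Delta$ in each configuration, information that is not needed for the lemma but makes each subcase independently checkable. You also correctly flag the one point where care is needed --- the coefficient comes from the reported partition but multiplies the \emph{true} value $v_i(e_k)$ --- which is exactly the point on which a careless version of this argument would founder.
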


Intuitively, Lemma~\ref{Lemma::mechanism-two-players-1} states that from any non-truthful reported
profile, correcting the reported value of a single item of agent $i$ does not decrease the expected
utility of agent $i$. Then, for any reported valuation $\hat{v}_i \neq v_i $, one can start from
$\hat{v}_i$ and reach $v_i $ by a sequence of corrections of the reported values for a single item,
without decreasing the expected utility.

\begin{proposition}
Mechanism $\RandMixed$ is \textnormal{SPIE}.
\end{proposition}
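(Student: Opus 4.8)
The plan is to prove SPIE for $\RandMixed$ by reducing it to Lemma~\ref{Lemma::mechanism-two-players-1}, exactly as the analogous argument for $\RandChore$ reduced group strategyproofness to its single-correction lemma. Fix an agent $i \in \{1,2\}$, a true type $v_i$, a misreport $\hat{v}_i \neq v_i$, and an arbitrary reported profile $\hat{v}_{3-i}$ for the other agent. I want to show
$$
\mathbb{E}_{\mathbf{A}\thicksim \widetilde{\cM}^2(v_i, \hat{v}_{3-i})}[u_i(\mathbf{A}, v_i)] \geq \mathbb{E}_{\mathbf{A}\thicksim \widetilde{\cM}^2(\hat{v}_i, \hat{v}_{3-i})}[u_i(\mathbf{A}, v_i)].
$$
The idea is to interpolate between $\hat{v}_i$ and $v_i$ one item at a time. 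Let $e_{k_1}, \ldots, e_{k_r}$ be the items on which $\hat{v}_i$ and $v_i$ disagree. Define a chain of reports $\hat{v}_i = w^0, w^1, \ldots, w^r = v_i$, where $w^t$ agrees with $v_i$ on $e_{k_1}, \ldots, e_{k_t}$ and with $\hat{v}_i$ elsewhere. Each consecutive pair $(w^{t-1}, w^t)$ differs in the reported value of exactly one item, so Lemma~\ref{Lemma::mechanism-two-players-1} (applied with $w^{t-1}$ playing the role of the misreport $\hat{v}^k_i$ and $w^t$ playing the role of the corrected report $\hat{v}_i$) gives that agent $i$'s expected utility, measured against the fixed true type $v_i$, does not decrease when passing from $w^{t-1}$ to $w^t$. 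Chaining these $r$ inequalities yields that the truthful report $v_i$ weakly dominates $\hat{v}_i$, which is precisely Definition~\ref{Definition::SPIE}.

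The only subtlety to flag in writing this out is that Lemma~\ref{Lemma::mechanism-two-players-1} is stated for a single correction toward the \emph{true} value $v_i(e_k)$, and one must check that it applies at every step of the chain rather than only at the first. This is clean because the lemma is really a statement about the mechanism's behaviour: for any fixed report of the opponent, correcting the coordinate $e_k$ of agent $i$'s report to $v_i(e_k)$ weakly raises agent $i$'s expected utility evaluated at $v_i$, irrespective of what agent $i$ reports on the other items. Since each $w^{t-1} \to w^t$ step corrects one coordinate to its true value and leaves the remaining coordinates (true or not) untouched, the hypothesis of the lemma is met at every step. I expect the main obstacle to have been absorbed already into Lemma~\ref{Lemma::mechanism-two-players-1} itself; given that lemma, the SPIE proof is a short telescoping argument with no further case analysis.

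I would therefore write the proof in three sentences: introduce the chain of single-item corrections, invoke Lemma~\ref{Lemma::mechanism-two-players-1} at each link, and compose the inequalities to obtain the SPIE condition of Definition~\ref{Definition::SPIE}. No appeal to the explicit partition $\{Q_0, Q_1, Q_2, Q_3\}$ or to the expected-utility formula of Proposition~\ref{prop::EV-RR-MIX} is needed at this level, since all the mechanism-specific reasoning is encapsulated in the single-correction lemma.
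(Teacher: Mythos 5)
Your proposal is correct and follows essentially the same route as the paper: the paper's proof constructs exactly the same chain of single-item corrections $\hat{v}_i^0 = \hat{v}_i, \hat{v}_i^1, \ldots, \hat{v}_i^r = v_i$, invokes Lemma~\ref{Lemma::mechanism-two-players-1} at each link, and telescopes the resulting inequalities to obtain the SPIE condition. Your observation that the lemma applies at every step of the chain (because its hypothesis places no restriction on agent $i$'s reports for the other items) is precisely the point that makes the paper's argument go through.
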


\begin{proof}
Fix $ i\in \{1,2\}$ and $ \hat{v}_{ 3 - i}$. Let $\hat{v}_i $ be arbitrary reported valuations that
differ from the true valuations $ v_i $ on a set $\{ e_{p_1}, \ldots, e_{p_r }\}$ of $ r \in
\mathbb{N}^+ $ items. First let $ \hat{v}_i ^ 0 = \hat{ v }_i$ and then for every $ l \in [r]$,
construct valuation function $ \hat{v}_i ^ l $ as follows:
$$
\hat{v}_i^l\left(e\right) =
\begin{cases}
{v}_i(e_{p_l}) & \textnormal{ if } e = e_{ p_l }, \\
\hat{v}_i^{l-1}(e) & \textnormal{ otherwise}.
\end{cases}
$$
It is not hard to verify that valuation functions $ \hat{v}_i ^ r$ is identical to $  v_i $. Then,
according to Lemma~\ref{Lemma::mechanism-two-players-1}, we have
$$
\mathbb{ E }_{ \mathbf{A} \sim \widetilde{\cM}^2( \hat{v}^r_i, \hat{v}_{3-i})} [ u_i (\mathbf{A}, v_i )] \geq
\mathbb{ E }_{ \mathbf{A} \sim \widetilde{\cM}^2( \hat{v}^{r-1}_i, \hat{v}_{3 - i})} [ u_i (\mathbf{A}, v_i )] \geq \cdots \geq
\mathbb{ E }_{ \mathbf{A} \sim \widetilde{\cM}^2( \hat{v}^0_1, \hat{v}_{3-i})} [ u_i (\mathbf{A}, v_i )].
$$
Since $\hat{v}_i^r = v_i$ and $\hat{v}_i^0 = \hat{v}_i$, the above inequalities become
$$
\mathbb{E}_{ \mathbf{A} \sim \widetilde{\cM}^2( v_i, \hat{v}_{3 - i })} [ u_i ( \mathbf{A}, v_i )]
\geq \mathbb{ E }_{ \mathbf{A} \sim \widetilde{\cM}^2(\hat{v}_i, \hat{v}_{ 3 - i })} [ u_i (\mathbf{A}, v_i )],
$$
which completes the proof.
\end{proof}

In the following, we show that $\RandMixed$ outputs allocations that guarantee ex-ante and ex-post
fairness and efficiency.

\begin{proposition}\label{prop::two-players-fairness}
Mechanism $\RandMixed$ is ex-ante \textnormal{EF}, \textnormal{PROP} and ex-post \textnormal{EF1},
\textnormal{PROP1}.
\end{proposition}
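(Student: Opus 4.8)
Since $\RandMixed$ is \textnormal{SPIE}, at equilibrium every agent reports truthfully, so it suffices to verify fairness under the true profile; throughout I take $\hat v_i=v_i$ and, for $S\subseteq[m]$, write $E_S=\bigcup_{q\in S}e_q$. The plan begins by recording the sign structure forced by $M$-restricted additivity together with the partition rule of Step~\ref{two-players-mechanism-step-2}. Because every value lies in $\{-c(e_q),0,v(e_q)\}$ with $-c(e_q)<0<v(e_q)$, for $q\in Q_1$ the strict inequality $v_1(e_q)>v_2(e_q)$ forces $v_1(e_q)\in\{0,v(e_q)\}$ and $v_2(e_q)\in\{-c(e_q),0\}$, i.e.\ $v_1(e_q)\ge0\ge v_2(e_q)$; symmetrically $v_2(e_q)\ge0\ge v_1(e_q)$ for $q\in Q_2$; while $v_1(e_q)=v_2(e_q)=0$ for $q\in Q_0$ and $v_1(e_q)=v_2(e_q)\ne0$ for $q\in Q_3$. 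In particular $v_i(E_{Q_i})\ge0\ge v_i(E_{Q_j})$ for $\{i,j\}=\{1,2\}$, which is the bookkeeping that drives everything.

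For ex-ante fairness I would invoke Proposition~\ref{prop::EV-RR-MIX}: agent $i$'s own fractional value is $\frac{1}{2}v_i(E_{Q_0\cup Q_3})+v_i(E_{Q_i})$, and since agent $j$ fractionally holds one half of each item of $Q_0\cup Q_3$ together with all of $Q_j$, agent $i$ values $j$'s fractional bundle at $\frac{1}{2}v_i(E_{Q_0\cup Q_3})+v_i(E_{Q_j})$. Their difference is $v_i(E_{Q_i})-v_i(E_{Q_j})\ge0$, giving ex-ante \textnormal{EF}; ex-ante \textnormal{PROP} is then immediate, either from $v_i(A_i)\ge\frac{1}{2}v_i(E)$ for two agents or from the fact that \textnormal{EF} implies \textnormal{PROP} under additive valuations.

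For ex-post \textnormal{EF1} I would fix a realization (a deterministic split of $Q_0$ and a permutation $\sigma$) and decompose $A_1=S_0^1\cup E_{Q_1}\cup R_1$ and $A_2=S_0^2\cup E_{Q_2}\cup R_2$, where $R_1,R_2$ are the round-robin shares of $Q_3$. Since $v_i$ vanishes on $Q_0$, the $Q_0$ items are irrelevant to either agent's envy, and $v_i(E_{Q_i})\ge0\ge v_i(E_{Q_j})$ shows the $Q_1,Q_2$ parts only help the envying agent. Thus everything reduces to the behaviour on $Q_3$, on which the two agents share a common valuation: I would prove that the round-robin there is \textnormal{EF1}, yielding an item $e$ (a chore removed from the envying agent's own share, or a good removed from the other's) that cancels the envy on $Q_3$, and then show the \emph{same} $e$ certifies \textnormal{EF1} of the full allocation. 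Concretely, if $e\in R_1$ is a chore then $v_1(A_1\setminus\{e\})\ge v_1(R_1\setminus\{e\})\ge v_1(R_2)\ge v_1(E_{Q_2})+v_1(R_2)=v_1(A_2)$, and the good case as well as agent $2$'s envy are symmetric. Ex-post \textnormal{PROP1} then follows since \textnormal{EF1} implies \textnormal{PROP1} for additive valuations.

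The main obstacle is the \textnormal{EF1} guarantee of the round-robin on $Q_3$. When $Q_3$ consists only of common goods or only of common chores, a single alternating pass that always awards the highest remaining value is cleanly \textnormal{EF1}: pairing consecutive picks $w_{2k-1}\ge w_{2k}$ bounds the first picker's advantage, and the single extremal item (the lone extra good, or the worst chore) is the one removed otherwise. The delicate case is when $Q_3$ mixes common goods and common chores, since then the first picker can accumulate both the best good and a moderate chore while the second is left with a weak good and the worst chore, so one must verify that the picking order keeps the two shares within one item of each other \emph{for each agent simultaneously} and pinpoint, for each direction of envy, the single good or chore whose removal restores the inequality. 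Carrying out this balancing and confirming it meshes with the sign bookkeeping of the first paragraph is the technical heart of the argument.
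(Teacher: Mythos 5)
Your ex-ante argument and your ex-post reduction are essentially the paper's own. The paper likewise applies Proposition~\ref{prop::EV-RR-MIX} to get $v_i(A'_i)=\frac{1}{2}v_i\bigl(\bigcup_{q\in Q_0\cup Q_3}e_q\bigr)+v_i\bigl(\bigcup_{q\in Q_i}e_q\bigr)$, computes agent $i$'s value of the opponent's fractional bundle the same way, and deduces EF and PROP from $v_i\bigl(\bigcup_{q\in Q_i}e_q\bigr)\ge 0\ge v_i\bigl(\bigcup_{q\in Q_{3-i}}e_q\bigr)$; ex post it likewise discards $Q_0$, uses the same sign inequality to handle the $Q_1,Q_2$ parts, and reduces EF1 of the realized allocation to EF1 of the round-robin allocation of $\bigcup_{q\in Q_3}e_q$.

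The problem is that your proof stops at exactly the step that carries the content. EF1 of the round-robin on $Q_3$ is argued only in the two unmixed cases (all common goods or all common chores, via the consecutive-pairs argument, which is indeed how the paper argues in Proposition~\ref{prop::ex-post and ex-ante fairness of M} and then cites here); for the case where $Q_3$ mixes common goods with common chores you only describe what ``one must verify'' and call it the technical heart. That is a genuine gap, and it cannot be closed along the route you propose, because the statement you need is false in the mixed case. Consider the instance with $E=\{e_1,e_2\}$, $v_1(e_1)=v_2(e_1)=v(e_1)=10$ and $v_1(e_2)=v_2(e_2)=-c(e_2)=-10$, reported truthfully, so that $Q_0=Q_1=Q_2=\emptyset$ and $Q_3=\{1,2\}$. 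Step~\ref{two-players-mechanism-step-5} gives $e_1$ to the agent who is first in $\sigma$ and $e_2$ to the other, so the latter agent values his own bundle at $-10$ and the opponent's at $10$, while removing any single item ($e_1$ from the envied bundle or $e_2$ from his own) shifts the comparison by only $10$; hence there is no $e$ with $v_i(A_i\setminus\{e\})\ge v_i(A_j\setminus\{e\})$, and the realized allocation is not EF1 (it is still PROP1, so only the EF1 half breaks). Your instinct that the mixed case is delicate was therefore correct, but no ``balancing'' verification can rescue it: a complete argument would have to treat the mixed-sign composition of $Q_3$ separately, and the counterexample shows the ex-post EF1 claim itself fails there --- a case that the paper's one-line appeal to the chores argument does not cover either.
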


\begin{proposition}\label{prop::two-players-efficiency}
Mechanism $\RandMixed$ is ex-ante and ex-post \textnormal{PO} and \textnormal{UWM}.
\end{proposition}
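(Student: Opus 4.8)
The plan is to reduce all four assertions to the single claim that $\RandMixed$ is ex-ante \textnormal{UWM}. Indeed, Proposition~\ref{prop::implication-relation} already guarantees that an ex-ante UWM randomized allocation is automatically ex-ante PO and ex-post UWM, and that an ex-ante PO randomized allocation is ex-post PO. Hence, once ex-ante UWM is established, ex-ante PO, ex-post UWM, and ex-post PO follow immediately by chaining those implications. Since the mechanism is SPIE, I evaluate utilitarian welfare under truthful reporting, so the reported profile coincides with the true types $(v_1, v_2)$.

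First I would pin down the maximum possible utilitarian welfare. Because utilitarian welfare is additive across items and each item is allocated independently, for any fractional allocation $\mathbf{A} = (a_{q,i})$ we have $\UW(\mathbf{A}) = \sum_{q \in [m]} \sum_{i} a_{q,i} v_i(e_q) \le \sum_{q \in [m]} \max_{i \in \{1,2\}} v_i(e_q)$, using $\sum_i a_{q,i} = 1$ and $a_{q,i} \ge 0$. Equality holds precisely when every item places all its weight on an agent attaining the per-item maximum $\max_i v_i(e_q)$. Thus the maximum utilitarian welfare equals $\sum_{q \in [m]} \max_i v_i(e_q)$, and an allocation is UWM if and only if it assigns each item entirely to (one of) its value-maximizing agent(s).

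Next I would verify, block by block according to the partition of Step~\ref{two-players-mechanism-step-2}, that every deterministic allocation in the support of $\RandMixed$ routes each item to a maximizer. For $q \in Q_1$ we have $v_1(e_q) > v_2(e_q)$ and the item goes to agent~$1$, the unique maximizer; the case $q \in Q_2$ is symmetric via Step~\ref{two-players-mechanism-step-4}. For $q \in Q_0$ both agents value the item at $0$, so both are maximizers and the uniformly random assignment always lands on a maximizer. For $q \in Q_3$ both agents share the same nonzero value $v_1(e_q) = v_2(e_q)$, so both are maximizers and the round-robin assignment in Step~\ref{two-players-mechanism-step-5} lands on a maximizer regardless of the realized permutation $\sigma$. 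Consequently every realized deterministic allocation attains $\sum_q \max_i v_i(e_q)$ and is UWM, so $\RandMixed$ is already ex-post UWM; moreover, the implemented fractional allocation is a convex combination of these equal-welfare UWM allocations, hence also attains $\sum_q \max_i v_i(e_q)$, giving ex-ante UWM.

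I do not expect a genuine obstacle here; the only point requiring care is the randomized blocks $Q_0$ and $Q_3$, where one must observe that randomness never sends an item to a non-maximizer precisely because the two agents are tied there (both value the item at $0$ on $Q_0$, and equally on $Q_3$). With ex-ante UWM in hand, I would close the argument by invoking Proposition~\ref{prop::implication-relation} to obtain ex-ante PO (and ex-post UWM), and then ex-post PO from ex-ante PO, completing the proof of both efficiency and Pareto-optimality guarantees.
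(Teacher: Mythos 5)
Your proposal is correct and follows essentially the same route as the paper: reduce everything to ex-ante UWM via Proposition~\ref{prop::implication-relation}, then argue block by block over the partition $\{Q_t\}_{t=0}^3$ that every item is assigned to an agent attaining the per-item maximum value (agents are tied on $Q_0 \cup Q_3$, and $Q_1, Q_2$ go to the strictly preferring agent). The only cosmetic difference is that you verify ex-post UWM directly and obtain ex-ante UWM by convexity, whereas the paper establishes ex-ante UWM first and recovers the ex-post guarantees from Proposition~\ref{prop::implication-relation}; both orderings are sound.
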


Since randomness does not affect the running time of execution, the worst-case running time of
$\RandMixed$ is $O(m)$. One may observe that mechanism $\RandMixed$ does not provide good
performance guarantee regarding the notion of equitability. We remark that this is due to the fact
that (relaxed) equitability and PO is incompatible in the mixed items setting even without the
requirement of strategyproofness. For example, consider the following instance of two agents and
two mixed items, in which Agent 1 values each item at $1$ and agent 2 values each item at $-1$. The
PO allocation assigns both items to agent 1, violating EQ and EQ1.


\section{Conclusions}
\label{sec:conclusions}

In this paper, we have studied allocation of indivisible items from the mechanism design
perspective. We have focused on the settings where (i) all items are chores and (ii) an item can be
a good for one agent and a chore for another. If randomization is not allowed, we have showed that
strategyproofness, fairness and efficiency are incompatible, and in particular, no deterministic
mechanism can be SP, PO and EQ1 simultaneously, even when agents' valuations are binary additive.
On the other hand, if randomization is allowed, we have proposed a GSPIE mechanism that guarantees
ex-ante EF, EQ and ex-post EF1, EQ1 and ex-ante UWM and EWM, when all items are chores and agents'
valuations are 1-restricted additive. We have also studied the model of mixed items and designed a
randomized SPIE mechanism with best of both worlds fairness and efficiency for two agents with
M-restricted additive valuations.

Our findings have significant implications for both the theoretical understanding and practical
implementation of fair and efficient allocation mechanisms. Our constructive proofs and analyses
provide a solid foundation for designing strategyproof mechanisms in complex allocation settings,
offering insights into the mechanisms’ ability to align agents’ incentives with truthful reporting.
Moreover, the proposed mechanisms offer a novel approach to overcoming the inherent challenges of
strategyproof mechanism design in the context of indivisible items allocation, showcasing the
potential of randomised mechanisms in achieving strategyproofness, fairness and efficiency
simultaneously. In practice, our proposed mechanisms can be implemented easily and efficiently in
various real-world scenarios where agents' preferences are private but fair and efficient resource
allocation is crucial.

Looking forward, there are several interesting open questions that deserve investigation. Firstly,
it remains to be seen whether the idea of $\RandMixed$ can be generalized to accommodate three or
more agents for mixed items. The second open question is if it is possible to achieve
strategyproofness in expectation and best of both worlds fairness and efficiency guarantees for
goods. Additionally, note that while Theorem~\ref{thm:: impossbility-deterministic-EQ1} indicates
an impossibility result regarding the notion of EQ1, it remains unknown whether deterministic
mechanisms can achieve strategyproofness, Pareto optimality, and other fairness criteria such as
EF1 or PROP1.

\bibliographystyle{apa}
\bibliography{mylibrary}

\begin{thebibliography}{}

\bibitem[\protect\astroncite{Abdulkadiroğlu and
  Sönmez}{2003}]{10.1257/000282803322157061}
Abdulkadiroğlu, A. and Sönmez, T. (2003).
\newblock School choice: A mechanism design approach.
\newblock {\em Am. Econ. Rev.}, 93(3):729--747.

\bibitem[\protect\astroncite{Akrami et~al.}{2023}]{akrami2023randomized}
Akrami, H., Mehlhorn, K., Seddighin, M., and Shahkarami, G. (2023).
\newblock Randomized and deterministic maximin-share approximations for
  fractionally subadditive valuations.
\newblock In {\em Thirty-seventh Conference on Neural Information Processing
  Systems}.

\bibitem[\protect\astroncite{Akrami et~al.}{2022}]{DBLP:conf/ijcai/AkramiRS22}
Akrami, H., Rezvan, R., and Seddighin, M. (2022).
\newblock An {EF2X} allocation protocol for restricted additive valuations.
\newblock In Raedt, L.~D., editor, {\em Proceedings of the Thirty-First
  International Joint Conference on Artificial Intelligence, {IJCAI} 2022,
  Vienna, Austria, 23-29 July 2022}, pages 17--23. ijcai.org.

\bibitem[\protect\astroncite{Amanatidis
  et~al.}{2017}]{amanatidisTruthfulAllocationMechanisms2017}
Amanatidis, G., Birmpas, G., Christodoulou, G., and Markakis, E. (2017).
\newblock Truthful {{Allocation Mechanisms Without Payments}}:
  {{Characterization}} and {{Implications}} on {{Fairness}}.
\newblock In {\em Proceedings of the 2017 {{ACM Conference}} on {{Economics}}
  and {{Computation}}}, {{EC}} '17, pages 545--562, {New York, NY, USA}. {ACM}.

\bibitem[\protect\astroncite{Amanatidis
  et~al.}{2016}]{amanatidisTruthfulMechanismsMaximin2016a}
Amanatidis, G., Birmpas, G., and Markakis, E. (2016).
\newblock On {{Truthful Mechanisms}} for {{Maximin Share Allocations}}.
\newblock In {\em Proceedings of the {{Twenty-Fifth International Joint
  Conference}} on {{Artificial Intelligence}}}, {{IJCAI}}'16, pages 31--37.
  {AAAI Press}.

\bibitem[\protect\astroncite{Asadpour
  et~al.}{2012}]{asadpourSantaClausMeets2012}
Asadpour, A., Feige, U., and Saberi, A. (2012).
\newblock Santa {{Claus Meets Hypergraph Matchings}}.
\newblock {\em ACM Trans. Algorithms}, 8(3):24:1--24:9.

\bibitem[\protect\astroncite{Aziz
  et~al.}{2022a}]{DBLP:journals/aamas/AzizCIW22}
Aziz, H., Caragiannis, I., Igarashi, A., and Walsh, T. (2022a).
\newblock Fair allocation of indivisible goods and chores.
\newblock {\em Auton. Agents Multi Agent Syst.}, 36(1):3.

\bibitem[\protect\astroncite{Aziz et~al.}{pear}]{freemanBestBothWorlds2020}
Aziz, H., Freeman, R., Shah, N., and Vaish, R. (To appear).
\newblock Best of both worlds: Ex ante and ex post fairness in resource
  allocation.
\newblock {\em Oper. Res.}

\bibitem[\protect\astroncite{Aziz et~al.}{2022b}]{wu-add1}
Aziz, H., Li, B., and Wu, X. (2022b).
\newblock Approximate and strategyproof maximin share allocation of chores with
  ordinal preferences.
\newblock {\em Math. Program.}

\bibitem[\protect\astroncite{Babaioff
  et~al.}{2021}]{DBLP:conf/aaai/BabaioffEF21}
Babaioff, M., Ezra, T., and Feige, U. (2021).
\newblock Fair and truthful mechanisms for dichotomous valuations.
\newblock In {\em Thirty-Fifth {AAAI} Conference on Artificial Intelligence,
  {AAAI} 2021, Thirty-Third Conference on Innovative Applications of Artificial
  Intelligence, {IAAI} 2021, The Eleventh Symposium on Educational Advances in
  Artificial Intelligence, {EAAI} 2021, Virtual Event, February 2-9, 2021},
  pages 5119--5126. {AAAI} Press.

\bibitem[\protect\astroncite{Babaioff
  et~al.}{2022}]{DBLP:conf/wine/BabaioffEF22}
Babaioff, M., Ezra, T., and Feige, U. (2022).
\newblock On best-of-both-worlds fair-share allocations.
\newblock In Hansen, K.~A., Liu, T.~X., and Malekian, A., editors, {\em Web and
  Internet Economics - 18th International Conference, {WINE} 2022, Troy, NY,
  USA, December 12-15, 2022, Proceedings}, volume 13778 of {\em Lecture Notes
  in Computer Science}, pages 237--255. Springer.

\bibitem[\protect\astroncite{Barman et~al.}{2018}]{10.5555/3237383.3237392}
Barman, S., Krishnamurthy, S.~K., and Vaish, R. (2018).
\newblock Greedy algorithms for maximizing nash social welfare.
\newblock In {\em Proceedings of the 17th International Conference on
  Autonomous Agents and MultiAgent Systems}, AAMAS '18, page 7–13, Richland,
  SC. International Foundation for Autonomous Agents and Multiagent Systems.

\bibitem[\protect\astroncite{Barman and Verma}{2022}]{DBLP:conf/aaai/BarmanV22}
Barman, S. and Verma, P. (2022).
\newblock Truthful and fair mechanisms for matroid-rank valuations.
\newblock In {\em Thirty-Sixth {AAAI} Conference on Artificial Intelligence,
  {AAAI} 2022, Thirty-Fourth Conference on Innovative Applications of
  Artificial Intelligence, {IAAI} 2022, The Twelveth Symposium on Educational
  Advances in Artificial Intelligence, {EAAI} 2022 Virtual Event, February 22 -
  March 1, 2022}, pages 4801--4808. {AAAI} Press.

\bibitem[\protect\astroncite{Bez{\'a}kov{\'a} and
  Dani}{2005}]{bezakovaAllocatingIndivisibleGoods2005}
Bez{\'a}kov{\'a}, I. and Dani, V. (2005).
\newblock Allocating {{Indivisible Goods}}.
\newblock {\em SIGecom Exch.}, 5(3):11--18.

\bibitem[\protect\astroncite{Bogomolnaia and
  Moulin}{2001}]{bogomolnaiaNewSolutionRandom2001}
Bogomolnaia, A. and Moulin, H. (2001).
\newblock A {{New Solution}} to the {{Random Assignment Problem}}.
\newblock {\em J. Econ. Theory}, 100(2):295--328.

\bibitem[\protect\astroncite{Brandt et~al.}{2016}]{DBLP:reference/choice/2016}
Brandt, F., Conitzer, V., Endriss, U., Lang, J., and Procaccia, A.~D., editors
  (2016).
\newblock {\em Handbook of Computational Social Choice}.
\newblock Cambridge University Press.

\bibitem[\protect\astroncite{Caragiannis
  et~al.}{2019}]{caragiannisUnreasonableFairnessMaximum2019}
Caragiannis, I., Kurokawa, D., Moulin, H., Procaccia, A.~D., Shah, N., and
  Wang, J. (2019).
\newblock The {{Unreasonable Fairness}} of {{Maximum Nash Welfare}}.
\newblock {\em {ACM} Trans. Econ. and Comput.}, 7(3):1--32.

\bibitem[\protect\astroncite{Chen and Hall}{2021}]{CHEN2021666}
Chen, B. and Hall, N.~G. (2021).
\newblock Incentive schemes for resolving parkinson’s law in project
  management.
\newblock {\em Eur. J. Oper. Res.}, 288(2):666--681.

\bibitem[\protect\astroncite{Etesami}{2022}]{doi:10.1287/opre.2022.2321}
Etesami, S.~R. (2022).
\newblock An optimal control framework for online job scheduling with general
  cost functions.
\newblock {\em Oper. Res.}, 70(5):2674--2701.

\bibitem[\protect\astroncite{Freeman
  et~al.}{2020}]{freemanEquitableAllocationsIndivisible2020b}
Freeman, R., Sikdar, S., Vaish, R., and Xia, L. (2020).
\newblock Equitable {{Allocations}} of {{Indivisible Chores}}.
\newblock In {\em Proceedings of the 19th {{International Conference}} on
  {{Autonomous Agents}} and {{MultiAgent Systems}}}, {{AAMAS}} '20, pages
  384--392, {Richland, SC}. {International Foundation for Autonomous Agents and
  Multiagent Systems}.

\bibitem[\protect\astroncite{Gale}{1987}]{gale1987college}
Gale, D. (1987).
\newblock College course assignments and optimal lotteries.
\newblock {\em University of California at Berkely}.

\bibitem[\protect\astroncite{Halpern
  et~al.}{2020}]{halpernFairDivisionBinary2020}
Halpern, D., Procaccia, A.~D., Psomas, A., and Shah, N. (2020).
\newblock Fair {{Division}} with {{Binary Valuations}}: {{One Rule}} to {{Rule
  Them All}}.
\newblock In Chen, X., Gravin, N., Hoefer, M., and Mehta, R., editors, {\em Web
  and {{Internet Economics}}}, Lecture {{Notes}} in {{Computer Science}}, pages
  370--383, {Cham}. {Springer International Publishing}.

\bibitem[\protect\astroncite{Hylland and
  Zeckhauser}{1979}]{hyllandEfficientAllocationIndividuals1979}
Hylland, A. and Zeckhauser, R. (1979).
\newblock The {{Efficient Allocation}} of {{Individuals}} to {{Positions}}.
\newblock {\em J. Pol. Econ.}, 87(2):293--314.

\bibitem[\protect\astroncite{Ibaraki and Katoh}{1988}]{10.5555/49354}
Ibaraki, T. and Katoh, N. (1988).
\newblock {\em Resource allocation problems: algorithmic approaches}.
\newblock MIT Press, Cambridge, MA, USA.

\bibitem[\protect\astroncite{Igarashi and
  Yokoyama}{2023}]{Igarashi_Yokoyama_2023}
Igarashi, A. and Yokoyama, T. (2023).
\newblock Kajibuntan: A house chore division app.
\newblock {\em Proceedings of the AAAI Conference on Artificial Intelligence},
  37(13):16449--16451.

\bibitem[\protect\astroncite{Karaenke et~al.}{2020}]{KARAENKE2020897}
Karaenke, P., Bichler, M., Merting, S., and Minner, S. (2020).
\newblock Non-monetary coordination mechanisms for time slot allocation in
  warehouse delivery.
\newblock {\em Eur. J. Oper. Res.}, 286(3):897--907.

\bibitem[\protect\astroncite{Kohler and
  Chandrasekaran}{1971}]{kohlerClassSequentialGames1971}
Kohler, D.~A. and Chandrasekaran, R. (1971).
\newblock A {{Class}} of {{Sequential Games}}.
\newblock {\em Oper. Res.}, 19(2):270--277.

\bibitem[\protect\astroncite{Lian et~al.}{2018}]{Lian_Mattei_Noble_Walsh_2018}
Lian, J.~W., Mattei, N., Noble, R., and Walsh, T. (2018).
\newblock The conference paper assignment problem: Using order weighted
  averages to assign indivisible goods.
\newblock {\em Proceedings of the AAAI Conference on Artificial Intelligence},
  32(1).

\bibitem[\protect\astroncite{Markakis and
  Psomas}{2011}]{DBLP:conf/wine/MarkakisP11}
Markakis, E. and Psomas, C. (2011).
\newblock On worst-case allocations in the presence of indivisible goods.
\newblock In Chen, N., Elkind, E., and Koutsoupias, E., editors, {\em Internet
  and Network Economics - 7th International Workshop, {WINE} 2011, Singapore,
  December 11-14, 2011. Proceedings}, volume 7090 of {\em Lecture Notes in
  Computer Science}, pages 278--289. Springer.

\bibitem[\protect\astroncite{Moulin}{2003}]{10.7551/mitpress/2954.001.0001}
Moulin, H. (2003).
\newblock {\em {Fair Division and Collective Welfare}}.
\newblock The MIT Press.

\bibitem[\protect\astroncite{Nisan and Ronen}{2001}]{NISAN2001166}
Nisan, N. and Ronen, A. (2001).
\newblock Algorithmic mechanism design.
\newblock {\em Games Econ. Behav.}, 35(1):166--196.

\bibitem[\protect\astroncite{Norde et~al.}{2016}]{NORDE2016441}
Norde, H., Özen, U., and Slikker, M. (2016).
\newblock Setting the right incentives for global planning and operations.
\newblock {\em Eur. J. Oper. Res.}, 253(2):441--455.

\bibitem[\protect\astroncite{P{\'a}pai}{2001a}]{papaiStrategyproofNonbossyMultiple2001}
P{\'a}pai, S. (2001a).
\newblock Strategyproof and {{Nonbossy Multiple Assignments}}.
\newblock {\em J. Public Econ}, 3(3):257--271.

\bibitem[\protect\astroncite{P{\'a}pai}{2001b}]{papaiStrategyproofSingleUnit2001}
P{\'a}pai, S. (2001b).
\newblock Strategyproof single unit award rules.
\newblock {\em Soc. Choice Welf.}, 18(4):785--798.

\bibitem[\protect\astroncite{Pereira and Ritt}{2023}]{PEREIRA2023387}
Pereira, J. and Ritt, M. (2023).
\newblock Exact and heuristic methods for a workload allocation problem with
  chain precedence constraints.
\newblock {\em Eur. J. Oper. Res.}, 309(1):387--398.

\bibitem[\protect\astroncite{Procaccia and
  Tennenholtz}{2013}]{procacciaApproximateMechanismDesign2013}
Procaccia, A.~D. and Tennenholtz, M. (2013).
\newblock Approximate {{Mechanism Design Without Money}}.
\newblock {\em ACM Trans. Econ. Comput.}, 1(4):18:1--18:26.

\bibitem[\protect\astroncite{Qin et~al.}{2020}]{doi:10.1287/inte.2020.1047}
Qin, Z.~T., Tang, X., Jiao, Y., Zhang, F., Xu, Z., Zhu, H., and Ye, J. (2020).
\newblock Ride-hailing order dispatching at didi via reinforcement learning.
\newblock {\em INFORMS J. Appl. Anal.}, 50(5):272--286.

\bibitem[\protect\astroncite{Zaerpour et~al.}{2017}]{ZAERPOUR2017283}
Zaerpour, F., Bischak, D.~P., and Menezes, M.~B. (2017).
\newblock Coordinated lab-clinics: A tactical assignment problem in healthcare.
\newblock {\em Eur. J. Oper. Res.}, 263(1):283--294.

\bibitem[\protect\astroncite{Zhou}{1990}]{zhouConjectureGaleOnesided1990}
Zhou, L. (1990).
\newblock On a conjecture by gale about one-sided matching problems.
\newblock {\em J. Econ. Theory}, 52(1):123--135.

\end{thebibliography}

\newpage
\renewcommand{\thesection}{A}\setcounter{equation}{0} \renewcommand{\theequation}{A-\arabic{equation}}
\setcounter{page}{1}\renewcommand{\thepage}{A-\arabic{page}}
\section*{Appendix}

\subsection{For Section~\ref{sec::deterministic}}
\label{Appendix::Sec:deterministic}

\subsubsection{Proof of Theorem~\ref{proposition::SEQ-pick}}

Without loss of generality, we assume agents are ordered $1,\ldots,n$ and hence agent 1 is the
first to pick. Every sequential picking mechanism can be characterized by a sequence of number
$t_1, \ldots, t_n $ with $ \sum_{ i \in [n]} t_i = m$. Moreover, such a sequence is predetermined
and is not affected by reported profiles. Fix a sequence $\{ t_i \}_{i = 1} ^ n $. If $ t_1 < m $,
then consider an instance $\cI_1$ where the type of agent 1 is $ v_1 ( e ) = 0 $ for all $e \in E$
and the type of agent $ i \geq 2$ is $ v_i ( e ) = - 1$ for all $e \in E$. Instance $ \cI_1 $
admits an allocation in which all agents have a value zero. However, in the allocation returned by
a sequential picking algorithm with $ t_1 < m $, there exists at least one agent receiving negative
value, and consequently, the returned allocation is not Pareto optimal. Note that Pareto optimality
requires $ t_1 = m $, while any sequential picking algorithm with $ t_1 = m $ cannot output Pareto
optimal allocation for another instance $ \cI_2 $ where the type of agent 1 is $v_1(e) = - 1$ for
all $ e \in E$ and the type of agent $ i \geq 2$ is $ v_i ( e ) = 0 $ for all $ e \in E$.
Therefore, no sequential picking mechanism can always return PO allocations for both $\cI_1$ and $
\cI_2 $.

\subsection{For Section~\ref{Sec::chores}}
\label{Appendix::Sec:chores}

\subsubsection{Proof of Lemma~\ref{Lemma::GSP_chore2}}

Let ${Q}, \bar{{Q}}$ and ${Q}^{\prime},\bar{{Q}} ^ {\prime}$ be the corresponding index sets
constructed in Step~\ref{ALG::GSP-CHORE-STEP-2} of Algorithm \ref{alg::randomchore} with reported
profiles $(\hat{v}_S , \hat{v}_{ - S } )$ and $ (\hat{v}^ {\prime}_S, \hat{v}_{ - S } )$,
respectively. For every $ i \in S $, define $P_i  = \{ e \in E \mid \hat{v}_i ( e ) = 0 \}$ and $ P
^ {\prime}_i  = \{ e \in E \mid \hat{v} ^ {\prime}_i (e) = 0 \} $. Then, for any $ i \in S$ and $ e
\in P_i $, if $ v_i ( e) = 0 $, by the construction of $ \hat{v} ^ {\prime}_i $, we have $ \hat{v}
^ {\prime}_i  ( e ) = 0 $, implying $ e \in P ^ {\prime}_i $. If $ v_i ( e ) \neq 0 $, again from
the definition of $ \hat{v} ^ {\prime}_i $, we have $\hat{v} ^ {\prime}_i ( e ) = \hat{v}_i ( e ) =
0 $, implying $ e \in P ^ {\prime}_i $. Thus, for any $ i \in S$ and $e \in P_i$, item $e \in
P^{\prime}_i$ always holds, i.e., $ P_i \subseteq P_i ^{\prime} $ for all $ i \in S$. Note that
reported valuations of agents $N\setminus S$ are identical in these two profiles, then we can claim
that ${Q} \subseteq {Q} ^ {\prime}$, and equivalently, $\bar{{Q}} ^ {\prime} \subseteq \bar{{Q}}$.

The expected utility of an agent indeed comes from two parts, the assignment of Step
\ref{ALG::GSP-CHORE-SETP-3} and of Step \ref{ALG::GSP_CHORE-STEP-5}. For any $ i \in S$, let $C_i =
\{ e \in E \mid v_ i (e) \neq 0 \}$ be the set of items of which the valuation is non-zero for
agent $i$. If $ C_i \cap P_i \neq \emptyset$ (resp., $ C_i \cap P ^ {\prime}_i \neq \emptyset$) for
some $i$, then agent $i$ would receive a negative expected utility from the assignment of $C_i \cap
P_i $ (resp., $ C_i \cap P ^ {\prime}_i $) under the reported profile $(\hat{v}_S, \hat{v} _{- S})$
(resp., $(\hat{v} ^ {\prime}_S, \hat{v}_{ - S })$). Recall that for any $i\in S$, $ P_i \subseteq P
^ {\prime}_i $ holds, and thus, $ C_i\cap P_i \subseteq C_i \cap P ^ {\prime}_i $. For any $i\in S$
and $ e \in C_i \cap P ^ {\prime}_i $, we have $\hat{v} ^ {\prime}_i ( e ) = 0 $, and moreover, by
the construction of $ \hat{v} ^ {\prime}_i$, it holds that $ \hat{v}_i (e )= 0 $, implying $ e \in
C_i \cap P_i $. Accordingly, $ C_i \cap P^{\prime}_i \subseteq C_i \cap P_i $ holds for all $ i \in
S$, and thus, $ C_i \cap P_i = C_i \cap P ^ {\prime}_i $ .

We now analyze the expected utility of agent $i \in S$ caused by the assignment of Steps
\ref{ALG::GSP-CHORE-SETP-3} and \ref{ALG::GSP_CHORE-STEP-5}. For Step \ref{ALG::GSP-CHORE-SETP-3},
note that for any $e \in C_  i \cap P_i $, it never happens that $ \hat{v}_j ( e ) = 0 $ but $
\hat{v}^{\prime}_j (e) = v(e)$ for some agent $j \in N$. Then, the number of agents reporting zero
on $e$ under $(\hat{v}^{\prime}_S, \hat{v}_ {-S} )$ is at least that under $(\hat{v}_S, \hat{v}_
{-S} )$. Since $\RandChore$ uniformly at random assigns $e$ to an agent whose reported valuation is
zero, the probability of assigning $e$ to agent $i$ under reported profile $(\hat{v}^{\prime}_ S,
\hat{v}_{ - S })$ is no greater than that under $(\hat{v}_ S, \hat{v}_{ - S })$. Hence, the
expected utility of agent $ i $ caused by the assignment of Step~\ref{ALG::GSP-CHORE-SETP-3} does
not decrease when agents $S$ deviate their reporting from $\hat{v}_S$ to $\hat{v}^{\prime}_S$. As
for Step~\ref{ALG::GSP_CHORE-STEP-5}, by Lemma~\ref{Lemma::GSP1}, the expected utility here of
agent $i \in S$ under reported profiles $(\hat{v}^{\prime}_S, \hat{v}_{ - S})$ and $(\hat{v}_S,
\hat{v}_{ - S})$ is equal to $\frac{1}{n} v_i (\cup_{q \in \bar{{Q}} ^ {\prime}} e_q )$ and
$\frac{1}{n} v_i (\cup_{q \in \bar{{Q}} } e_q )$, respectively. Recall that $\bar{{Q}} ^ {\prime}
\subseteq \bar{{Q}}$, we have $\frac{1}{n} v_i (\bigcup_{q \in \bar{{Q}} ^ {\prime}} e_q ) \geq
\frac{1}{n} v_i (\bigcup_{q \in \bar{{Q}} } e_q )$ as chores yield non-positive valuations.
Therefore, for any agent $ i \in S$, the expected utility of his under $(\hat{v}^{\prime}_S,
\hat{v}_ { - S })$ is at least that under $(\hat{v}_S, \hat{v}_ { - S })$.

\subsubsection{Proof of Lemma~\ref{Lemma::GSP_chore3}}

For the sake of a contradiction, assume that there exist valuation functions $\hat{v}_S \neq v_S $
such that
$$
\sum_{i\in S}\mathbb{ E }_{\mathbf{A} \thicksim \widetilde{\mathcal{M}} ^ *
	(\hat{v}_S, \hat{v} _{ -S})}[ u_i (\mathbf{A}, v_i )] > \sum_{i\in S}\mathbb{ E }_{\mathbf{A} \thicksim \widetilde{\mathcal{M}} ^ *
	({v}_S, \hat{v}_{ -S})}[ u_i (\mathbf{A}, v_i )].
$$
Then we consider valuation functions $\hat{v}^{\prime}_S$ constructed as follows; for any $ i \in
S$ and $e \in E$, if $ v_i ( e ) = 0 $, then set $\hat{v}_i ^{\prime} (  e ) = 0 $; otherwise,
$\hat{v}_i ^{\prime} ( e ) = \hat{v}_i (e)$. By Lemma~\ref{Lemma::GSP_chore2}, we know
$\mathbb{E}_{\mathbf{A} \thicksim \widetilde{\mathcal{M}}^*(\hat{v}^{\prime}_S, \hat{v}_{-S})} [u_i
( \mathbf{A}, v_i )] \geq \mathbb{E}_{\mathbf{A} \thicksim \widetilde{\mathcal{M}}^*( \hat{v}_S,
\hat{v}_{-S})} [ u_i ( \mathbf{A}, v_i )] $ for all $ i \in S $. As a consequence, we can without
loss of generality assume that for every agent $ i \in S$, if $ v_i ( e ) = 0 $, then $ \hat{v}_i (
e ) = 0 $. In other words, reported valuation functions $ \hat{v}_S $ only contain one type of
misreporting; that is, some agent $ j \in S $ reports $ \hat{v}_j ( e ) = 0 $ on chore $e$ while
his true valuation is $ v_j ( e ) = v  ( e ) < 0 $.

Let ${Q}, \bar{{Q}}$ and ${Q}^{b},\bar{{Q}} ^ {b}$ be the corresponding index sets constructed in
Step~\ref{ALG::GSP-CHORE-STEP-2} of Algorithm~\ref{alg::randomchore} under reported profiles $({v}
_ S , \hat{v}_{ - S } )$ and $ (\hat{v}_S, \hat{v}_{ - S } )$, respectively. For every $i \in S $,
define $ P_i = \{ e \in E \mid v_i (e) = 0 \}$ and $ P ^ {b}_i = \{ e \in E \mid \hat{v}_i ( e ) =
0 \}$. Based on the aforementioned assumption, for any $i \in S$ and $ e \in P_i $, it holds that $
e \in  P ^ {b}_i $. As reported valuations of agents $N\setminus S$ are consistent, we have $ {Q}
\subseteq {Q} ^ {b}$, equivalent to $ \bar{{Q}} ^ {b } \subseteq \bar{{Q}}$.

If $\bar{{Q}} ^ {b} = \bar{{Q}}$,
then we have
$$
\begin{aligned}
\sum_{ i \in S }\mathbb{E}_{ \mathbf{A} \thicksim \widetilde{\mathcal{M}}
	^* (v_S, \hat{v}_ { - S } ) } [u_i ( \mathbf{A}, v_i  )] =
\sum_{ i \in S }\frac{1}{n} v_i (\bigcup _{ q \in \bar{{Q}}} e_q) &= \sum_{ i \in S }\frac{1}{n} v_i (\bigcup _{ q \in \bar{{Q}}^{b}} e_q) \\
& \geq
\sum_{ i \in S }\mathbb{E}_{ \mathbf{A} \thicksim \widetilde{\mathcal{M}} ^* (\hat{v}_ S, \hat{v}_{ - S } ) } [u_i ( \mathbf{A}, v_i  )],
\end{aligned}
$$
where the first equality transition is because the assignment of $\bigcup_{ q \in Q} e_q $ results
in expected utility zero for every agent $i \in S$ under the reported profile $ (v_S, \hat{v}_{ -
S})$ and the inequality transition is due to that chores allocated in Step
\ref{ALG::GSP-CHORE-SETP-3} yields non-positive expected utility. The above inequality contradicts
the construction of $\hat{v}_S$.

If $\bar{{Q}} ^ b \subsetneq \bar{{Q}}$, as agents $N \setminus S $ consistently report $ \hat{v}_{
- S }$, every item $e_q$ with $q \in \bar{{Q}} \setminus \bar{{Q}}^b$ must be allocated with
probability one to one or a subset of agents in $ S $ under reported profile $(\hat{v}_S , \hat{v}
_ { - S })$. As the set of indices $\bar{Q}$ corresponds to reported profile $(v _S, \hat{v} _{ -
S})$, then for every chore $e_q $ with $q \in \bar{Q}$, it holds that $ v_i ( e_q) = v (e _q)$ for
all $ i \in S$; note that if $ v_i ( e_q) = 0 $, then $q \in Q$. Hence, under $(v_S, \hat{v} _{ -
S})$, the expected utility of agents $S$ caused by assignment of Step \ref{ALG::GSP-CHORE-SETP-3}
is equal to zero, and we have the following:
$$
\begin{aligned}
	\sum_{i\in S}\mathbb{ E }_{\mathbf{A} \thicksim \widetilde{\mathcal{M}}
		^*({v}_S, \hat{v}_{ -S})}[  u_ i ( \mathbf{A}, v_i  )]
	& = \sum_{ i \in S} \frac{1}{n} v_i ( \bigcup_{q \in \bar{{Q}} ^ b} e_q)  +
	\sum_{ i \in S} \frac{1}{n}  v_i ( \bigcup_{ q \in \bar{{Q}} \setminus\bar{{Q}}^b }e_q )\\
	& =  \sum_{ i \in S} \frac{1}{n} v  ( \bigcup_{q \in \bar{{Q}} ^ b } e_q) + \sum_{ i \in S} \frac{1}{n} v  (\bigcup_{ q \in \bar{{Q}} \setminus\bar{{Q}}^b }e_q )\\
	&\geq \sum_{ i \in S} \frac{1}{n} v  ( \bigcup_{q \in \bar{{Q}} ^ b} e_q) 	+
	\sum_{ q \in\bar{{Q}} \setminus\bar{{Q}}^b  } v ( e_q) \\
	& \geq \sum_{i\in S}\mathbb{ E }_{\mathbf{A}
		\thicksim \widetilde{\mathcal{M}} ^ * (\hat{v}_S,\hat{v}_{ -S})}[  u_i ( \mathbf{A}, v_i )],
\end{aligned}
$$
where the second equality transition is due to $v_i ( e_q)  = v(e_q)$ for all $ i \in S$ and $q \in
\bar{Q}$; the first inequality transition is due to $ |S| \leq n  $ and chores yielding
non-positive valuation; the second inequality is owing to the fact that every item $e_q$ with $ q
\in \bar{{Q}} \setminus \bar{{Q}} ^ b $ is allocated with probability one to one or a group of
agents in $S$ under reported profile $ (\hat{v}_S, \hat{v}_{ - S })$. The above inequality leads to
a contradiction, completing the proof.

\subsubsection{Proof of Proposition~\ref{prop::ex-post and ex-ante fairness of M}}

    Let ${Q}, \bar{{Q}}$ be the sets of indices constructed in
	Step~\ref{ALG::GSP-CHORE-STEP-2} under profile $\mathbf{v} = ( v _1,\ldots, v _ n )$ and let
	  $\mathbf{A} = (A_1,\ldots, A _ n )$
	be the returned randomized allocation.
    Suppose $\mathbf{A}^{\prime} = (A^{\prime}_1,\ldots, A^{\prime} _ n ) = (a^{\prime} _{ j , i }) _ { j \in [m], i \in [n]}$ be the corresponding fractional allocation (matrix) implemented by $\mathbf{A}$.
    As the assignment of Step \ref{ALG::GSP-CHORE-SETP-3} yields an expected utility zero for all agents,
    we the focus on the utility driven by the allocation of Step \ref{ALG::GSP_CHORE-STEP-5}.
	Since permutation $\sigma$ is generated uniformly at random,
    for any agent $i$ and any item $e_q$ with $q \in \bar{Q}$,
    the probability of assigning $e_q$ with agent $ i $ is equal to $\frac{1}{n}$,
    and accordingly,
    $a^{\prime} _ {q, i} = \frac{1}{n}$ for all $q \in \bar{Q}$ and $ i \in [n]$.
    Then, agents' valuation in $\mathbf{A} ^ {\prime}$ is
    $ v _ i ( A ^{\prime} _ i) = v _ i ( A ^{\prime} _ j ) = \frac{1}{n} \cdot \sum_{ q \in \bar{{Q}}} v ( e _ q ) = v _ j ( A ^{\prime} _ j  ) $ holds for all $ i ,j \in [n]$
    since $ v _ l ( e _ q ) = v  ( e _ q)$ for all $q \in \bar{Q}$ and all $l \in [n]$.
    Thus,
    fractional allocation $\mathbf{A} ^ {\prime}$ is EF (and hence PROP) and EQ, and therefore, randomized allocation $\mathbf{A}$ is ex-ante EF (and hence PROP) and EQ.

	For the ex-post fairness guarantee, let $\mathbf{A} ^ * = (A^*_1, \ldots, A^*_n)$
	be an arbitrary deterministic allocation in the support of $\mathbf{A}$.
    We first show that $\mathbf{A} ^* $ is EQ1.
    Note that $ v _ i ( e  _q ) = 0 $ holds for all $ i \in [n]$ and $ q \in Q$,
    then the assignment of $\bigcup _ { q \in {Q}} e _ q $ in Step \ref{ALG::GSP-CHORE-SETP-3} does not affect agents' value.
    Thus, when proving EQ1,
    it suffices to consider only the reduced instance with set of items
    $E = \bigcup _ { q \in \bar{{Q}} } e _ q$ and valuation functions
	$ v _ i ( e) = v ( e) < 0 $ for every $ i \in [n]$ and $ e \in E$.
    Let $\sigma ^ * $ be the permutation in Step \ref{ALG::GSP_CHORE-STEP-5} corresponding the deterministic allocation $\mathbf{A}^*$.
    Let the number of items be $ m = kn + d $ with $k ,d \in \mathbb{N}$ and $ 0 < d \leq n$, and hence, the assignment in Step~\ref{ALG::GSP_CHORE-STEP-5} has $k+1$ rounds.
	Given two agents $i, j \in [n]$, without loss of generality,
	we assume $ \sigma^*(i) < \sigma ^ * ( j ) $.
	If agents $i,j$ receive same number of items in $\mathbf{A}^*$,
	then in every single round,
	agent $ i $ receives value no less than that of agent $j$, implying $ v _ i ( A _i ^* ) \geq v _ j ( A _ j ^*)$.
	As for agent $j$, her value in every round $ l \leq k - 1$ is at least the value received by agent $ i $ in round $ l + 1 $.
	So by eliminating the last chore received by agent $j$,
	the value of agent $j$ is at least that of agent $i$. Thus, for this case, allocation $\mathbf{A}^*$ is EQ1.
	For the situation where agent $i$ receives one more item, similarly the value received by agent $ j $ in every round $ l \leq k$ is at least the value received by agent $ i $ in round $ l + 1 $, implying $v _ j ( A ^ * _ j) \geq v _ i ( A ^* _ i  )$, i.e., agent $j$ satisfies the property of EQ1.
	As for agent $ i $, if the last item she receives is removed,
	she would have a value at least that of agent $ j $.
	Therefore, allocation $\mathbf{A} ^ * $ is EQ1.

    We next prove that $\mathbf{A} ^ * $ is also EF1.
    For any pair of agents $ i ,j \in [n]$,
    we have
    $$
     v _ i ( A _ j ^ * ) = v _ i ( A _ j ^* \cap
     \bigcup  _{ q \in \bar{Q}} e _ q ) + v _ i ( A _ j ^* \cap
     \bigcup  _{ q \in {Q}} e _ q  ) = v ( A_j^* \cap
     \bigcup  _{ q \in \bar{Q}} e _ q )
     = v _j ( A_j^* \cap
     \bigcup  _{ q \in \bar{Q}} e _ q ) = v _ j ( A _ j ^*),
    $$
    where the second and the third equality transitions are due to the fact that for any agent $l$ and any $e_q $ with $q \in \bar{Q}$, $v _ l ( e  _q ) = v(e_q) $ holds.
    As $\mathbf{A}^*$ is EQ1,
    the following holds,
    $$ \max _ { e \in A ^* _ i } v _ i ( A_  i\setminus \{ e \}) \geq v _ j ( A  ^* _ j) = v _ i ( A _ j ^*).$$
    Therefore, allocation $\mathbf{A} ^ * $ is also EF1 (and hence PROP1).

\subsubsection{Proof of Proposition \ref{prop::bobw-efficiency}}

Let ${Q}, \bar{{Q}}$ be the sets of indices constructed by Step~\ref{ALG::GSP-CHORE-STEP-2} under
profile $\mathbf{v} = ( v _1,\ldots, v_n )$ and let $\mathbf{A} = (A_1,\ldots, A_n )$ be the
returned randomized allocation, which implements fractional allocation (matrix)
$\mathbf{A}^{\prime} = (A^{\prime}_1,\ldots, A^{\prime}_n ) = (a^{\prime} _{ j , i })_{ j \in [m],
i \in [n]}$.
According to the proof of Proposition \ref{prop::ex-post and ex-ante fairness of M},
for any agent $i$,
his valuation in $\mathbf{A} ^ {\prime}$
is $ v_ i ( A ^ {\prime}_i ) = \frac{1}{n} \sum _{ q \in \bar{Q}} v ( e_q )$, which implies the utilitarian welfare $ \UW(\mathbf{A} ^ {\prime}) = \sum_{ q \in \bar{Q}} v  ( e _q )$ and egalitarian welfare
$\EW(\mathbf{A} ^ {\prime}) = \frac{1}{n} \sum _{ q \in \bar{Q}} v ( e_q )$.
For an arbitrary fractional allocation $\mathbf{B} = ( B_1,\ldots, B_n )$,
it holds that $\UW(\mathbf{B}) \leq \sum_{q \in \bar{Q}} v ( e_q )$
as $ v_i ( e  _q ) = v (e_q)$ for all $i\in [n]$ and all $q \in \bar{Q}$. Moreover, by the pigeonhole principle,
$\EW(\mathbf{B}) \leq \frac{1}{n} \sum_{q \in \bar{Q}} v ( e_q )$ holds.
Therefore, we can claim that
randomized allocation $\mathbf{A}$ is ex-ante UWM and EWM.
By Proposition \ref{prop::implication-relation},
allocation $\mathbf{A}$ is also ex-ante PO and ex-post UWM and PO.

It remains to show the ex-post guarantee regarding EWM. Let $\mathbf{A} ^ * = (A^*_1, \ldots,
A^*_n)$ be an arbitrary deterministic allocation in the support of $\mathbf{A}$ and $\sigma ^ * $
be the permutation in Step \ref{ALG::GSP_CHORE-STEP-5} corresponding $\mathbf{A} ^ *$. Without loss
of generality, assume $\sigma ^ *(i ) = i$ for all $ i \in [n]$. Similar to the proof of
Proposition~\ref{prop::ex-post and ex-ante fairness of M}, we can further assume $E = \bigcup_{ q
\in \bar{{Q}}} e_q $; note that agents indeed have identical valuation functions over items
$\bigcup_{ q \in \bar{{Q}}} e_q $. Let $ | E | = kn + d $ with $k, d \in \mathbb{N}$ and $ 0 < d
\leq n$, then each agent $ i \in [d]$ receives $ k + 1 $ items and each agent $ i \geq d + 1$
receives $k$ items. Moreover, agent $d$ receives the last item in Step~\ref{ALG::GSP_CHORE-STEP-5}.
We now show
   $ v_d ( A ^*_d) \leq v_i (A ^*_  i )$ for all $ i \in [n]$.
For $ i < d $, both agents $d$ and $i$ receive $ k + 1 $ items and in each round,
  the value of agent $i$ is at least the value of agent $d$, which implies $ v_d ( A ^*_d) \leq v_i (A ^*_  i)$.
	For $ i > d $, agent $ i $ receives an item in rounds $1, \ldots, k$ and moreover for any $l\in [k]$, the value of agent $i$ in every round $ l $ is at least the value of agent $ d $ in round $ l + 1 $.
	Therefore, it holds that $ v_d ( A ^*_d) \leq v_i (A ^*_  i)$ for all $ i \in [n]$ and thus
 $\EW(\mathbf{A} ^ *) = v_d ( A^*_ d )$.
	
Denote by $\OPT_E$ and by $\OPT_U$ respectively the optimal utilitarian and egalitarian welfare among all deterministic allocations.
   We then show $ v_d ( A_d ^* \setminus \{ e ^ {\prime}\} ) \geq \OPT_E$ where $ e ^{\prime} $ is the last item received by agent $d$.
For a contradiction, assume $ v_d ( A_d ^* \setminus \{ e ^{\prime}\} ) < \OPT_E$.
By Proposition \ref{prop::ex-post and ex-ante fairness of M},
   allocation $\mathbf{A}^*$ is EQ1, and thus for any $i\in [n]$, $ v_i ( A_i ^*) \leq \max_{ e \in A ^*_d } v_d ( A ^*_d \setminus \{ e \} ) = v_d ( A ^ *_d \setminus \{ e ^{\prime} \} )$ holds where the equality transition is due to the allocation rule of Step \ref{ALG::GSP_CHORE-STEP-5}.
   Accordingly, we have an upper bound of the optimal utilitarian welfare as follows,
$$
\OPT_U = \UW(\mathbf{A}^*) = \sum_{i \in [n]} v_i ( A ^ *_i ) \leq n \cdot v_d ( A_d ^ * \setminus \{ e ^{\prime} \} ) + v_d ( e ^{\prime}) < n \cdot \OPT_E,
$$
where the first equality transition is because $\mathbf{A} ^*$ is ex-post UWM and the last inequality is due to our assumption $ v_d ( A ^ *_d \setminus \{ e ^{\prime} \}) < \OPT_E$ and $ v_d ( e^{\prime}) \leq 0 $.
However, since agents have identical valuations over $\bigcup_{ q \in \bar{Q}} e_q$, it must hold that $\OPT_U \geq n\OPT_E$, contradicting the above inequality.
   Therefore, the claim $ v_d ( A ^*_d \setminus \{ e ^{\prime} \}) \geq \OPT_E$ is proved.
   Additionally, it is not hard to verify $ v_i ( e ^ {\prime}) \geq \OPT_E$ for all $i \in [n]$ as agents have identical valuations over $\bigcup_{ q \in \bar{Q}} e_q $.
   Then, the following holds,
   $$
\EW (\mathbf{A}^*) = v_d ( A ^ *_d  ) = v_d ( A ^ *_d \setminus \{ e ^ {\prime} \} ) + v_d ( e ^ {\prime} ) \geq 2\cdot \OPT_E,
$$
which completes the proof.

\subsubsection{Proof of Proposition \ref{prop::impo-ewm}}

	Let us consider the instance with $n$ agents and a set $E = \{ e_1, \ldots, e_{ (n-1) n + 1}\}$ of $( n -1) n + 1$ chores.
	Agents have identical valuation functions: $v_i ( e_j ) = v ( e_j )$ for all $i,j$,
 and inherent values are $ v ( e_j ) = -1$
	for all $ j \leq (n-1)n $ and $ v ( e_{n(n-1) + 1} ) = - n $.
	For any item $e$, we say $e$ a $\beta$-\emph{chore} if $v(e)  = -\beta$ for $\beta \in \{ 1, n \}$.
	Thus, there are $n(n-1)$ 1-chore
	and one $n$-chore.
	Denote by $\OPT_E$ the optimal egalitarian welfare
	among all deterministic allocations,
 and naturally, $ \OPT_E \leq - n $ as there must be an agent receiving the $n$-chore.
	Consider an allocation $\mathbf{B}$ in which every agent $ i \in [ n - 1 ] $ receives a number $ n $ of 1-chore
	and agent $n$ receives the unique $n$-chore.
	One can verify that $\min_{i \in [n]} v_i (B_i ) = - n $ and thus $ \OPT_E = - n $.

    Let $\sigma ^ * $ with $\sigma ^ * (i ) = i $ for all $ i \in [n]$ be a deterministic permutation in Step \ref{ALG::GSP_CHORE-STEP-5} and
    $\mathbf{A} ^ * = (A ^*_1, \ldots, A^*_n)$
	be the deterministic allocation returned by $\RandChore$ with respect to $\sigma ^ * $.
    Accordingly, $\mathbf{A} ^ * $ is a deterministic allocation in the support of $\widetilde{\cM} ^*(\mathbf{v})$.
    By the allocation rule of Step \ref{ALG::GSP_CHORE-STEP-5},
    for any $2 \leq i \leq n$,
    agent $ i $
    receives a number $n - 1 $ of 1-chore
	and agent 1 receives a number $n - 1 $ of 1-chore and the unique $n$-chore.
	Thus, it holds that $ \EW(\mathbf{A} ^ *) = v _1 (A ^*_1 ) = -2n + 1 $ and therefore,
 the approximation of $\mathbf{A} ^*$ towards the optimal egalitarian welfare is at least $2 - \frac{1}{n}$, approaching $2$ as $ n \rightarrow + \infty$.

\subsubsection{Example in Freeman et al.}

Consider an indivisible chores instance with four agents and eight chores. Agents have 2-restricted
additive valuations as shown in the following table.

 \begin{table}[ht!]
	    \centering
	  \begin{tabular}{c|cccccccc}
	        & $e_1$ & $e_2$ & $ e  _3$ & $ e  _4$ & $ e  _5$ &  $ e  _6$ & $ e  _7$ &$ e  _8$  \\
	         \hline
	         $v_1(\cdot)$ & $-10$ & $-10$& $-10$& $-10$& $-10$& $-10$& $-10$& $-10$ \\
	         $v_2(\cdot)$ & $-10$ & $-10$& $-10$& $-10$& $-10$& $-10$& $-10$& $-10$ \\
           $ v_3 (\cdot)$ & $ -73$ & $ - 1$& $ - 1$& $ - 1$& $ - 1$& $ - 1$& $ - 1$& $ - 1$ \\
           $ v_4 (\cdot)$ & $ -73$ & $ - 1$& $ - 1$& $ - 1$& $ - 1$& $ - 1$& $ - 1$& $ - 1$
	    \end{tabular}
	\end{table}

For a contradiction, these exists an allocation $\mathbf{A}$ that is EQ1, EF1 and PO. Then, one can
verify that both agents 1 and 2 get exactly one chore in $\mathbf{A}$. Then a total of six chores
are assigned between agents 3 and 4. Assume agent 3 gets at least three chores. Note that at least
one of agents 1 and 2 does not receive $e_1$ in $\mathbf{A}$, and therefore, agent 3 violates EF1
when comparing to that agent. For the detailed proof, we refer readers to
\citep{freemanEquitableAllocationsIndivisible2020b}.

\subsection{For Section~\ref{Sec::MixedItems}}

\subsubsection{Proof of Lemma \ref{Lemma::mechanism-two-players-1}}

Denote by $ \{ Q_i \}_{i=0}^{3}$ and $ \{ Q_i ^ {k} \}_{ i = 0 } ^ {3}$, respectively, the
corresponding indices sets constructed in Step \ref{two-players-mechanism-step-2} under reported
profiles $ (\hat{v}_i , \hat{v}_{3-i})$ and $( \hat{v}^{k}_i , \hat{v}_{ 3 - i } )$. Let $\Delta$
be the difference between expected utility of agent $i$ under reported profile $ (\hat{v}_i ,
\hat{v}_{ 3 - i } )$ and that under $ (\hat{v} ^ k_i , \hat{v}_{ 3 - i } )$, i.e.,
$$
\Delta = \mathbb{E}_{ \mathbf{A} \sim \widetilde{\cM}^2(\hat{v}_i, \hat{v} _{ 3 - i })}
 [ u_i ( \mathbf{A}, v_i )] - \mathbb{E}_{ \mathbf{A} \sim \widetilde{\cM}^2
 (\hat{v} ^ {k}_i, \hat{v} _{ 3 - i })} [ u_i ( \mathbf{A}, v_i )].
$$

In what follows, we prove $\Delta \geq 0$ by carefully checking possible combinations of $\hat{v}_i
( e_k )$, $\hat{v} ^k_i ( e_k ) $ and $ \hat{v}_{ 3 - i}$.

		\medskip\noindent\textbf{Case 1}: $ \hat{v}_i ( e_k ) = v_i ( e_k ) = v ( e_k ) $.
		
		{Subcase 1.1}: $ \hat{v}_i ^ {k} (e_k ) = 0 $ and $ \hat{v}  _{ 3 - i }(e_k ) = v ( e_k ) $.
		Under these reported valuations,
		it holds that
 $ Q_0 = Q ^ k_0 $, $ Q_i = Q^ k_i $, $ Q_{ 3 - i } = Q^ k_{3 - i } \setminus \{ e _k \}$ and $ Q_3 = Q^k_3 \cup \{ e _k \}$.
 Then, by Proposition~\ref{prop::EV-RR-MIX},
 we have $\Delta = \frac{1}{2} v_i ( e_k ) = \frac{1}{2} v ( e_k ) > 0 $.

	{Subcase 1.2}: $ \hat{v}_i ^ {k}( e_k ) = 0$ and $ \hat{v}_{ 3- i } (e_k ) = 0 $.
		Under these reported valuations,
  it holds that $Q_0 = Q ^ k_0 \setminus \{ e _k \}$, $Q_i = Q^k_i \cup \{ e_k \}$, $Q_{ 3 - i } = Q^k_{ 3 - i}$ and $ Q_ 3 = Q^k_3$.
  Similarly, by Proposition~\ref{prop::EV-RR-MIX},
  we have $ \Delta = \frac{1}{2} v_i ( e_k ) = \frac{1}{2} v ( e_k ) > 0 $.
		
		{Subcase 1.3}: $ \hat{v}_i ^ {k} ( e_k ) = 0 $ and $ \hat{v}_{ 3 - i } ( e_k ) = - c ( e_k )$.
		Under these reported valuations,
  the set of indices $ Q_t  $ is identical to $Q^k_t$ for all $ t = 0,1,2,3$, and therefore, $\Delta = 0 $.
		
	{Subcase 1.4}: $ \hat{v} ^ {k}_i (e_k )= - c ( e_k ) $ and $ \hat{v}_{ 3 - i } (e_k ) = v ( e_k ) $.
		Under these reported valuations,
  the composition of $\{ Q ^ {k}_t \}_{ t = 0 } ^ 3$ and of $\{Q_t \}_{ t =0}^3$
		are identical to that of subcase 1.1, and accordingly, we also have $\Delta = \frac{1}{2} v ( e_k ) > 0 $.
		
		{Subcase 1.5}: $ \hat{v} ^ {k}_i ( e_k )= - c ( e_k ) $ and $ \hat{v}_{ 3 - i } (e_k ) = 0 $.
  Under these reported valuations,
  it holds that $Q_ 0 = Q^k_0$, $ Q_i = Q^ k_i \cup \{ e_k \}$, $ Q_{3 - i } = Q^k_{3 - i} \setminus \{ e_k \}$
  and $Q_3 = Q^k_3$.
  Again, by Proposition~\ref{prop::EV-RR-MIX},
  we have $\Delta = v_i ( e_k ) = v ( e_k ) > 0 $.
		
		{Subcase 1.6}: $ \hat{v} ^ {k}_i ( e_k )= - c ( e_k ) $ and $ \hat{v}_{ 3 - i } (e_k ) = - c ( e_k ) $.
  Under these reported valuations,
it holds that $ Q_ 0 = Q^k_0 $, $Q_i = Q^k_i \cup \{  e_k\}$, $Q_{3-i} = Q^k_{3 - i }$ and $Q_3 =
Q_3 ^ k \setminus \{ e_k \}$. According to Proposition~\ref{prop::EV-RR-MIX}, we have $\Delta =
\frac{1}{2} v_i ( e_k ) = \frac{1}{2} v ( e_k ) > 0$.
\medskip

	\noindent\textbf{Case 2}: $ \hat{v}_i ( e_k ) = v_i ( e_k ) = 0$.
 Note that reported valuations $\hat{v}_i$ and $\hat{v} ^ k_i $ only differ on $e_k $.
 According to $\RandMixed$, deviating from $(\hat{v}_i, \hat{v}_{3-i})$ to $(\hat{v} ^k_i, \hat{v}_{3-i})$ only affects indices sets $ Q_t $'s and $ Q^k_t $'s
 on whether includes $e_k$ or not.
As the utility of $e_k$ of agent $i$ is zero, one can verify that for Case 2, $\Delta = 0 $ holds.

\medskip

\noindent\textbf{Case 3}: $ \hat{v}_i ( e_k ) = v_i (e_k ) = - c ( e_k ) $.
		
		{Subcase 3.1}: $ \hat{v}_ i ^ { k } ( e_k ) = v ( e_k ) $ and $ \hat{v}_{ 3 - i } (
e_k) = v ( e_ k ) $. Under these reported valuations, it holds that $Q_0 = Q ^ k_0 $, $Q_i = Q^k_i
$, $Q_{3 - i } = Q^k_{3 - i } \cup \{ e_k \} $ and $Q_3 = Q ^ k_3\setminus \{ e_k \} $. By
Proposition~\ref{prop::EV-RR-MIX}, we have $\Delta = -\frac{1}{2} v_i ( e_k ) = \frac{1}{2} c ( e_k
) > 0 $.

		{Subcase 3.2}: $ \hat{v} ^ {k}_i ( e_k ) = v ( e_k ) $ and $ \hat{v}_{ 3 - i } ( e_k ) = 0 $.
  Under these reported valuations,
  it holds that $Q_0 = Q^ k_0 $, $ Q_i = Q^k_i \setminus \{ e _k \}$, $Q_{3 - i} = Q^ k_i \cup \{ e _k \}$ and $Q_3 = Q^k_3$.
  Then according to Proposition~\ref{prop::EV-RR-MIX},
  we have $\Delta = - v_i ( e_k ) = c(e_k) > 0$.

	{Subcase 3.3}: $ \hat{v}_i ^ {k} ( e_k ) = v (e_k ) $ and $ \hat{v}_{ 3 - i } (e_k ) = -c ( e_k) $.
 Under these reported valuations, it holds that
 $Q_0 = Q^k_0 $, $ Q_i = Q^k_i \setminus \{ e_k \}$, $ Q_{3 - i } = Q^k_{ 3 - i }$ and $Q_3 = Q^k_3 \cup \{ e_ k \}$.
Similarly, by Proposition \ref{prop::EV-RR-MIX}, we have $\Delta = -\frac{1}{2} v_i ( e_ k) =
\frac{1}{2}c ( e_k ) > 0 $.

		{Subcase 3.4}: $
  \hat{v}_ i ^ {k}( e_k ) = 0$ and $ \hat{v}_{3 - i }( e_k ) = v ( e_k ) $.
		Under these reported valuations, we have $ Q^ {k}_t = Q_t $ for all $t=0,1,2,3$, and accordingly, $\Delta = 0 $ holds.

		{Subcase 3.5}: $ \hat{v}_i ^ {k}( e_k ) =  0 $ and $ \hat{v}_{ 3 - i }( e_k ) = 0 $.
Under these reported valuations, it holds that $Q_0 = Q^k_0 \setminus \{ e_k \}$, $ Q_i = Q^k_i $,
$ Q_{3 - i } = Q^k_{3 - i } \cup \{ e_k  \}$ and $Q_3 = Q^k_3$. According to Proposition
\ref{prop::EV-RR-MIX}, we have $\Delta = -\frac{1}{2} v_i ( e_k ) = \frac{1}{2} c ( e_k ) > 0 $.

		{Subcase 3.6}:  $ \hat{v}_i ^ {k}( e_k ) =  0 $ and $ \hat{v}_{ 3 - i }( e_k ) = - c (
e_k )$. Under these reported valuations, it holds that $ Q_0 = Q^k_0$, $ Q_i = Q^k_ i\setminus \{
e_k \}$, $Q_{3 - i } = Q^k_{ 3 - i}$ and $Q_3 = Q ^k_3 \cup \{ e_k \}$. Again, by Proposition
\ref{prop::EV-RR-MIX}, we have $\Delta = -\frac{1}{2} v_i ( e_k ) = \frac{1}{2} c ( e_k ) > 0 $.

\subsubsection{Proof of Proposition \ref{prop::two-players-fairness}}

Denote by $\{ Q_t \} _ {t = 0 } ^ 3$ the set of indices constructed in Step
\ref{two-players-mechanism-step-2} under profile $\mathbf{v} = (v_1, v_2)$ and by $\mathbf{A} =
(A_1,A_2)$ the returned randomized allocation. Suppose $\mathbf{A}^{\prime} = ( A^{\prime} _1,
A^{\prime}_2 )$ be the corresponding fractional allocation (matrix) implemented by $\mathbf{A}$.
According to Proposition~\ref{prop::EV-RR-MIX}, for any agent $ i \in [2]$, he receives a value $
v_i ( A^ {\prime}_i ) = \frac{1}{2} v_i ( \bigcup _{q \in Q_0\cup Q_ 3} e_q) + v_i ( \bigcup_{q \in
Q_i} e_q )$ in fractional allocation $\mathbf{A} ^ {\prime}$. Moreover by arguments similar to that
in the proof of Proposition \ref{prop::EV-RR-MIX}, agent $ i $'s value of $A^{\prime}_{ 3-i}$ is
equal to $ v_i (A^{\prime}_{3-i}) = \frac{1}{2} v_i ( \bigcup_{q \in Q_0 \cup Q_ 3} e_q) + v_i (
\bigcup_{ q \in Q_ {3 - i}} e_q )$. Based on the definitions of $Q_i$ and $Q_{3-i}$, for any
$i\in[2]$ and $q \in Q_i$ (resp., $q\in Q_{3-i}$), we have $ v_i(e_q) \geq 0$ (resp., $v_i(e_q)
\leq 0$). Accordingly, it holds that $ v _i ( \bigcup _{ q \in Q_ i } e_q ) \geq v _i ( \bigcup _{
q \in Q_ {3 - i} } e_q )$, implying $ v_ i ( A_i ^ {\prime}) \geq v_i ( A_{3 - i} ^ {\prime})$.
Thus, fractional allocation $\mathbf{A}^{\prime}$ is EF (and hence PROP), and therefore, randomized
allocation $\mathbf{A}$ is ex-ante EF (and hence PROP).

For the ex-post fairness guarantee, let $\mathbf{A}^* = ( A_ 1 ^*, A_2^*)$ be an arbitrary
deterministic allocation in the support of $\mathbf{A}$. Fix an agent $ i $. As $ v _i ( e_q ) = 0
$ for all $ q \in Q_0$, we have $ v_ i ( A^*_ i ) = v_i ( A^*_ i \cap \bigcup_{q \in Q_i} e_q ) + v
_ i ( A^*_ i \cap \bigcup_{q \in Q_3} e_q )$ and $ v_ i ( A^*_ {3-i} ) = v_i ( A^*_{3-i} \cap
\bigcup_{q \in Q_{3-i}} e_q ) + v_i ( A^*_{3-i} \cap \bigcup_{q \in Q_3} e_q )$. By the definitions
of $Q_ i $ and $Q_{ 3 - i }$, we have $ v _i ( A^*_i \cap \bigcup _{ q \in Q_ i } e_q ) \geq v _i (
A^*_{3-i} \cap \bigcup _{ q \in Q_ {3-i} } e_q )$. As a consequence, it suffices to show that the
partial allocation of items $\bigcup_{q \in Q_3} e_q $ satisfies EF1. Then by arguments similar to
that in the proof of Proposition \ref{prop::ex-post and ex-ante fairness of M}, it is not hard to
verify that the partial allocation upon $\bigcup_{q \in Q_3} e_q $ is EF1, and therefore allocation
$\mathbf{A} $ is ex-post EF1 and PROP1.

\subsubsection{Proof of Proposition \ref{prop::two-players-efficiency}}

According to Proposition \ref{prop::implication-relation}, it suffices to prove only ex-ante UWM.
Let $\{ Q_t \}_{t = 0 } ^ 3$ be the set of indices constructed in Step
\ref{two-players-mechanism-step-2} under profile $\mathbf{v} = (v_1, v_2)$. Denote by $\mathbf{A} =
(A_1,A_2)$ the returned randomized allocation and by $\mathbf{A}^{\prime} = (
A^{\prime}_1,A^{\prime}_2)$ the corresponding fractional allocation implemented by $\mathbf{A}$. It
suffices to show that allocation $\mathbf{A}^{\prime}$ achieves the maximum utilitarian welfare
among all fractional allocations. As two agents have identical valuations on items $ \bigcup_{q \in
Q_0 \cup Q_3} e_q$, the utilitarian welfare derived from the assignment of  $ \bigcup_{q \in Q_0
\cup Q_3} e_q$ must be identical among all (fractional) allocations. For every item $e_q$ with $q
\in Q_1\cup Q_2$, it is allocated to the agent with a larger value in allocation
$\mathbf{A}^{\prime}$, and therefore, we can claim that no fractional allocation can achieve a
utilitarian welfare larger than $\UW(\mathbf{A} ^ {\prime})$.

\end{document}